\algrenewcommand\textproc{}
\theoremstyle{definition}
\newtheorem{definition}{Definition}
\newtheorem{example}{Example}
\newtheorem{theorem}{Theorem}
\newtheorem{lemma}[theorem]{Lemma}
\newtheorem{case}{Case}[lemma]
\newcommand{\myqed}{\hfill\blacksquare}
\title{Reversing an Imperative Concurrent Programming Language}
\author{James Hoey, Irek Ulidowski \\ {\small School of Computing and Mathematical Sciences, University of Leicester, Leicester, UK}}
\date{\vspace{-20pt}}
\begin{document}

\lstdefinelanguage{lang}{
    alsodigit = {-},
    alsoletter={\{},
    keywords = {begin,var,proc,end,while,do,if,then,call,else,remove,par},
}

\definecolor{codegray}{rgb}{0.5,0.5,0.5}
\definecolor{backcolour}{rgb}{0.95,0.95,0.92}

\lstdefinestyle{mystyle}{
	language=lang,
	keywordstyle=\color{color:keyword},
    backgroundcolor=\color{backcolour},   
    numberstyle=\small\color{codegray},
    basicstyle=\footnotesize,
    breakatwhitespace=false,         
    breaklines=true,                 
    captionpos=b,                    
    keepspaces=true,                 
    numbers=left,                    
    numbersep=5pt,                  
    showspaces=false,                
    showstringspaces=false,
    showtabs=false,                  
    tabsize=2
}
 
\lstset{style=mystyle}

\newtheorem{prop}[theorem]{Proposition}

\maketitle

\begin{abstract}
We introduce a method of reversing the execution of imperative concurrent programs. Given an irreversible program, we describe the process of producing two versions. The first performs forward execution and saves information necessary for reversal. The second uses this saved information to simulate reversal. We propose using identifiers to overcome challenges of reversing concurrent programs. We prove this reversibility to be correct, showing that the initial program state is restored and that all saved information is used (garbage-free). \\ \textit{Keywords: } Reversible computation, inverse concurrent execution, correct reversal of execution
\end{abstract}

\section{Introduction} \label{sec:intro}
Traditional computation is fundamentally irreversible, starting in an initial state and reaching a final state. Typically the initial store cannot be restored as the computation cannot be performed in reverse due to the loss of information throughout forward execution. Reversible computation, an interesting and advancing field of research, aims to propose solutions to this challenge and allow computation from a final state back to the corresponding initial state.   

Executions of traditional programming languages are irreversible. Widely used languages such as C or Java execute correctly and efficiently in the traditional forward direction, but do not directly support reversal. Consider a simple program that contains only the destructive assignment \code{X = 5}, where \code{X} is initially \code{2}. Forward execution of this program overwrites the value of the variable, discarding the previous value \code{2}  (typically this lost information is dissipated as heat~\cite{RL1961}). Now imagine we wish to reverse this execution (a possible reason being we optimistically performed this execution under assumptions that are no longer true). To correctly restore to the previous state, the variable must hold its previous value. With no way of re-calculating this old value, reversal of this execution is not possible. Had the assignment been \code{X += 5}, a constructive update that increments the value of a variable, the previous value can easily be re-calculated via a decrement of \code{5}. Therefore constructive assignments are reversible provided the expression evaluates during reversal to the same value.

There are a number of motivations for continued research into reversible computation. Traditional execution loses information as energy, as shown by the Landauer principle that states that erasing 1 bit of information costs at least $kT\ln2$ in energy (where $K$ is the Boltzmann constant and $T$ is the temperature of the heat sink)~\cite{RL1961}. With reversibility requiring no information to be lost, there is the potential for more energy efficient computation. In 2018, Frank corrected some common misconceptions regarding this principle \cite{MF2018}. Reversible computation is also desirable due to the potential applications it supports. This includes to debugging, where reversing an execution experiencing a bug can avoid the difficultly of reproducing a bug in a concurrent, or multi-threaded, program \cite{IL2018,JH2019,JE2012}, and optimistic Parallel Discrete Event Simulation (PDES), where events can be optimistically executed as soon as they are possible and rolled back, or reversed, if this is later found to be incorrect \cite{MS2018,GV2011,DC2016,DC2017}. Additional detail is given in Section~\ref{sec:related-work}. More recent work has discussed the link between reversibility and robotics~\cite{IL2021}, and reversible circuits and logic gates~\cite{ADV2010}. Important background and further information regarding all applications of reversible computation can be found through the European Cooperation in Science and Technology (COST) project IC1405 Reversible Computation - Extending Horizons of Computing~\cite{costbook}.  

Previous research into reversible execution is broadly focused on two key areas; introducing reversible programming languages or semantics, and adding reversibility to irreversible programming languages or semantics. Reversible programming languages \cite{CL1986,TH2017A} ensure that any valid program written is executable in both the forward and reverse direction. This elegant solution to reversibility is described in Section~\ref{sec:related-work}, where several examples of such languages are given. As we will describe later, some of the components of reversible languages (such as post-conditions for reverse control flow) are difficult to automatically generate. This means work is required to transform an irreversible program into an equivalent reversible version. The second category aims to avoid this challenge by adding reversible capabilities directly into an irreversible program. This removes the need for conversion of traditional programs into reversible equivalents, and instead makes each statement reversible by saving lost information needed for reversal. Section~\ref{sec:related-work} details examples of this approach, including checkpointing, which typically requires the program state to be saved regularly and then used to restore to a previous position, and source-to-source translation~\cite{KP2014}, which takes an original program and produces a version that is now capable of saving all information needed for reverse execution. 

One example of the source-to-source translation approach is the Reverse C Compiler (RCC for short)~\cite{KP2014}, a tool capable of taking a (sequential) C program and modifying this with state saving statements. To the best of our knowledge, RCC has not been released publicly. More detail is given in Section~\ref{sec:related-work}. We follow closely to this, but focus on supporting a language containing interleaving parallelism (concurrency). Our approach is also semantics-driven, we do not insert extra statements into the code of the original program, but define operational semantics that perform both the expected forward execution, and the necessary saving of lost information. A second operational semantics is defined for reverse execution, where this saved information is used. A major motivation of this work is to produce an approach proved formally to be correct, showing that reverse execution does indeed produce exactly the initial program state.  

\begin{figure}[t]
\begin{align*}
& \text{Original execution} & (\code{P},\square)  &\rightarrow (\code{skip},\square') \\
& \text{Annotated execution} & (\code{P$^a$},\square,\delta) &\rightarrow (\code{skip},\square', \delta') \\
& \text{Inverted execution} & (\code{P$^r$},\square', \delta') &\rtran{} (\code{skip},\square'',\delta)
\end{align*}
\vspace{-.4cm}
\caption{An original, annotated and inverted execution. The program states $\square$ (the original state) and $\square''$ (the state produced after inverted execution) are equivalent.}
\label{fig:exs}
\end{figure}

In this paper we propose a source to source translator approach, outlined in Figure~\ref{fig:exs}, to reversing imperative concurrent programs. Consider a program \code{P}, a program state $\square$ and the original execution shown in Figure~\ref{fig:exs}. We produce a modified version \code{P$^{a}$}, named the \emph{annotated version}, that performs forward execution and saves \emph{reversal information} that would otherwise have been lost. Let $\delta$ be the initially empty store used to record this additional information. Execution of this annotated program produces the same final state $\square'$ as traditional execution, and some reversal information $\delta'$. We next produce \code{P$^{r}$}, named the \emph{inverted version}. This is a forward-executing program (with inverted statement order) that simulates reversal by undoing each statement, using (and removing) reversal information from $\delta'$ to do so. 
This inverted execution begins in the final program state $\square'$ with the reversal information $\delta'$, and produces a final program state  $\square''$. Later we prove that this reversal is correct meaning that $\square''$ is equivalent to $\square$. All reversal information is used, producing the original $\delta$ and thus showing that our reversal is garbage-free. To ensure we achieve our aim of proving this method to be correct, reversibility is defined using basic structures such as stacks. We note that some of these choices may impact the overall efficiency, with future optimisations offering potential improvements. We also note here that our approach does not currently support floating point numbers. Adding such support is an interesting and challenging area of potential future work. Other examples include adding support for pointers, user defined data types and division. 

Our main contribution here is to introduce a method of reversing the execution of imperative concurrent programs. A common issue with much of the current literature is that such approaches are not formally proved correct. We have addressed this via a detailed proof of correctness, confirming that our approach reverses concurrent program execution as intended. We expand on our work published previously \cite{JHphd} by introducing support for arrays. Our first data structure increases the complexity of the language supported, and allows for more interesting programs to be reversed. This required the definition of operational semantics for array operations, including the saving and using of necessary information lost as these execute. The correctness of our method of reversibility has also been maintained and now includes the new constructs introduced to support arrays.   

\section{Related Work} \label{sec:related-work}
Perumalla describes important background research into reversible computation~\cite{KP2014}, including multiple potential paradigms of reversal. Some examples include \emph{Forward-Reverse-Commit}, where programs are reversibly executed until a point at which reversibility is no longer required (and the execution is committed), \emph{Compute-Copy-Uncompute}, where a modified forward execution is performed and the result stored, before the execution is reversed. 

Bennett stated in 1973 that a program execution can be reversed by saving a history of the information lost as it is executed forwards and using this to undo the program execution~\cite{CB1973}. There are two types of reversibility, namely \emph{physical reversibility} and \emph{logical reversibility}. Physical reversibility focuses on producing executions that do not lose information, including reversible hardware such as circuits~\cite{costbook}. We focus on logical reversibility, that is the ability to reproduce a previous execution state from some later position.

Reversible programming languages is one way of implementing this. Unlike traditional languages, any valid program written in a reversible language is executable both forwards and backwards, without the need to save any other information. One example of a reversible imperative programming language is Janus~\cite{CL1986,TY2007}. Only reversible statements are supported in Janus. For example, destructive assignments (those that overwrite a value of a variable) are not valid, with only constructive assignments (an increment or decrement, each of which have an obvious inverse) supported. Conditional statements and while loops each contain a post-condition, a boolean expression that is used to determine reverse control flow. 
Other examples of reversible imperative programming languages include R-WHILE~\cite{RG2016}, R-CORE~\cite{RG2017} and R~\cite{MPF1999}. 

Several object-oriented reversible programming languages have also been proposed, such as Joule proposed by Schultz \cite{UPS2016,UPS2018}. Haulund et al introduced ROOPL~\cite{TH2017A,TH2018}. More recently, a web interface for ROOPL was introduced~\cite{LHS2021} as part of working towards a unified language architecture. 

Reversible functional programming languages include that defined by Yokoyama et al~\cite{TY2011}, and Rfun from Thomsen et al~\cite{MKT2015}. CoreFun is a typed functional language from Jacobsen et al~\cite{PAHJ2018}. Lanese et al presented an approach for reversible execution of Erlang, a functional and concurrent programming language, by defining reversible semantics~\cite{IL2018B}. Alongside working on functional languages, Mogensen has also discussed a reversible version of the Static Single Assignment, named RSSA~\cite{TM2015}, and reversible functional array programming~\cite{TM2021}. 

An alternative implementation of logical reversibility is to add reversibility to irreversible languages. An obvious approach is named \emph{full checkpointing}~\cite{KP2014}, where the entire state is saved after each step. Previous execution positions can be restored simply by reverting to the saved state. This faces multiple challenges including the time and memory required for saving. Reduced checkpointing approaches, such as \emph{periodic} or \emph{incremental}, decrease the frequency or size of each checkpoint, but typically require forward re-execution to reach the majority of states~\cite{KP2014}. Such re-execution takes time (depending on the frequency of checkpoints) and suffers possible concurrency problems. 

The Reverse C Compiler (RCC) is a source-to-source translator for a subset of the programming language C~\cite{KP2014,CC1999}. Given an original program, RCC produces two versions of it. The first version contains inserted `SAVE' commands to record lost information. The assignment statement \code{X = 2} becomes \code{SAVE(X); X = 2}. The second version, containing all original statements but in inverted order, includes `RESTORE' commands that use saved information to correctly reverse each statement. The  assignment statement \code{X = 2} becomes \code{RESTORE(X)} for reversal. Conditional statements save the result of evaluating the boolean condition, whereas while loops use a loop counter. While supporting other statement types, there is no support for parallel (or concurrent) programs.

Backstroke is a framework developed to support automatic generation of reverse code for functions written in C++~\cite{GV2011}. Schordan et al continue this work of applying this to PDES~\cite{MS2018,DC2017MS}, as have others~\cite{YT2006}. Using the Execute-Reverse-Commit paradigm, Backstroke takes an original code fragment, an event E, and produces three versions. Backstroke produces E$_{forward}$  (that performs forward execution and saves necessary information), E$_{reverse}$ (that uses  this saved information to reverse all effects), and E$_{commit}$ (that performs actions that are irreversible or that should not be undone). Optimistic PDES~\cite{RF1999} can then be performed, with any events executed in the wrong order corrected via reversal~\cite{CC1999,DC2016,DC2017}. 

Schordan et al describe an approach using transformation based on recording pairs of the form (address,value) \cite{MS2015,MS2018}. Code is instrumented in such a way that these pairs are recorded, while no information regarding the executed program path (or interleaving) is required. This approach of recording only data on the heap is sufficient for implementing reversibility at a function-level, a granularity that suits its application to PDES. The work in \cite{GV2011} is slightly different to this and instead generates reverse code (a similar notion to that presented in this work). 

Reversible process calculi is the subject of previous research. Danos and Krivine proposed a reversible version of the Calculus of Communicating Systems (CCS), named RCCS~\cite{VD2004}. Processes are extended to contain memories that help identify processes used within actions. Phillips and Ulidowski introduced an alternative approach to reversing CCS, named CCS with Communication Keys (CCSK)~\cite{IP2007}. All actions are annotated with an identifier (communication key), with the two parties of a communication using the same identifier. Both RCCS and CCSK allows actions to be reversed in any order that respects all causal dependencies. This is named \emph{causal-consistent reversibility}, and differs to \emph{backtracking reversibility}~\cite{backtracking}, where actions (or steps) must be reversed in exactly the inverted forward order. Properties of reversible systems are described~\cite{IL2020}, as is out of causal order reversibility~\cite{IP2012}. Medi\'c et al have compared the two previously mentioned reversible versions of CCS~\cite{DM2020,DM2021}.

Debugging, the process of finding and fixing the underlying cause of a software defect, is a possible application of reversible computation~\cite{JE2012,JH2020,AZ2009}. \emph{Cyclic debugging} repeatedly executes a program experiencing a bug with small code changes each time. This suits sequential programs, where re-execution is guaranteed to produce the same behaviour. Two executions of a concurrent program can behave differently, with a bug potentially only occurring under a specific interleaving (named a `Heisenbug'). Whether a fix has worked is therefore difficult to judge. Engblom reviews reversible debuggers~\cite{JE2012}, with an example of a commercially available reversible debugger being UndoDB~\cite{UndoWhitePaper,Undo}.

The previous debuggers follow backtracking order. Giachino et al introduced the causal-consistent debugger CaReDeb for the language $\mu$Oz~\cite{caredeb}. Lanese et al. defined a causal-consistent reversible debugger for Erlang named Cauder~\cite{cauder,IL2018} with support for distributed programs~\cite{GF2021}. 

\section{Imperative Concurrent Programming Language and Program State} \label{sec:lang-state}
We begin by introducing our imperative concurrent programming language, and the necessary program state. 

\subsection{Syntax}
The syntax of our programming language is shown in Figure~\ref{fig:syntax}. The constructs supported include constructive assignments (increment or decrement of variable values), destructive assignments (overwriting of variable values), conditional branching, while loops, blocks, declaration and removal of local variables or arrays, and potentially recursive procedures. Let $\setOf{P}$ be the set of programs ranged over using \code{P}, \code{Q}, etc, and $\setOf{S}$ be the set of statements ranged over using \code{S}, \code{T}, etc. Let $\setOf{X}$ be the set of variables ranged over by \code{X}, \code{Y}, etc, and $\setOf{Z}$ be the set of integers such that \code{n} $\in \setOf{Z}$. The set of arithmetic expressions $\setOf{E}$ is such that $\code{E} \in \setOf{E}$, while the set of boolean expressions $\setOf{B}$ is such that $\code{B} \in \setOf{B}$.
\begin{figure}[t]
{\footnotesize \begin{align*}
\texttt{P} &::= \texttt{$\varepsilon$} ~|~ \texttt{S} ~|~ \texttt{P};\texttt{P} ~|~ \texttt{P par P} \\
\texttt{S} &::= \texttt{skip} ~|~  \texttt{X = E pa} ~|~ \texttt{name[E] = E pa} ~|~ \texttt{X += E pa} ~|~ \texttt{X -= E pa} ~|~ \texttt{name[E] += E pa} ~| \\& \phantom{\texttt{::==}} \texttt{name[E] -= E pa} ~|~ \texttt{if In B then P else Q end pa}  ~|~ \texttt{while Wn B do P end pa} ~| \\& \phantom{\texttt{::==}} \texttt{begin Bn BB end}  ~|~ \texttt{call Cn n pa} ~|~   \texttt{runc Cn P end} \\
\texttt{E}   &::=   \texttt{X} ~|~ \texttt{n} ~|~ \texttt{Arr[n]} ~|~ \texttt{(E)} ~|~ \texttt{E Op E} \qquad
\texttt{B}  ::= \texttt{T} ~|~ \texttt{F} ~|~ \lnot\texttt{B}~|~ \texttt{(B)} ~|~ \texttt{E == E} ~|~ \texttt{E > E} ~|~  \texttt{B $\land$ B}  \\
\texttt{BB} &::= \texttt{DV};\texttt{DA};\texttt{DP};\texttt{P};\texttt{RP};\texttt{RA};\texttt{RV} \\
\texttt{DV} &::= \texttt{$\varepsilon$} ~|~ \texttt{var X = E pa};\texttt{DV} \qquad  
\texttt{RV} ::= \texttt{$\varepsilon$} ~|~ \texttt{remove X = E pa};\texttt{RV} \\
\texttt{DP} &::= \texttt{$\varepsilon$} ~|~ \texttt{proc Pn name is P end pa};\texttt{DP} \qquad
\texttt{RP} ::= \texttt{$\varepsilon$} ~|~ \texttt{remove Pn name is P end pa};\texttt{RP} \\
\texttt{DA} &::= \texttt{$\varepsilon$} ~|~ \texttt{arr[n] name pa};\texttt{DA}  \qquad 
\texttt{RA} ::= \texttt{$\varepsilon$} ~|~ \texttt{remove arr[n] name pa};\texttt{RA}
\end{align*} }
\vspace{-.4cm}
\caption{Imperative concurrent programming language syntax, where \code{Op} = \{+, -, +=, -=\}.}
\label{fig:syntax}
\end{figure}
Our language syntax follows closely to other while languages, including that of \huttel~\cite{HH2010}. We do however introduce several notions not seen in common while languages. Firstly, we support concurrent execution through our parallel composition operator, and represent two programs \code{P} and \code{Q} in parallel as \code{P par Q}. We note here we also write this as \code{par \{P\} \{Q\}} in several examples for presentation purposes. Support for arrays is also first introduced here and was not part of our previous work~\cite{JHphd}. An array name acts as a pointer to the beginning of a block of memory for array elements. This increases the complexity of the language and allows for more interesting example programs. Secondly, constructs are uniquely named to aid identification of a specific construct during reversal (explained later). This is achieved using elements of the set of \emph{construct identifiers} $\setOf{C}$. Construct identifiers \code{In}, \code{Wn}, \code{Bn}, \code{Pn} and \code{Cn} ($\in \setOf{C}$) are used to name conditionals, loops, blocks, procedures and procedure calls respectively. Thirdly, the majority of statements have a \emph{statement path} associated to them that is used to determine the scope of local variables. For correct evaluation of a local variable, we must first determine in which block this local variable was declared. Given a statement, its statement path is the ordered sequence of all (including nested) block names in which it exists. The set of statement paths \code{Pa} is such that $\code{pa} \in \code{Pa}$. Statements that do not require any evaluation, such as a block statement, do not require a statement path. Fourthly, we include \emph{removal statements} and the \code{runc} construct. For reversal purposes and to avoid garbage, any local declarations within a block are also removed at the end of that block. For example, a local variable is declared and exists for the execution of that block body, before being removed prior to the closure of the block. The \code{runc} construct exists to distinguish between the procedure call that has not yet been executed, and the execution of a procedure body. 

Finally, we make several key assumptions about programs written in our syntax. Local variables, arrays and procedures are only declared within blocks, those in the same block are named uniquely, and local declarations are removed in exactly the inverted order. The \code{runc} construct is not allowed to appear in an original program, and is instead only introduced as a result of executing a procedure call.   

\subsection{Program State} \label{sec:prog-state}
We now introduce the program state. As in \cite{HH2010}, the program state contains a \emph{variable environment} and a \emph{data store}. 

\vspace{.2cm}
\noindent \textbf{Variable Environment} {\boldmath$\gamma$}. Each version of a variable ($\code{X} \in \setOf{V}$) and parent block name ($\code{Bn}$) is mapped to a unique memory location ($\code{l} \in \setOf{L}$). The notation $\gamma[\code{(X,Bn)} \Rightarrow \code{l}] $ maps \code{X} declared within \code{Bn} to \code{l}, while $\gamma[\code{(X,Bn)}]$ removes this. All global variables are assumed to exist each with a distinct memory location.

\vspace{.2cm}
\noindent \textbf{Data Store} {\boldmath$\sigma$}. Each memory location $\code{l}$ is mapped to the value $\code{v}$ it currently holds. The notation $\sigma[\code{l} \mapsto \code{v}]$ updates $\sigma$ such that \code{l} now holds \code{v}, while $\sigma(\code{l})$ returns the current integer value at memory location \code{l}. Expression $\sigma[\{\code{l1},\cdots,\code{ll+n}\} \mapsto \code{v}]$ represents multiple modifications to memory locations performed in a single step. Each memory location in the range \code{l1} up to \code{l1}+\code{n} is updated to hold some value~\code{v}. All required memory locations exist initially and contain \code{0}. The function $\func{nextLoc}{\sigma}$ returns the next available memory location, while $\func{nextLocBlock}{\sigma,\code{n}}$ reserves a consecutive block of \code{n} locations, returning the first.

We further introduce the \emph{while environment} and the \emph{procedure environment}, both of which are initially empty and help with reversal.

\vspace{.2cm}
\noindent \textbf{While Environment} {\boldmath$\beta$}. The unique construct identifier $\code{Wn}$ of each currently active while loop is mapped to a copy of that loop. The notation $\beta$[\code{Wn} $\Rightarrow$ \code{P}] inserts a mapping into $\beta$ between the loop uniquely named \code{Wn} and the copy \code{P} of the loop body, while $\beta$[\code{Wn}] removes this.

\vspace{.2cm}
\noindent  \textbf{Procedure Environment} {\boldmath$\mu$}. A construct identifier $\code{Pn}$ of each procedure is mapped to a copy of the procedure body. Expression $\mu[\code{Pn} \Rightarrow \code{(n,P)}]$ maps the identifier \code{Pn} to the procedure name \code{n} and copy of the body \code{P}, while $\mu[\code{Pn}]$ removes this. 

The program state consists of four environments that we have introduced so far. We use $\square$ to represent the tuple ($\gamma$,$\sigma$,$\beta$,$\mu$). By abuse of notation, we use $\square$ to represent any subset of this tuple in the following semantics. 

Given our syntax and program state, we can now present the operational semantics of traditional (and irreversible) execution, following closely to that of \huttel~\cite{HH2010}. To save space we do not show this here (though it can be seen in \cite{JHphd}), as there is similarity with semantics shown in Section~\ref{sec:seman-ann}. Ignoring any use of $\delta$ and stacks \code{A} (both of which are introduced later) in the semantics in Section~\ref{sec:seman-ann} gives the traditional semantics.

\subsection{Odd-Even Transposition Sort} \label{sec:sort-orig-sec}
We shall use a program that implements a sorting algorithm throughout the rest of the paper to explain and illustrate our method for reversing execution. More specifically, the program is an implementation of an Odd-even Transposition Sort, a concurrent version of a bubble sort 
which was originally introduced by Habermann \cite{NH1972}. It is shown in Figure~\ref{fig:odd-even-sort-trad}, with statement paths omitted as these can be read from the code. For example, the assignment on line~14 has the path \code{b2.0;b1:0} since this statement is within the block \code{b2.0}, which is within block \code{b1.0}. Block names will be modified for each iteration as part of the process to keep construct identifiers unique (meaning paths also change). For example, block \code{b2.0} becomes \code{b2.1} during the second iteration. We omit empty false branches of conditional statements.

Each pass of the algorithm contains an even and an odd phase. During even phases (lines 8--27), each element at an even index is compared with their neighbour and potentially switched. Each comparison is performed in parallel and can be interleaved in any order. During odd phases (lines 28--47), each element at an odd index is compared and potentially swapped with its neighbour again in parallel. As for a bubble sort, the worst case scenario requires $n$ complete loop iterations for an array of length $n$. After all required loop iterations, the array will be sorted into ascending order. Example~\ref{sec:sort-orig} describes the traditional (irreversible) execution of our program.

\begin{figure}[t!]
{\tiny \begin{lstlisting}[xleftmargin=2.0ex,mathescape=true,multicols=2,basicstyle=\scriptsize]
$\kw{begin b1.0}$
 $\kw{arr[5]}~ \code{l;}$
 $\code{l[0] = 7; l[1] = 3; l[2] = 4;}$
 $\code{l[3] = 1; l[4] = 6;}$
 $\code{count = 0;}$ 
 
 $\kw{while w1.0}~ \code{(count1 < 4)}~\kw{do}$
  ${\color{green} \code{//even phase}}$    
  $\kw{par \{}$
   $\kw{if i2.0}~ \code{(l[0] > l[1])}~\kw{then}$
    $\kw{begin b2.0}$
     $\kw{var}~ \code{temp = 0;}$
     $\code{temp = l[0]; l[0] = l[1];}$ 
     $\code{l[1] = temp;}$
     $\kw{remove}~ \code{temp = 0}$
    $\kw{end}$
   $\kw{end}$
  $\kw{\}}$ $\kw{\{}$
   $\kw{if i3.0}~ \code{(l[2] > l[3])}~\kw{then}$
    $\kw{begin b3.0}$
     $\kw{var}~ \code{temp = 0;}$
     $\code{temp = l[2]; l[2] = l[3];}$ 
     $\code{l[3] = temp;}$
     $\kw{remove}~ \code{temp = 0}$
    $\kw{end}$
   $\kw{end}$ 
  $\kw{\};}$     
  ${\color{green} \code{//odd phase}}$  
  $\kw{par \{}$
   $\kw{if i4.0}~ \code{(l[1] > l[2])}~\kw{then}$
    $\kw{begin b4.0}$
     $\kw{var}~ \code{temp = 0;}$
     $\code{temp = l[1]; l[1] = l[2];}$ 
     $\code{l[2] = temp;}$
     $\kw{remove}~ \code{temp = 0}$
    $\kw{end}$
   $\kw{end}$
  $\kw{\}}$ $\kw{\{}$
   $\kw{if i5.0}~ \code{(l[3] > l[4])}~\kw{then}$
    $\code{begin b5.0}$
     $\kw{var}~ \code{temp = 0}$
      $\code{temp = l[3]; l[3] = l[4];}$ 
      $\code{l[4] = temp;}$
      $\kw{remove}~ \code{temp = 0}$
     $\kw{end}$
    $\kw{end}$   
  $\kw{\};}$
  $\code{count += 1}$
 $\kw{end;}$
  
 ${\color{green} \code{//array removal}}$ 
 $\kw{remove arr[5]}~ \code{l}$ 
$\kw{end}$
\end{lstlisting} }
\caption{Odd-even transposition sort. Empty else branches and paths are omitted.}\vspace{-.5cm}
\label{fig:odd-even-sort-trad}
\end{figure}

\begin{example} \label{sec:sort-orig}
Let \code{l} be the array \code{[7,3,4,1,6]} as on lines 2--4 of Figure~\ref{fig:odd-even-sort-trad}. This is the array that we wish to sort via the odd-even transposition sort. Assuming ascending order, the first even phase will produce the array \code{[3,7,1,4,6]}, before the first odd phase produces \code{[3,1,7,4,6]}. After all required passes, the resulting array is \code{[1,3,4,6,7]}. We shall describe the annotated forward execution in Example~\ref{sec:sort-ann} and the reverse execution in Example~\ref{sec:sort-inv}. $\myqed$
\end{example}

\section{Annotation} \label{sec:ann}
\begin{figure}[t]
{\small
\begin{align*}
\texttt{AP} &::= \texttt{$\varepsilon$} ~|~ \texttt{AS} ~|~ \texttt{AP};\texttt{AP} ~|~ \texttt{AP par AP} \\
\texttt{AS} &::= \texttt{skip I} ~|~ \texttt{X = E (pa,A)} ~|~ \texttt{name[E] = E (pa,A)} ~|~ \texttt{X += E (pa,A)} ~| \\& \phantom{\texttt{::==}} \texttt{X -= E (pa,A)} ~|~ \texttt{name[E] += E (pa,A)} ~|~ \texttt{name[E] -= E (pa,A)} ~| \\& \phantom{\texttt{::==}} \texttt{if In B then AP else AP end (pa,A)} ~|~  \texttt{while Wn B do AP end (pa,A)} ~| \\& \phantom{\texttt{::==}} \texttt{begin Bn ABB end}  ~|~ \texttt{call Cn n (pa,A)} ~|~ \texttt{runc Cn AP end} \\
\texttt{E}   &::=   \texttt{X} ~|~ \texttt{n} ~|~ \texttt{Arr[n]} ~|~ \texttt{(E)} ~|~ \texttt{E Op E} \qquad
\texttt{B}  ::= \texttt{T} ~|~ \texttt{F} ~|~ \lnot\texttt{B}~|~ \texttt{(B)} ~|~ \texttt{E == E} ~|~ \texttt{E > E} ~|~  \texttt{B $\land$ B}  \\
\texttt{ABB} &::= \texttt{ADV};\texttt{ADA};\texttt{ADP};\texttt{AP};\texttt{ARP};\texttt{ARA};\texttt{ARV} \\
\texttt{ADV} & ::= \texttt{$\varepsilon$} ~|~ \texttt{var X = E (pa,A)};\texttt{ADV}   \qquad
\texttt{ARV} ::= \texttt{$\varepsilon$} ~|~ \texttt{remove X = E (pa,A);ARV} \\
 \texttt{ADP} & ::= \texttt{$\varepsilon$} ~|~ \texttt{proc Pn name is AP end (pa,A)};\texttt{ADP}  \quad 
 \texttt{ARP} ::= \texttt{$\varepsilon$} ~|~ \texttt{remove Pn name is AP end (pa,A)};\texttt{ARP} \\
\texttt{ADA} & ::= \texttt{$\varepsilon$} ~|~ \texttt{arr[n] name (pa,A)};\texttt{ADA} \qquad 
\texttt{ARA} ::= \texttt{$\varepsilon$} ~|~ \texttt{remove arr[n] name (pa,A)};\texttt{ARA}
\end{align*} }
\vspace{-.4cm}
\caption{Annotated programming syntax, where \code{Op} = \{+, -, +=, -=\}.}
\label{fig:syntax-ann}
\end{figure}

Annotation is the process of producing a version of a given program that both performs the expected forward execution and saves reversal information. This reversal information consists of two parts; the interleaving order and unrecoverable information lost during statement execution. To save this we define the semantics of annotated execution. We first introduce the syntax of annotated programs.

\subsection{Annotated Syntax}
We modify an original program \code{P} such that it is capable of recording both lost information and the interleaving order. The resulting program \code{AP}, an element of the set of all annotated programs $\setOf{AP}$, is of the syntax in Figure~\ref{fig:syntax-ann}. Compared to Figure~\ref{fig:syntax}, we use \code{AP} and \code{AS} in place of \code{P} and \code{S} respectively, and introduce stack names \code{A}, which we define in Section~\ref{sec:inter-order}. Now each path \code{pa} has a stack associated to it via the notation $(\code{pa}, \code{A})$. An annotated version of a program is produced via the function \emph{ann}: $\setOf{P} \rightarrow \setOf{AP}$ in Figure~\ref{fig:func-ann}.  
\begin{figure}[t]
{\footnotesize \begin{align*}
&ann(\texttt{$\varepsilon$}) = \texttt{$\varepsilon$} \\
&ann(\texttt{S;P}) = ann(\texttt{S}); ann(\texttt{P}) \\ 
&ann(\texttt{P par Q}) = ann(\texttt{P}) \texttt{ par } ann(\texttt{Q}) \\
&ann(\texttt{skip}) = \texttt{skip I} \\
&ann(\texttt{X = e pa}) = \texttt{X = e (pa,A)} \\
&ann(\texttt{if In b then P else Q end pa}) = \\ &\phantom{==} \texttt{if In b then $ann($P$)$ else $ann($Q$)$ end (pa,A)} \\
&ann(\texttt{while Wn b do P end pa}) = \texttt{while Wn b do $ann($P$)$ end (pa,A)} \\
&ann(\texttt{begin Bn P end}) = \texttt{begin Bn $ann($P$)$ end} \\
&ann(\texttt{var X = E pa}) = \texttt{var X = E (pa,A)} \\
&ann(\texttt{arr[n] name pa}) = \texttt{arr[n] name (pa,A)} \\
&ann(\texttt{proc Pn n is P end pa}) = \texttt{proc Pn n is $ann($P$)$ end (pa,A)} \\
&ann(\texttt{call Cn n pa}) = \texttt{call Cn n (pa,A)} \\
&ann(\texttt{runc Cn P end}) = \texttt{runc Cn AP end A} \\
&ann(\texttt{remove X = E pa}) = \texttt{remove X = E (pa,A)} \\
&ann(\texttt{remove arr[n] name pa}) = \texttt{remove arr[n] name (pa,A)} \\
&ann(\texttt{remove Pn n is P end pa}) = \texttt{remove Pn n is $ann($P$)$ end (pa,A)} 
\end{align*} }
\vspace{-.4cm}
\caption{Partial definition of the Annotation function $ann$, where \code{I} indicates a potential stack \code{A} (as some skip steps, including those produced when closing a block, will not have a stack) and shown in full in Appendix~\ref{full-ann-appen}. All statements omitted follow the instance of \code{X = e pa}, where \code{pa} becomes \code{(pa,A)}.}
\label{fig:func-ann}
\end{figure}

\subsection{Capturing Interleaving Order} \label{sec:inter-order}
The specific execution path of a forward concurrent program is crucial for correct reversal. This is the order in which the statements of a program are executed forwards. Due to the potential for data races, two forward executions with different interleaving orders may produce different program states. Reversing statements in a different order to that of the inverted forward order may lead to incorrect program states (a state not reached during forward execution). Such incorrect reversal is a problem for any possible application of reversible execution. Consider reversible debugging where restoring program states correctly is vital to help detect (and fix) software bugs. The following two scenarios illustrate some negative consequences of incorrect reversal order.

\begin{example} \label{ex:incorrect-reverse-order}
Figure~\subref*{ex:order-assign} contains a race between two assignments to the same variable \code{X}, where \code{X} is initially \code{1}. The left assignment executes first (overwriting \code{1}) and is followed by the right assignment (overwriting \code{3}). Assume reverse execution incorrectly reverses the left assignment first, restoring the value (it held before this assignment) to \code{1}. The right assignment is then reversed restoring \code{X} to \code{3}. This reversal is incorrect as it does not match the initial state. $\myqed$
\end{example}

\begin{example} \label{ex:incorrect-reverse-order-cond}
Consider Figure~\subref*{ex:order-cond}, and an execution where the left conditional executes the true branch (overwriting \code{Z}) and the right conditional executes the false branch (also overwriting \code{Z}). Assume an inverse execution that reverses the incorrect branch of each conditional. Both conditionals now reverse assignments to \code{Y} that were not performed forwards, meaning old values were not saved and will lead to a state not reached during forward execution. $\myqed$
\end{example}

\newsavebox\exOrderAssign
\newsavebox\exOrderCond

\begin{lrbox}{\exOrderAssign}
\begin{minipage}{0.35\textwidth}
\centering
{\small \begin{lstlisting}[xleftmargin=4.0ex,mathescape=true,basicstyle=\scriptsize]
$\code{X = 3} \hspace{.5cm} \kw{par} \hspace{.5cm} \code{X = 5}$
\end{lstlisting} }
\end{minipage}
\end{lrbox}

\begin{lrbox}{\exOrderCond}
\begin{minipage}{0.62\textwidth}
\centering
{\small \begin{lstlisting}[xleftmargin=4.0ex,mathescape=true,basicstyle=\scriptsize]
$\kw{if i1 } \code{(Z > 3) } \kw{then} \hspace{.5cm} \phantom{\code{par}} \hspace{.5cm} \kw{if i2 } \code{(Y > 3) } \kw{then}$
  $\code{Z = 2} \hspace{2cm} \phantom{\code{par}} \hspace{.5cm}   \code{Y = 4}$
$\kw{else} \hspace{2.3cm} \kw{par} \hspace{.5cm} \kw{else}$
  $\code{Y = 6} \hspace{2cm} \phantom{\code{par}} \hspace{.5cm}   \code{Z = 5}$
$\kw{end} \hspace{2.4cm} \phantom{\code{par}} \hspace{.5cm} \kw{end}$
\end{lstlisting} }
\end{minipage}

\end{lrbox}

\begin{figure}[t]
\subfloat[Racing assignments\label{ex:order-assign}]{ \usebox\exOrderAssign }
\subfloat[Conditional statements\label{ex:order-cond}]{ \usebox\exOrderCond }
\caption{Programs that reverse incorrectly under specific interleavings. Initially \code{X~=~1}, \code{Y~=~2} and \code{Z~=~4}. After the executions discussed in Examples~\ref{ex:incorrect-reverse-order}~and~\ref{ex:incorrect-reverse-order-cond}, \code{X~=~5}, \code{Y~=~2} and \code{Z~=~2}.  Statement paths and identifier stacks are omitted.}
\label{ex:inverse-identifiers}
\end{figure}

Such problems are overcome by recording the order in which statements are executed forwards with the help of \emph{identifiers} as in \cite{IP2007,IP2012}. Identifiers, which are consecutive natural numbers and are ranged over using \code{m} and \code{n} (where both are elements of the set of all identifiers $\setOf{K}$), are used in ascending order to capture the order in which statements execute. At each interleaving decision point the next unused available identifier is associated with the statement that is executing. An annotated version of a program must be capable of storing potentially many identifiers associated to each statement. Identifiers are maintained as an ordered sequence, with the function \func{next}{} returning the next unused identifier, and \func{previous}{} returning the previously used identifier. 

In order to save identifiers as they are used, the syntax\footnote{The identifier stacks would in practise be saved in a separate environment, named the identifier environment. Each statement would then include the stack name, meaning the stack of identifiers outlives the execution. To keep both the proof and examples concise, we display all stacks as part of the statement and omit use of the identifier environment.} of each statement that requires identifiers is expanded to include an \emph{identifier stack}. Each is uniquely named and represented as \code{Ai}. The stack notation \code{m:Ai} represents an identifier \code{m} and the remaining stack \code{Ai}. Example~\ref{ex:fwd-iden-use} demonstrates forward identifier use on a small program.   


\begin{figure}[t]
\centering
\begin{minipage}{0.75\textwidth}
{\small \begin{lstlisting}[xleftmargin=4.0ex,mathescape=true]
$\kw{while w1 }\code{(X > 3)} \kw{ do} \hspace{1cm} \phantom{\code{par}} \hspace{.85cm} \code{Y = 2} \quad {\color{blue}\langle \code{5} \rangle} \code{;}$
   $\code{X = X - 1} \quad {\color{blue}\langle \code{1,3,7} \rangle} \code{} \hspace{0.57cm} \code{par} \hspace{.96cm} \code{X = X - Y} \quad {\color{blue}\langle \code{6} \rangle} \code{;}$
$\kw{end} \quad {\color{blue}\langle \code{0,2,4,8} \rangle} \code{} \hspace{1.9cm} \phantom{\code{par}} \hspace{.8cm} \code{Y = X} \quad {\color{blue}\langle \code{9} \rangle} \code{}$
\end{lstlisting} }
\end{minipage}
\caption{Example use of identifiers to capture interleaving order. Let \code{X = 10}, \code{Y = 1} and \code{Z~=~3} initially. Statement paths and identifier stacks are omitted.}
\label{ex:fwd-identifiers}
\end{figure}

%
%
%
%

\begin{example} \label{ex:fwd-iden-use}
Figure~\ref{ex:fwd-identifiers} contains a while loop in parallel with three assignment statements. The notation ${\color{blue}\langle \phantom{\code{3}} \rangle}$ represents a stack used to store identifiers for each statement, with the left most element being the head. The interleaving order of this program is captured via the identifiers shown.   $\myqed$
\end{example}

\subsection{Capturing Unrecoverable Information} \label{sec:lost-info-ann}
Some data lost as statements execute forwards is unrecoverable. The simplest example of this is a destructive assignment, where the current value of the variable in question is overwritten and lost. Correct reversal requires this variable to be restored to this lost value, with no way of re-calculation.  We must therefore save such lost information, including control flow information (see Section~\ref{sec:seman-ann}). Therefore we introduce the \emph{auxiliary store} $\delta$ for this purpose.
The store $\delta$  is separate to the program state and consists of stacks (ideal for reversibility due to their Last-In, First-Out (LIFO) nature).

There exists a single stack in $\sigma$ for each variable within the program. There is also a single stack \code{B} and a single stack \code{W} in $\sigma$, each used to store the result of evaluating all conditional statement or while loop boolean expressions respectively. The stacks \code{WI} and \code{Pr} are used to store identifiers associated to renamed copies of loop or procedure bodies prior to their removal. The store $\delta$ is defined as $\delta: (\setOf{X} \rightarrow (\idens \times \ints)) \cup (\code{B} \rightarrow (\idens \times \setOf{B})) \cup (\code{W} \rightarrow (\idens \times \setOf{B})) \cup (\code{WI} \rightarrow (\idens \times \setOf{C})) \cup (\code{Pr} \rightarrow (\idens \times \setOf{C})) $, where $\setOf{C}$ is the set of identifier sequences retrieved from a loop or procedure copy prior to its removal. Each stack within $\delta$ stores pairs of natural numbers, with the first element being an identifier and the second being an old value or boolean. We note that throughout we use \code{T} and \code{F} to represent \code{1} and \code{0} (boolean values) respectively.

The notation $\delta$[\code{el} $\rightharpoonup$ \code{St}] represents pushing the element \code{el} to the stack \code{St}, and $\delta$[\code{St/St$'$}] represents popping the head of stack \code{St}, leaving the tail \code{St$'$}. Correspondingly, $\delta[\code{\{(\code{m},$\sigma$(\code{el1}),\ldots,(\code{m},$\sigma$(\code{el1+n}))\} $\rightharpoonup$ X}]$ is used to represent pushing many pairs onto a single stack \code{X}. Each pair contains the identifier \code{m}, but with the contents of memory locations ranging from \code{el1} to \code{el1}+\code{n}, in order. Finally, $\delta(\code{name},\code{m})$ is used to represent continually popping the stack \code{name} while the top element contains the identifier \code{m}, returning each of the old values saved there. 

By abuse of notation, we update $\square$ to now also contain $\delta$. 
The following defines when two program states are equivalent. 

\begin{definition}{(Equivalent states)} \label{def:prog-states}
Let $\square$ be the environment tuple \{$\sigma$,$\gamma$,$\mu$,$\beta$,$\delta$\} and $\square_1$ be the tuple of annotated environments \{$\sigma_1$,$\gamma_1$,$\mu_1$,$\beta_1$,$\delta_1$\}. Let \func{d}{$\gamma$} be the domain of $\gamma$. 
We say that $\square$ is \emph{equivalent} to $\square_1$, written $\square \approx \square_1$, if and only if:\vspace{-.2cm}
\begin{enumerate}
\item \func{d}{$\gamma$} = \func{d}{$\gamma_1$} and $\sigma(\gamma(\textup{\code{X}},\textup{\code{Bn}}))$ = $\sigma_1(\gamma_1(\textup{\code{X}},\textup{\code{Bn}}))$ for all \code{X} $\in$ \func{d}{$\gamma$} and block names \code{Bn}; \vspace{-.4cm}
\item \func{d}{$\mu$} = \func{d}{$\mu_1$}; $\mu(\textup{\code{Pn}})$ = (\textup{\code{n,P}}), $\mu_1(\textup{\code{Pn}})$~=~(\textup{\code{n,AP}}) and \code{AP} = $ann($\code{P}$)$ for all \code{Pn} $\in$ \func{d}{$\mu$};\vspace{-.4cm} 
\item \func{d}{$\beta$} = \func{d}{$\beta_1$};  $\beta(\textup{\code{Wn}})$ = \textup{\code{P}}, $\beta_1(\textup{\code{Wn}})$ = \textup{\code{AP}} and \code{AP} = $ann($\code{P}$)$ for all \code{Wn} $\in$ \func{d}{$\beta$}.
\end{enumerate}
\end{definition}
 
\subsection{Operational Semantics of Annotated Programs} \label{sec:seman-ann}
To define the behaviour of our programming language, we now use  \emph{Structural Operational Semantics} (SOS) \cite{GP2004} to define a \emph{transition system}. A transition system is a directed graph, with each vertex a \emph{configuration} and each edge a \emph{transition}. A \emph{configuration} represents a snapshot of an execution and is composed of a program \code{P} and program state $\square$, written as $(\code{P} \mid \square)$. Two configurations can be linked together as a transition, where the second configuration can be reached from the first via a single step of an execution. Each such transition is a member of the \emph{transition relation} $\atran{}$ (annotated execution), which is defined as the least relation generated by the set of \emph{transition rules} contained within this section.  Each transition rule consists of a \emph{conclusion} (the transition this rule represents), potentially many \emph{premises} (other transitions or claims that must hold for the conclusion to be valid) and an optional \emph{side condition} (constraints placed on components of the rule). Each transition rule that has no premises is called an \emph{axiom}. Let $\atran{}^*$ be the reflexive and transitive closure of $\atran{}$. 
We note that ignoring all use of $\delta$ and identifiers gives the traditional operational semantics of our language.

We now give a syntax-directed, \emph{small step semantics} that defines annotated execution. Each transition rule represents a single step of a larger execution, meaning the resulting configuration need not be \emph{terminal} (no further transitions from this configuration are possible), and instead can be \emph{intermediate}. We use $\hookrightarrow^*_{\code{a}}$ and $\hookrightarrow^*_{\code{b}}$ to represent a step of arithmetic and boolean expression evaluation respectively. As previously described, some transition rules will use identifiers. Such steps are called \emph{identifier steps} and are represented as $\atran{m}$, where \code{m} is the identifier used. Transition rules that do not use identifiers are called \emph{skip steps} and are represented as 
$\rightarrow_s$. In some cases, such as the execution of a loop body, a step may or may not use an identifier depending on what the body does. Such transitions are represented as $\atran{\circ}$. All components of the following rules that are included for reversibility are indicated using the colour blue. The removal of all blue components from the following rules gives the semantics of traditional (and irreversible) execution.

\vspace{.2cm}
\noindent \textbf{Assignment} Destructive assignments overwrite the value of a given variable (rule \sos{D1a} below) or array element (rule \sos{D2a}), with the old value being saved to $\delta$ with $\delta[\code{(m,$\sigma($l$)$) $\rightharpoonup$ X}]$ and 
$\delta[\code{(m,$\sigma$($\sigma$(\code{l})+\code{n})) $\rightharpoonup$ name)}]$ respectively. The given variable is evaluated via the function \code{$evalV($X,pa,$\gamma)$}, taking the variable name \code{X}, the statement path \code{pa} and variable environment $\gamma$, and producing the memory location \code{l}. Constructive assignments increment (\sos{C1a}/\sos{C3a}) or decrement (\sos{C2a}/\sos{C4a}) a variable or array element (e.g. $\sigma[\code{l} \mapsto \sigma(\code{l}) + \code{v}]$) with no reversal information saved.

{\footnotesize
\begin{align*}
&\sos{D1a} \quad \frac{\gs{\code{m} = \func{next}{}} \quad (\code{e pa} \mid \sigma,\gamma,\square) \hookrightarrow^*_{\code{a}} (\code{v} \mid \sigma,\gamma,\square) \quad \code{$evalV($X,pa,$\gamma)$} = \code{l}}{(\code{X = e (pa,A)} \mid \gs{\delta},\sigma,\gamma,\square) \atran{m} (\code{skip \gs{m:A}} \mid \gs{\delta[\code{(m,$\sigma($l$)$) $\rightharpoonup$ X}]},\sigma[\code{l} \mapsto \code{v}],\gamma,\square)} \\[6pt]
& \sos{D2a} \quad \frac{\begin{aligned} & \gs{\code{m} = \func{next}{}} \quad (\code{e pa} \mid \sigma,\gamma,\square) \hookrightarrow^*_{\code{a}} (\code{n} \mid \sigma,\gamma,\square) \quad (\code{e$'$ pa} \mid \sigma,\gamma,\square) \hookrightarrow^*_{\code{a}} (\code{v} \mid \sigma,\gamma,\square) \\[-3pt] & \code{$evalV($name,pa,$\gamma)$} = \code{l} \quad \sigma' = \sigma[(\sigma(\code{l})+\code{n}) \mapsto \code{v}] \quad \gs{\delta' = \delta[\code{(m,$\sigma$($\sigma$(\code{l})+\code{n})) $\rightharpoonup$ name)}]} \end{aligned}}{(\code{name[e] = e$'$ (pa,A)} \mid \sigma,\gamma,\delta,\square) \atran{m} (\code{skip \gs{m:A}} \mid \sigma',\gamma,\delta', \square)} \\[6pt]
&\sos{C1a} \quad \frac{\gs{\code{m} = \func{next}{}} \quad (\code{e pa} \mid \sigma,\gamma,\square) \hookrightarrow^*_{\code{a}} (\code{v} \mid \sigma,\gamma,\square) \quad \code{$evalV($X,pa,$\gamma)$} = \code{l}}{(\code{X += e (pa,A)} \mid \delta,\sigma,\gamma,\square) \atran{m} (\code{skip \gs{m:A}} \mid \sigma[\code{l} \mapsto \sigma(\code{l}) + \code{v}],\gamma,\square)} \\[6pt]
&\sos{C2a} \quad \frac{\gs{\code{m} = \func{next}{}} \quad (\code{e pa} \mid \sigma,\gamma,\square) \hookrightarrow^*_{\code{a}} (\code{v} \mid \sigma,\gamma,\square) \quad \code{$evalV($X,pa,$\gamma)$} = \code{l}}{(\code{X -= e (pa,A)} \mid \delta,\sigma,\gamma,\square) \atran{m} (\code{skip \gs{m:A}} \mid \sigma[\code{l} \mapsto \sigma(\code{l}) - \code{v}],\gamma,\square)} \\[6pt]
& \sos{C3a} \quad \frac{\begin{aligned} & \gs{\code{m} = \func{next}{}} \quad (\code{e pa} \mid \sigma,\gamma,\square) \hookrightarrow^*_{\code{a}} (\code{n} \mid \sigma,\gamma,\square) \quad (\code{e$'$ pa} \mid \sigma,\gamma,\square) \hookrightarrow^*_{\code{a}} (\code{v} \mid \sigma,\gamma,\square) \\[-3pt] & \code{$evalV($name,pa,$\gamma)$} = \code{l} \quad \sigma' = \sigma[(\sigma(\code{l})+\code{n}) \mapsto (\sigma(\code{l})+\code{n}) + \code{v}] \end{aligned}}{(\code{name[e] += e$'$ (pa,A)} \mid \sigma,\gamma,\square) \atran{m} (\code{skip \gs{m:A}} \mid \sigma',\gamma,\square)} \\[6pt]
& \sos{C4a} \quad \frac{\begin{aligned} & \gs{\code{m} = \func{next}{}} \quad (\code{e pa} \mid \sigma,\gamma,\square) \hookrightarrow^*_{\code{a}} (\code{n} \mid \sigma,\gamma,\square) \quad (\code{e$'$ pa} \mid \sigma,\gamma,\square) \hookrightarrow^*_{\code{a}} (\code{v} \mid \sigma,\gamma,\square) \\[-3pt] & \code{$evalV($name,pa,$\gamma)$} = \code{l} \quad \sigma' = \sigma[(\sigma(\code{l})+\code{n}) \mapsto (\sigma(\code{l})+\code{n}) - \code{v}] \end{aligned}}{(\code{name[e] -= e$'$ (pa,A)} \mid \sigma,\gamma,\square) \atran{m} (\code{skip \gs{m:A}} \mid \sigma',\gamma,\square)} 
\end{align*} }%

\vspace{.2cm}
\noindent \textbf{Conditional Statement}
A conditional statement begins with evaluation of the condition ($(\code{b pa} \mid \square) \hookrightarrow^*_{\code{b}} (\code{T} \mid \square)$) to true (\sos{I1aT} or false \sos{I1aF}). The true (\sos{I2a}) or false (\sos{I3a}) branch is then executed. Finally, the conditional closes by saving  a true (\sos{I4a}) or false (\sos{I5a}) boolean value to $\delta$ indicating the executed branch (e.g. $\delta[\code{(m,T) $\rightharpoonup$ B}]$).

{\footnotesize
\begin{align*}
&\sos{I1aT} \quad \frac{\gs{\code{m} = \func{next}{}} \quad (\code{b pa} \mid \square) \hookrightarrow^*_{\code{b}} (\code{T} \mid \square)}{(\code{if In b then AP else AQ end (pa,A)} \mid \square) \atran{m} (\code{if In T then AP else AQ end (pa,\gs{m:A})} \mid \square)} \\[6pt]
&\sos{I1aF} \quad \frac{\gs{\code{m} = \func{next}{}} \quad (\code{b pa} \mid \square) \hookrightarrow^*_{\code{b}} (\code{F} \mid \square)}{(\code{if In b then AP else AQ end (pa,A)} \mid \square) \atran{m} (\code{if In F then AP else AQ end (pa,\gs{m:A})} \mid \square)} \\[6pt]
&\sos{I2a} \quad \frac{(\code{AP} \mid \square) \atran{\circ} (\code{AP$'$} \mid \square')}{(\code{if In T then AP else AQ end (pa,A)} \mid \square) \atran{\circ} (\code{if In T then AP$'$ else AQ end (pa,A)} \mid \square')} \\[6pt]
&\sos{I3a} \quad \frac{(\code{AQ} \mid \square) \atran{\circ} (\code{AQ$'$} \mid \square')}{(\code{if In F then AP else AQ end (pa,A)} \mid \square) \atran{\circ} (\code{if In F then AP else AQ$'$ end (pa,A)} \mid \square')} \\[6pt]
&\sos{I4a} \quad \frac{\gs{\code{m} = \func{next}{}}}{(\code{if In T then skip I else AQ end (pa,A)} \mid \delta,\square) \atran{m} (\code{skip \gs{m:A}} \mid \gs{\delta[\code{(m,T) $\rightharpoonup$ B}]},\square)} \\[6pt]
&\sos{I5a} \quad \frac{\gs{\code{m} = \func{next}{}}}{(\code{if In F then AP else skip I end (pa,A)} \mid \delta,\square) \atran{m} (\code{skip \gs{m:A}} \mid \gs{\delta[\code{(m,F) $\rightharpoonup$ B}]},\square)}
\end{align*} }%

\vspace{.2cm}
\noindent \textbf{While Loop} While loops allow a varying number of iterations (with the condition not guaranteed to be invariant). To reverse a loop we record a sequence of pairs onto stack \code{W}, with an element for each iteration. Each pair consists of an identifier and a boolean value that reflects the evaluation of the condition. This sequence is used to determine reverse control flow, meaning these pairs are accessed in reverse order. As the first forward iteration will be the last reverse iteration (the place to stop), we save a false boolean during the first iteration ($\delta[\code{(m,F) $\rightharpoonup$ W}]$) and then true for all others ($\delta[\code{(m,T) $\rightharpoonup$ W}]$).

A loop not yet executed will have an undefined (\emph{und}) mapping in $\beta$. Loops have zero iterations (\sos{W1a}), or begin with a first iteration (\sos{W3a}). This first iteration defines a mapping (\emph{def}) in the while environment ($\beta[\code{Wn $\Rightarrow$ AR}]$) for this  loop. Any further iteration (\sos{W4a}) requires a true boolean value to be saved. The body of a loop is executed for each iteration (\sos{W5a}, where each identifier used is reflected into the while environment via \func{refW}{\code{Wn},\code{AP$'$}} to ensure they outline the execution) before this completes (\sos{W6a}). A loop body is renamed prior to every iteration via the function ($ \func{reL}{\code{AP}}$), which takes the loop body \code{AP} and updates each construct identifier to ensure they remain unique. Finally, the loop itself completes (\sos{W2a}), saving all identifiers used by this loop (retrieved via the function \func{getAI}{$\beta$(\code{Wn})} .

{\footnotesize
\begin{align*}
&\sos{W1a} \hspace{.2cm} \frac{\gs{\code{m} = \func{next}{}} \quad \code{$\beta($Wn$)$} = \emph{und} \quad (\code{b pa} \mid \beta,\square) \hookrightarrow^*_{\code{b}} (\code{F} \mid \beta,\square) }{(\code{while Wn b do AP end (pa,A)} \mid \delta,\beta,\square) \atran{m} (\code{skip \gs{m:A}} \mid \gs{\delta[\code{(m,F) $\rightharpoonup$ W}]},\beta,\square)} \\[6pt]
&\sos{W2a} \hspace{.2cm} \frac{\gs{\code{m} = \func{next}{}} \quad \code{$\beta($Wn$)$} = \emph{def} \quad (\code{b pa} \mid \beta,\square) \hookrightarrow^*_{\code{b}} (\code{F} \mid \beta,\square) }{(\code{while Wn b do AP end (pa,A)} \mid \delta,\beta,\square) \atran{m} (\code{skip \gs{m:A}} \mid \gs{\delta[\code{(m,T) $\rightharpoonup$ W}, \code{(m,C) $\rightharpoonup$ WI}]},\beta[\code{Wn}],\square)} \\[3pt] & \phantom{\text{[W3a]} \quad} \text{where } \gs{\code{C} = \func{getAI}{\beta(\code{Wn})})} 
\end{align*} }

{\footnotesize
\begin{align*}
&\sos{W3a} \hspace{.2cm} \frac{\gs{\code{m} = \func{next}{}} \quad \code{$\beta($Wn$)$} = \emph{und} \quad (\code{b pa} \mid \beta,\square) \hookrightarrow^*_{\code{b}} (\code{T} \mid \beta,\square) }{(\code{S} \mid \delta,\beta,\square) \atran{m} (\code{while Wn T do AP$'$ end (pa,\gs{m:A})} \mid \gs{\delta[\code{(m,F) $\rightharpoonup$ W}]},\beta[\code{Wn $\Rightarrow$ AR}],\square)} \\[3pt] & \phantom{\text{[W3a]} \quad} \text{where }\code{S} = \code{while Wn b do AP end (pa,A)} \text{, } \code{AP$'$} = \func{reL}{\code{AP}} \\[-2pt] & \phantom{\text{[W3a]} \quad} \text{and } \code{AR} = \code{while Wn b do AP$'$ end (pa,m:A)} \\[6pt]
&\sos{W4a} \hspace{.2cm} \frac{\gs{\code{m} = \func{next}{}} \quad \code{$\beta($Wn$)$} = \emph{def} \quad (\code{b pa} \mid \beta,\square) \hookrightarrow^*_{\code{b}} (\code{T} \mid \beta,\square)}{(\code{S} \mid \delta,\beta,\square) \atran{m} (\code{while Wn T do AP$'$ end (pa,\gs{m:A})} \mid \gs{\delta[\code{(m,T) $\rightharpoonup$ W}]},\beta[\code{Wn $\Rightarrow$ AR}],\square)} \\[3pt] & \phantom{\text{[W3a]} \quad} \text{where }\code{S} = \code{while Wn b do AP end (pa,A)} \text{, } \code{AP$'$} = \func{reL}{\code{AP}} \\[-2pt] & \phantom{\text{[W3a]} \quad} \text{and } \code{AR} = \code{while Wn b do AP$'$ end (pa,m:A)} \\[6pt]
&\sos{W5a} \hspace{.1cm} \frac{\code{$\beta($Wn$)$} = \emph{def} \quad (\code{AP} \mid \delta,\beta,\square) \atran{\circ} (\code{AP$'$} \mid \delta',\beta',\square')}{(\code{while Wn T do AP end (pa,A)} \mid \delta,\beta,\square) \atran{\circ} (\code{while Wn T do AP$'$ end (pa,A)} \mid \delta',\beta'',\square')} \\[3pt] & \phantom{\text{[W3a]} \quad} \text{where }  \beta'' = \beta'[\func{refW}{\code{Wn},\code{AP$'$}}] \\[6pt]
&\sos{W6a} \hspace{.1cm} \frac{\code{$\beta($Wn$)$} = \code{while Wn b do AP end (pa,A)}}{(\code{while Wn T do skip I end (pa,A)} \mid \delta,\beta,\square) \atran{}_s (\code{while Wn b do AP end (pa,A)} \mid \delta,\beta,\square)}
\end{align*} }

\vspace{.2cm}
\noindent \textbf{Block} The body of a block executes (\sos{B1a}) before this reaches skip and the block closes (\sos{B2a}) \ch{without} saving any reversal information.

{\footnotesize
\begin{align*}
&\sos{B1a} \quad \frac{(\code{AP} \mid \square) \atran{\circ} (\code{AP$'$} \mid \square')}{(\code{begin b1 AP end} \mid \square) \atran{\circ} (\code{begin b1 AP$'$ end} \mid \square')}  \\[6pt] &\sos{B2a} \quad \frac{}{(\code{begin b1 skip I end} \mid \square) \rightarrow_s (\code{skip} \mid \square)} 
\end{align*} }%

\vspace{.2cm}
\noindent \textbf{Variable, Procedure and Array Declaration} Variable declaration (\sos{L1a}) creates a corresponding mapping in the variable environment ($\gamma[(\code{X},\code{Bn}) \Rightarrow \code{l}]$) and data store ($\sigma[\code{l} \mapsto \code{v}]$). Procedure declaration (\sos{L2a}) inserts a mapping into the procedure environment ($\mu[\code{Pn} \Rightarrow \code{(n,AP)}]$). Array declaration (\sos{L3a}) creates a mapping in the variable environment ($\gamma[(\code{name},\code{Bn}) \Rightarrow \code{l}]$), and initialises a block of consecutive memory locations for its elements ($\sigma[\code{l} \mapsto \code{l1},\{\code{l1},\cdots,\code{l1}\code{+n}\} \mapsto 0]$). In each case, no reversal information is required. 

{\footnotesize
\begin{align*}
&\sos{L1a} \quad \frac{\gs{\code{m} = \func{next}{}} \quad (\code{e pa} \mid \sigma,\gamma,\square) \hookrightarrow^*_{\code{a}} (\code{v} \mid \sigma,\gamma,\square) \quad \func{nextLoc}{\sigma} = \code{l} \quad \code{pa} = \code{Bn:pa$'$} }{(\code{var X = e (pa,A)} \mid \sigma,\gamma,\square) \atran{m} (\code{skip \gs{m:A}} \mid \sigma[\code{l} \mapsto \code{v}],\gamma[(\code{X},\code{Bn}) \Rightarrow \code{l}],\square)} \\[6pt]
&\sos{L2a} \quad \frac{\gs{\code{m} = \func{next}{}}}{(\code{proc Pn n is AP (pa,A)} \mid \mu,\square) \atran{m} (\code{skip \gs{m:A}} \mid \mu[\code{Pn} \Rightarrow \code{(n,AP)}],\square)} \\[6pt]
& \sos{L3a} \quad \frac{\begin{aligned} & \gs{\code{m} = \func{next}{}} \quad \func{nextLoc}{\sigma} = \code{l} \quad \func{nextLocBlock}{\sigma,n} = \code{l1} \quad \code{pa} = \code{Bn:pa$'$} \\[-3pt]& \sigma' = \sigma[\code{l} \mapsto \code{l1},\{\code{l1},\ldots,\code{l1}\code{+n}\} \mapsto 0] \quad  \gamma' = \gamma[(\code{name},\code{Bn}) \Rightarrow \code{l}] \end{aligned}}{(\code{arr[n] name (pa,A)} \mid \sigma,\gamma,\square) \atran{m} (\code{skip \gs{m:A}} \mid \sigma',\gamma',\square)} 
\end{align*} }%

\vspace{.2cm}
\noindent \textbf{Variable, Procedure and Array Removal} Variable removal (\sos{H1a}) erases the mapping for this version of a variable ($\gamma[(\code{X},\code{Bn})]$), losing the final value it holds. We therefore save this value and the next identifier onto $\delta$ via $\delta[\code{(m,$\sigma($l$)$) $\rightharpoonup$ X}]$. Procedure removal (\sos{H2a}) deletes the mapping from the procedure environment ($\mu[\code{Pn}]$), though this can be recovered completely from the inverse version and only an identifier is required. Array removal (\sos{H3a}) destroys the mapping between this array and the memory locations it has used ($\sigma[\code{l} \mapsto \code{0},\{\code{l1},\cdots,\code{l1}\code{+n}\} \mapsto 0]$), and loses each of the final values held by the array elements. Each final value is saved onto $\delta$ alongside the same identifier ($\delta[\code{\{(\code{m},$\sigma$(\code{l1}),\ldots,(\code{m},$\sigma(\code{l1}\code{+n})$\} $\rightharpoonup$ X}]$).

{\footnotesize
\begin{align*}
&\sos{H1a} \quad \frac{\gs{\code{m} = \func{next}{}} \quad \code{pa} = \code{Bn:pa$'$} \quad \code{$\gamma($X,Bn$)$} = \code{l}}{(\code{remove X = e (pa,A)} \mid \delta,\sigma,\gamma,\square) \atran{m} (\code{skip \gs{m:A}} \mid \gs{\delta[\code{(m,$\sigma($l$)$) $\rightharpoonup$ X}]},\sigma[\code{l} \mapsto \code{0}],\gamma[(\code{X},\code{Bn})],\square)} \\[6pt]
&\sos{H2a} \quad \frac{\gs{\code{m} = \func{next}{}}  \quad \code{$\mu$(Pn)} = \emph{def}}{(\code{remove Pn n is AP (pa,A)} \mid \mu,\square) \atran{m} (\code{skip \gs{m:A}} \mid \mu[\code{Pn}],\square)} \\[6pt]
&\sos{H3a} \quad \frac{\begin{aligned} & \gs{\code{m} = \func{next}{}}  \quad \code{pa} = \code{Bn:pa$'$} \quad \code{$\gamma($name,Bn$)$} = \code{l} \quad \code{l1} = \sigma(\code{l}) \\[-3pt]& \gs{\delta' = \delta[\code{\{(\code{m},$\sigma$(\code{l1}),\ldots,(\code{m},$\sigma(\code{l1}\code{+n})$\} $\rightharpoonup$ X}]} \quad \sigma' = \sigma[\code{l} \mapsto \code{0},\{\code{l1},\cdots,\code{l1}\code{+n}\} \mapsto 0] \quad \gamma' = \gamma[(\code{name},\code{Bn})] \end{aligned}}{(\code{remove arr[n] name (pa,A)} \mid \delta,\sigma,\gamma,\square) \atran{m} (\code{skip (pa,\gs{m:A})} \mid \delta',\sigma',\gamma',\square)} 
\end{align*} }%

\vspace{.2cm}
\noindent \textbf{Procedure Call} The opening (\sos{G1a}) and closing (\sos{G3a}) of a procedure call do not lose any information that is not recoverable, meaning only an identifier is required in each case. A renamed version of the procedure body (with construct identifiers and stack names renamed to remain unique, and produced by the function \func{reP}{}) is stored into $\mu$ when opening ($\mu[\code{Cn} \Rightarrow \code{(n,AP$'$)}]$), and is removed when closed ($\mu[\code{Cn}]$). Procedures are evaluated to a unique construct identifier via the function \func{evalP}{\code{n},\code{pa}}, which takes the procedure name \code{n} used in the code and the statement path \code{pa}. To ensure identifiers are not lost when the mapping is removed, these are reflected to the procedure environment (via the function \func{refC}{\code{Cn},\code{AP$'$}}) and then saved to $\delta$ with $\delta[(\code{m,$getAI(\mu($Cn$)))$ $\rightharpoonup$ Pr}]$).

{\footnotesize
\begin{align*}
&\sos{G1a} \quad \frac{\gs{\code{m} = \func{next}{}} \quad \func{evalP}{\code{n,pa}} = \code{Pn} \quad \mu(\code{Pn}) = \code{(n,AP)} \quad \func{reP}{\code{AP},\code{Cn}} = \code{AP$'$}}{(\code{call Cn n (pa,A)} \mid \mu,\square) \atran{m} (\code{runc Cn AP$'$ end \gs{m:A}} \mid \mu[\code{Cn} \Rightarrow \code{(n,AP$'$)}],\square)} \\[6pt]
&\sos{G2a} \quad \frac{(\code{AP} \mid \mu,\square) \atran{\circ} (\code{AP$'$} \mid \mu',\square')}{(\code{runc Cn AP end A} \mid \mu,\square) \atran{\circ} (\code{runc Cn AP$'$ end A} \mid \mu'[\emph{refC}(\code{Cn},\code{AP$'$})],\square')} \\[6pt]
&\sos{G3a} \hspace{.2cm} \frac{\gs{\code{m} = \func{next}{}} \quad \code{$\mu$(Cn)} = \emph{def}}{(\code{runc Cn skip I end A} \mid \delta,\mu,\square) \atran{m} (\code{skip \gs{m:A}} \mid \gs{\delta[\code{(m,$getAI(\mu($Cn$)))$ $\rightharpoonup$ Pr}]},\mu[\code{Cn}],\square)}
\end{align*} }%

\vspace{.2cm}
\noindent \textbf{Sequential and Parallel Composition} A sequentially composed statement executes (\sos{S1a}) until it reaches skip, before this skip statement is removed (\sos{S2a}) (\code{I} represents a potential identifier stack). Programs in parallel interleave their execution, with a step made via the left (\sos{P1a}) or the right (\sos{P2a}) program, before both are skip and the composition can close (\sos{P3a}). 

{\footnotesize
\begin{align*}
&\sos{S1a} \quad \frac{(\code{AS} \mid \square) \atran{\circ} (\code{AS$'$} \mid \square')}{(\code{AS; AP} \mid \square) \atran{\circ} (\code{AS$'$; AP} \mid \square')} & \sos{S2a}& \quad \frac{}{(\code{skip I; AP} \mid \square) \rightarrow_s (\code{AP} \mid \square)} \\[6pt]
&\sos{P1a} \hspace{.2cm} \frac{(\code{AP} \mid \square) \atran{\circ} (\code{AP$'$} \mid \square')}{(\code{AP par AQ} \mid \square) \atran{\circ} (\code{AP$'$ par AQ} \mid \square')} &\sos{P2a}& \hspace{.2cm} \frac{(\code{AQ} \mid \square) \atran{\circ} (\code{AQ$'$} \mid \square')}{(\code{AP par AQ} \mid \square) \atran{\circ} (\code{AP par AQ$'$} \mid \square')} \\[6pt]
&\sos{P3a} \hspace{.2cm} \frac{}{(\code{skip I$_1$ par skip I$_2$} \mid \square) \rightarrow_s (\code{skip} \mid \square)} 
\end{align*} }%

\subsection{Annotated Odd-Even Transposition Sort}\label{annotated-OETS}

\begin{figure}[t!]
{\small \begin{lstlisting}[xleftmargin=2.0ex,mathescape=true,multicols=2,basicstyle=\scriptsize]
$\kw{begin b1.0}$
 $\kw{arr[5]}~ \code{l} \quad{\color{blue} ~\langle 0 \rangle};$
 $\code{l[0] = 7} \quad{\color{blue} ~\langle 1 \rangle};$ $\code{l[1] = 3}\quad {\color{blue} ~\langle 2 \rangle};$
 $\code{l[2] = 4} \quad{\color{blue} ~\langle 3 \rangle};$ $\code{l[3] = 1}\quad {\color{blue} ~\langle 4 \rangle};$
 $\code{l[4] = 6} \quad{\color{blue} ~\langle 5 \rangle};$
 $\code{count = 0} \quad{\color{blue} ~\langle 6 \rangle};$
 $\kw{while w1.0}~ \code{(count1 < 4)}~\kw{do}$ 
  ${\color{green} \code{//even phase}}$   
  $\kw{par \{}$
   $\kw{if i2.0}~ \code{(l[0] > l[1])}~\kw{then}$
    $\kw{begin b2.0}$
     $\kw{var}~ \code{temp = 0} \quad{\color{blue} ~\langle 35,10 \rangle};$
     $\code{temp = l[0]} \quad{\color{blue} ~\langle 36,12 \rangle};$
     $\code{l[0] = l[1]} \quad{\color{blue} ~\langle 38,14 \rangle};$
     $\code{l[1] = temp} \quad{\color{blue} ~\langle 40,17 \rangle};$
     $\kw{remove}~ \code{temp = 0} \quad{\color{blue} ~\langle 42,18 \rangle}$
    $\kw{end}$
   $\kw{end} \quad{\color{blue} ~\langle 70,68,61,58,45,33,21,8 \rangle}$
  $\kw{\}}$ $\kw{\{}$
   $\kw{if i3.0}~ \code{(l[2] > l[3])}~\kw{then}$
    $\kw{begin b3.0}$
     $\kw{var}~ \code{temp = 0} \quad{\color{blue} ~\langle 37,11 \rangle};$
     $\code{temp = l[2]} \quad{\color{blue} ~\langle 39,13 \rangle};$
     $\code{l[2] = l[3]} \quad{\color{blue} ~\langle 41,15 \rangle};$
     $\code{l[3] = temp} \quad{\color{blue} ~\langle 43,16 \rangle};$
     $\kw{remove}~ \code{temp = 0} \quad{\color{blue} ~\langle 44,19 \rangle}$
    $\kw{end}$
   $\kw{end} \quad{\color{blue} ~\langle 71,69,60,59,46,34,20,9 \rangle}$ 
  $\kw{\};}$ 
  ${\color{green} \code{//odd phase}}$       
  $\kw{par \{}$
   $\kw{if i4.0}~ \code{(l[1] > l[2])}~\kw{then}$
    $\kw{begin b4.0}$
     $\kw{var}~ \code{temp = 0} \quad{\color{blue} ~\langle 25 \rangle};$
     $\code{temp = l[1]} \quad{\color{blue} ~\langle 26 \rangle};$ 
     $\code{l[1] = l[2]} \quad{\color{blue} ~\langle 27 \rangle};$
     $\code{l[2] = temp} \quad{\color{blue} ~\langle 28 \rangle};$
     $\kw{remove}~ \code{temp = 0} \quad{\color{blue} ~\langle 29 \rangle}$
    $\kw{end}$
   $\kw{end} \quad{\color{blue} ~\langle 73,72,65,63,50,47,30,23 \rangle}$ 
  $\kw{\}}$ $\kw{\{}$
   $\kw{if i5.0}~ \code{(l[3] > l[4])}~\kw{then}$
    $\code{begin b5.0}$
     $\kw{var}~ \code{temp = 0} \quad{\color{blue} ~\langle 49 \rangle};$
     $\code{temp = l[3]} \quad{\color{blue} ~\langle 51 \rangle};$ 
     $\code{l[3] = l[4]} \quad{\color{blue} ~\langle 52 \rangle};$
     $\code{l[4] = temp} \quad{\color{blue} ~\langle 53 \rangle};$
     $\kw{remove}~ \code{temp = 0} \quad{\color{blue} ~\langle 54 \rangle}$
    $\kw{end}$
   $\kw{end} \quad{\color{blue} ~\langle 75,74,64,62,55,48,24,22 \rangle}$
  $\kw{\}};$   
  $\code{count += 1} \quad{\color{blue} ~\langle 76,66,56,31 \rangle}$
 $\kw{end} \quad{\color{blue} ~\langle 77,67,57,32,7 \rangle};$ 
 
 ${\color{green} \code{//array removal}}$   
 $\kw{remove arr[5]}~ \code{l} \quad{\color{blue} ~\langle 78 \rangle}$
$\kw{end}$
\end{lstlisting} }
\caption{Executed annotated Odd-even Transposition Sort. Empty else branches and paths are omitted.}
\label{fig:odd-even-sort-ann}
\end{figure}

\newcommand{\element}[1]{\code{$\langle$#1$\rangle$}}
\newcommand{\gap}{\hspace{.2cm}}

\begin{figure}[t]
\begin{minipage}{0.9\textwidth}
\vspace{15pt}
      \centering
       {\footnotesize \begin{tabular}{c | c | c | c | c | c }
\code{temp}			&	\code{count}	&	\code{l}			&	\code{B}			&	\code{W}	\\ \hline
\element{54,7} \gap \element{29,7}	&	\element{6,0}	&	\element{78,7}	\gap \element{52,7} \gap \element{27,7} \gap \element{4,0}	& \element{75,T} \gap	\element{61,F} \gap \element{24,F}	&	\element{77,T}	\\
\element{51,0} \gap \element{26,0}	&	\phantom{\element{66,2}}	&	\element{78,6}	\gap \element{43,4} \gap \element{17,3} \gap \element{3,0}	& \element{73,F} \gap	\element{55,T} \gap \element{21,T}	&	\element{67,T}	\\
\element{44,7} \gap \element{19,4}	&	\phantom{\element{56,1}}	&	\element{78,4}	\gap \element{41,7} \gap \element{16,1} \gap \element{2,0}	& \element{71,F} \gap	\element{50,F} \gap \element{20,T}	&	\element{57,T}	\\
\element{42,3} \gap \element{18,7}	&	\phantom{\element{31,0}}	&	\element{78,3}	\gap \element{40,1} \gap \element{15,4} \gap \element{1,0}	& \element{70,F} \gap	\element{46,T} \gap \phantom{\element{20,T}}	&	\element{32,T}	\\
\element{39,0} \gap \element{13,0}	&	\phantom{\element{6,0}}	&	\element{78,1}	\gap \element{38,3} \gap \element{14,7} \gap \phantom{\element{1,0}}	& \element{65,F} \gap	\element{45,T} \gap \phantom{\element{20,T}}	&	\element{7,F}	\\
\element{36,0} \gap \element{12,0}	&		&	\element{53,6}	\gap \element{28,1} \gap \element{5,0} \gap \phantom{\element{1,0}}	& \element{64,F} \gap	\element{30,T} \gap \phantom{\element{10,4}}	&	\phantom{\element{9}}	\\
        \end{tabular} }
\end{minipage}%
\caption{Populated auxiliary store after an execution of annotated Odd-Even Transposition Sort. The order of stack elements is read from top to bottom, beginning in the leftmost column. For example, $\langle$78,7$\rangle$ is the head of the stack \code{l}, while $\langle$1,0$\rangle$ is the bottom element. 
}
\label{fig:odd-even-sort-aux}
\end{figure}

With the operational semantics of annotated programs introduced, we now present an example of one step of annotated execution. Recall the sorting program introduced in Section~\ref{sec:sort-orig-sec}, and the executed annotated version of this program in Figure~\ref{fig:odd-even-sort-ann}. One important step of this execution is part of the array declaration that initialises the second element of \code{l} to 3, namely \code{l[1] = 3 $\langle 2 \rangle$} as on line~3. Firstly, we consider the full program \code{AP} where this statement is executed next, namely \code{AP} = \code{begin b1 l[1] = 3 (pa,$\langle~\rangle$); AQ end}, where \code{AQ} is the rest of the program (lines 4--56) and \code{pa} is the statement path \code{b1.0}. The inference tree of this step is given below.
 
{\footnotesize
$$ \begin{prooftree}
\Hypo{\code{m} = \func{next}{}}
\Hypo{\code{e pa} \mid \sigma,\gamma,\square) \hookrightarrow^*_{\code{a}} (\code{v} \mid \sigma,\gamma,\square)}
\Hypo{\code{$evalV($X,pa,$\gamma)$} = \code{l}}
\Infer3[\sos{D2a}]{(\code{l[1] = 3 (pa,$\langle\rangle$);} \mid \delta,\sigma,\square) \atran{m} (\code{skip $\langle\code{m}\rangle$;} \mid \delta',\sigma',\square)}  
\Infer1[\sos{S1a}]{(\code{l[1] = 3 (pa,$\langle\rangle$); AQ} \mid \delta,\sigma,\square) \atran{m} (\code{skip $\langle\code{m}\rangle$; AQ} \mid \delta',\sigma',\square)}  
\Infer1[\sos{B1a}]{(\code{begin b1 l[1] = 3 (pa,$\langle\rangle$); AQ end;} \mid \delta,\sigma,\square) \atran{m} (\code{begin b1 skip $\langle\code{m}\rangle$; AQ end;} \mid \delta',\sigma',\square)}           
\end{prooftree} $$ }

The program state produced after execution of this step is such that $\sigma'$ = $\sigma[(\sigma(\code{l})+\code{n}) \mapsto \code{v}]$ and $\delta'$ = $\delta[\code{(m,$\sigma$($\sigma$(\code{l})+\code{n})) $\rightharpoonup$ l)}]$. This shows that the old value of this memory location (namely \code{0}) is saved to $\delta$. This process can be repeated for all steps of this execution.

\begin{example} \label{sec:sort-ann}
Figure~\ref{fig:odd-even-sort-ann} shows an executed annotated version of the program discussed in  Section~\ref{sec:sort-orig-sec}. This annotated version is produced via \func{ann}{P}, where \code{P} is the program in Section~\ref{sec:sort-orig-sec}. As can be seen, the main difference here is that statements now contain identifier stacks. The interleaving order of this program can be read from the identifier stacks within each statement. After this execution, the array is sorted into ascending order, namely \code{l = [1,3,4,6,7]}.

As the program executes, the auxiliary store is populated as shown in Figure~\ref{fig:odd-even-sort-aux}. For example, the pair $\langle 29,7 \rangle$ on the stack \code{temp} indicates that the statement using identifier \code{29} (line 38 in 
Figure~\ref{fig:odd-even-sort-ann}) overwrites the variable \code{temp} and the value \code{7}, which be lost in traditional execution, is now saved. Not all previous values of \code{count} are saved as most assignments to it are constructive, with values re-calculated during reversal. The identifiers \code{76,66,56} and \code{31} are associated to \code{count += 1} to capture interleaving order, but do not appear in $\delta$. $\myqed$
\end{example}

%

%

\subsection{Proving Correctness of Annotation}
Our first result compares the execution of an original and the corresponding annotated program. The only difference is that the annotated execution also populates the auxiliary store. Theorem~\ref{theorem:ann-result} states the annotation result.


\begin{theorem}{(Annotation result)} \label{theorem:ann-result}
Let \textup{\code{P}} be an original program and  \textup{\code{AP}} be the corresponding annotated version \textup{\code{$ann($P$)$}}. Further let \textup{\code{$\square$}} be the tuple \{$\sigma$,$\gamma$,$\mu$,$\beta$\} of initial program state environments, and $\delta$ be the initial auxiliary store.

If an execution $ (\textup{\code{P} $\mid$ $\delta$, $\square$}) \hookrightarrow^* (\textup{\code{skip} $\mid$ $\delta$, $\square'$}) $ exists, for some program state $\square'$, then there exists an annotated execution $ (\textup{\code{AP} $\mid$ $\delta$, $\square_1$}) \atran{\circ}^* (\textup{\code{skip I} $\mid$ $\delta'$, $\square_1'$}) $ for some \textup{\code{I}}, $\square_1$, $\square_1'$ such that $\square_1 \approx \square$ and $\square_1' \approx \square'$, and some auxiliary store $\delta'$.
\end{theorem}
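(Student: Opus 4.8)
The plan is to prove the statement by induction on the length of the original execution $(\texttt{P} \mid \delta, \square) \hookrightarrow^* (\texttt{skip} \mid \delta, \square')$, establishing a slightly stronger invariant that tracks the correspondence between intermediate original and annotated configurations. Concretely, I would strengthen the claim so that at each step we have an original configuration $(\texttt{Q} \mid \delta, \square)$ and an annotated configuration $(\texttt{AQ} \mid \delta', \square_1)$ with $\texttt{AQ}$ obtained from $\texttt{Q}$ by the annotation transformation (extended to partially-executed programs, so that it also relates \texttt{runc} bodies, renamed loop copies sitting in $\beta$, and skip statements carrying identifier stacks), and with $\square_1 \approx \square$. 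The key observation driving the whole argument is the parenthetical remark made repeatedly in Section~\ref{sec:seman-ann}: erasing all blue components (the uses of $\delta$ and of identifiers) from the annotated rules yields exactly the traditional rules. So the annotated transition system simulates the original one step-for-step, the only extra effects being (i) pushes onto stacks in $\delta$, and (ii) attaching/consuming identifier tokens, neither of which is ever inspected by a side condition or premise that affects control flow or the data store/variable environment.

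The core of the induction is a single-step simulation lemma: if $(\texttt{Q} \mid \delta, \square) \hookrightarrow (\texttt{Q}' \mid \delta, \square')$ via some original rule $R$, and $\texttt{AQ} = ann(\texttt{Q})$ (in the extended sense) and $\square_1 \approx \square$, then there is a matching annotated step $(\texttt{AQ} \mid \delta'', \square_1) \atran{\circ} (\texttt{AQ}' \mid \delta''', \square_1')$ using the annotated counterpart of $R$, with $\texttt{AQ}' = ann(\texttt{Q}')$ and $\square_1' \approx \square'$. This is proved by case analysis on which rule fires. Most cases are routine: for assignments, declarations, removals, block open/close, sequential/parallel composition, and procedure call open/close, the annotated rule has identical premises modulo the $\func{next}{}$/$\func{previous}{}$ side conditions and the $\delta$-updates, and $\approx$ is preserved because the updates to $\sigma$, $\gamma$, $\mu$, $\beta$ are literally the same expressions, while the extra $\delta$ pushes are irrelevant to the equivalence relation (Definition~\ref{def:prog-states} does not mention $\delta$). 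The slightly more delicate cases are the while loop (rules \sos{W3a}--\sos{W6a}, \sos{W2a}): here I must check that the renamed loop body \func{reL}{\code{AP}} stored in $\beta_1$ remains the annotation of the renamed body stored in $\beta$ in the original semantics (i.e., $ann$ commutes with the renaming function and with \func{refW}{}), so that clause~3 of $\approx$ is maintained across iterations; similarly for procedures clause~2 must be maintained through \func{reP}{} and \func{refC}{} in rules \sos{G1a}--\sos{G3a}. I would isolate these commutation facts ("$ann$ commutes with renaming and reflection") as small auxiliary observations and invoke them uniformly.

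Having the single-step lemma, the theorem follows: set $\square_1$ to be the tuple obtained from $\square$ with an empty $\delta$ (so trivially $\square_1 \approx \square$ since $\approx$ ignores $\delta$, and initially all environments agree because $ann$ does not change the starting state), take $\texttt{AP} = ann(\texttt{P})$, and chain the single-step lemma along the given original execution of length $n$; this yields an annotated execution of the same length ending in a configuration whose program component is $ann(\texttt{skip}) = \texttt{skip I}$ for the appropriate stack \texttt{I}, whose state $\square_1'$ satisfies $\square_1' \approx \square'$, and with some final auxiliary store $\delta'$. (The $\atran{\circ}^*$ notation absorbs the fact that individual steps are sometimes identifier steps and sometimes skip steps.) The main obstacle I anticipate is not any single hard argument but the bookkeeping around the extended annotation relation for intermediate configurations: one has to be careful that the transformation relating a partially-executed original program to a partially-executed annotated program is closed under all the rules — in particular that a \texttt{skip} produced in the original semantics corresponds to \texttt{skip I} with a well-formed identifier stack, that the loop/procedure copies living in the environments stay in the relation, and that the non-deterministic choice of which parallel branch steps is matched on both sides. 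Making the induction hypothesis precise enough to carry all of this (rather than discovering a gap mid-proof) is the real work; once the invariant is stated correctly, each rule case is mechanical.
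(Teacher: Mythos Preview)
Your proposal is correct and follows essentially the same approach as the paper: induction on the length of the original execution, with a rule-by-rule correspondence between each traditional rule $[R]$ and its annotated counterpart $[Ra]$, observing that the only differences lie in identifier bookkeeping and pushes to $\delta$, neither of which affects the program-state components tracked by $\approx$. You are more explicit than the paper about two points it leaves implicit: the need to extend the annotation relation to partially executed configurations (so the induction hypothesis applies to intermediate programs containing evaluated conditions, \texttt{runc}, renamed loop bodies, and \texttt{skip I}), and the commutation of $ann$ with the renaming/reflection functions \func{reL}{}, \func{reP}{}, \func{refW}{}, \func{refC}{} needed to preserve clauses~2 and~3 of Definition~\ref{def:prog-states} across loop iterations and procedure calls; the paper's proof sketch simply asserts the rule-by-rule match without isolating these auxiliary facts.
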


\begin{proof}
This is via induction on the length of the execution of an original program \code{P}, namely $(\code{P} \mid \delta,\square) \otran^* (\code{skip} \mid \delta,\square')$, and the corresponding execution of the annotated version \code{AP}, namely $(\code{AP} \mid \delta,\square) \atran{\circ}^* (\code{skip} \mid \delta',\square')$. Each rule \sos{R} within the traditional semantics has a corresponding rule \sos{Ra} within the forward semantics (Section~\ref{sec:seman-ann}). We first consider all base cases (executions with transition length 1), namely original executions via the rules \sos{P3}, \sos{D1} and \sos{W1}. Each of the corresponding annotated rules, namely \sos{P3a}, \sos{D1a} and \sos{W1a}, is shown to behave identically with respect to the program state while 
modifying the auxiliary store. Therefore each base case holds.

We briefly describe the induction step. We first assume that Theorem~\ref{theorem:ann-result} holds for all program executions of length $k$ such that $k \geq 1$, namely $(\code{Q} \mid \delta_1,\square_1) \otran^* (\code{skip} \mid \delta_1,\square_1')$. Then we assume our original program has execution length $k$ + $1$, namely $(\code{T} \mid \delta_2,\square_2) \otran^* (\code{skip} \mid \delta_2,\square_2')$. This assumed execution can then be rewritten to state the first step via some transition rule \sos{R} as $(\code{T} \mid \delta_2,\square_2) \otran (\code{T$'$} \mid \delta_2,\square_2'') \otran^* (\code{skip} \mid~\delta_2,\square_2')$.

Considering each rule \sos{R} from Section~\ref{sec:seman-ann}, we can then see that there exists a corresponding transition rule of the annotated execution that behaves identically w.r.t. the store. In some cases, for example \sos{S1a}, \sos{P1a} and \sos{I2a}, there is no difference other than the presence of the auxiliary store (which is not used). All other rules, including \sos{D1a}, \sos{I4a} and \sos{H1a}, behave identically w.r.t the program state while differing on the use of the auxiliary store (as required). Finally the induction hypothesis can be applied to the remaining, shorter execution (namely $(\code{T$'$} \mid \delta_2,\square_2'') \otran^* (\code{skip} \mid \delta_2,\square_2')$). Therefore Theorem~\ref{theorem:ann-result} is valid.
\end{proof}

\section{Inversion} \label{sec:inv}
The previous sections introduced, and proved correct, a modified forward execution that saves information necessary for reversal. In this section, we describe the process of using this information to reverse the execution. 

Inversion produces our inverted version. This is of the same syntax as annotated versions (Figure~\ref{fig:syntax-ann}), but with \code{IP} and \code{IS} representing inverted programs and statements respectively). The function \emph{inv}: $\setOf{AP} \rightarrow \setOf{IP}$ is defined in Figure~\ref{fig:func-inv}, takes an annotated program and produces the corresponding inverted version. The inversion of a destructive assignment produces the same statement. This may be counter-intuitive, however the additional reversal information saved onto $\delta$ allows the statement to be correctly reversed. We do not use RESTORE statements similar to those used by RCC and discussed in Section~\ref{sec:related-work}. Maintaining the original statement both helps our proof of correctness, and the application of our approach to debugging~\cite{JH2019}, helping a debugger to understand the current position within the reversal more easily. It is important to note here that the inverted version is a forward-executing program, simulating reversal by beginning in the final program state and restoring to the initial state. To achieve this, an inverse execution must use both the identifiers and lost information.

\begin{figure}[!t]
{\footnotesize \begin{align*}
&inv(\texttt{$\varepsilon$}) = \texttt{$\varepsilon$} \\
&inv(\texttt{AS;AP}) = inv(\texttt{AP}); inv(\texttt{AS}) \\
&inv(\texttt{AP par AQ}) = inv(\texttt{AP}) \texttt{ par } inv(\texttt{AQ}) \\
&inv(\texttt{skip I}) = \texttt{skip I} \\
&inv(\texttt{X = e (pa,A)}) = \texttt{X = e (pa,A)} \\
&inv(\texttt{X += e (pa,A)}) = \texttt{X -= e (pa,A)} \\
&inv(\texttt{X -= e (pa,A)}) = \texttt{X += e (pa,A)} \\
&inv(\texttt{if In b then AP else AQ end (pa,A)}) = \\&\phantom{==} \texttt{if In b then $inv($AP$)$ else $inv($AQ$)$ end (pa,A)} \\
&inv(\texttt{while Wn b do AP end (pa,A)}) = \texttt{while Wn b do $inv($AP$)$ end (pa,A)} \\
&inv(\texttt{begin Bn AP end}) = \texttt{begin Bn $inv($AP$)$ end} \\
&inv(\texttt{var X = e (pa,A)}) = \texttt{remove X = e (pa,A)} \\
&inv(\texttt{arr[n] name (pa,A)}) = \texttt{remove arr[n] name (pa,A)} \\
&inv(\texttt{proc Pn n is AP end (pa,A)}) = \texttt{remove Pn n is $inv($AP$)$ end (pa,A)} \\
&inv(\texttt{call Cn n (pa,A)}) = \texttt{call Cn n (pa,A)} \\
&inv(\texttt{runc Cn AP end A}) = \texttt{runc Cn $inv($AP$)$ A} \\
&inv(\texttt{remove Pn n is AP end (pa,A)}) = \texttt{proc Pn n is $inv($AP$)$ end (pa,A)} \\
&inv(\texttt{remove arr[n] name (pa,A)}) = \texttt{arr[n] name (pa,A)} \\
&inv(\texttt{remove X = e (pa,A)}) = \texttt{var X = e (pa,A)} 
\end{align*} }
\vspace{-.4cm}
\caption{Partial definition of the Inversion function $inv$, shown in full in Appendix~\ref{full-inv-appen}.}
\label{fig:func-inv}
\end{figure} 

\subsection{Using Captured Interleaving Order}
All interleaving decisions when running an inverted program are made using the identifiers assigned to statements during forward execution. Using the execution sequence and beginning with the statement with the highest previously used identifier (via \func{previous}{}), the reverse execution is determined by using identifiers in descending order. 

\begin{example} \label{ex:inv-iden}
Recall Example~\ref{ex:fwd-iden-use} of identifier use. The reverse execution begins with \func{previous}{} = \code{9} (the final identifier used during forward execution). The first execution step is either opening the inverted while loop (identifier \code{8}) or the assignment \code{Y = 12} (identifier \code{9}). Identifiers used in descending order determine which must happen, namely the assignment. This process repeats until all statements (and identifiers) have been inverted. $\myqed$
\end{example}

Using identifiers in this way follows \emph{backtracking order}, where statements are reversed in exactly the inverted order of the forward execution \cite{backtracking}. A possible relaxation of backtracking reversibility is named \emph{causal-consistent} reversibility \cite{VD2004,IP2007}, where a step of an execution can be reversed (out of backtracking order) provided all of its consequences have previously been undone. Currently we support a limited form of this relaxation. Skip steps, which have no effect on the program state, do not require identifiers and can be undone according to causal-consistent reversibility.

\subsection{Restoring Unrecoverable Information}
During an annotated execution we have saved unrecoverable information. Each statement can now be reversed, many of which will make use of this additional information. We note here that as reversal information is used, it is also removed from $\delta$. An initially empty auxiliary store will be empty again after complete inverse execution.

Assignments are reversed by restoring the variable or array element to the previous value it held. Destructive assignments retrieve this from $\delta$, while constructive assignments can easily be reversed (e.g. an increment becomes a decrement). Control flow of conditionals and while loops \ch{are} determined by retrieving boolean values from $\delta$. Removal statements are reversed by re-declaring the variable, array or procedure, using old values saved onto $\delta$. More details of this are given below when presenting operational semantics of inverted programs.  


\subsection{Operational Semantics of Inverted Programs}
We now use SOS to define a transition system describing inverse execution, correspondingly to in Section~\ref{sec:seman-ann}. Each \emph{configuration} is now composed of an inverse program \code{IP} and program state $\square$, written as $(\code{IP} \mid \square)$. Each transition between two configurations is part of the transition relation $\rtran{}$, defined as the least relation generated by the following set of transition rules. Let $\rtran{}^*$ be the reflexive and transitive closure of $\rtran{}$. We now give a syntax-directed, small step semantics that defines inverse execution. We again use $\hookrightarrow^*_{\code{a}}$ and $\hookrightarrow^*_{\code{b}}$ to represent a step of arithmetic and boolean expression evaluation respectively. As before \emph{identifier steps} are transitions using the arrow $\rtran{m}$, where \code{m} is an identifier, while skip steps use the arrow $\rtran{}_s$. Transitions that may, or may not, use an identifier are written using the arrow $\atran{\circ}$. As before, the components of the following rules introduced for reversal are indicated using the colour blue.


There is close correspondence between the rules here and in Section~\ref{sec:seman-ann}. Matching pairs of rules are named accordingly, for example \sos{D1a} and \sos{D1r} with \sos{D1r} being the version of \sos{D1a} adapted for reverse execution. 
The semantics of both sequential and parallel composition and of block statements are omitted. Taking the corresponding rules in Section~\ref{sec:seman-ann}, the annotated versions are produced by changing all arrows to $\rtran{}$ and substituting \code{IP} and \code{IS} for \code{AP} and \code{AS} respectively. We also note that evaluation is only performed during reversal for constructive assignments. No boolean expressions or destructive assignment expressions are evaluated. This potentially saves execution time when compared to reversible languages \cite{CL1986}, each of which evaluates post-conditions. 

\vspace{.2cm}
\noindent \textbf{Assignment} Reversing a destructive assignment restores the variable (\sos{D1r}) or array element (\sos{D2r}) to the old value saved on $\delta$ (e.g. $\sigma[\code{l} \mapsto \code{v}]$). A constructive assignment is reversed by evaluating the expression and incrementing (\sos{C1r}, \sos{C4r}) or decrementing (\sos{C2r}, \sos{C3r}) the variable or array element respectively. 

{\footnotesize
\begin{align*}
&\text{[D1r]} \quad \frac{\gs{\code{A} = \code{m:A$'$} \quad \code{m} = \func{previous}{} \quad \code{$\delta($X$)$} = \code{(m,v):X$'$}} \quad  \code{$evalV($X,pa,$\gamma)$} = \code{l}}{(\code{X = e (pa,A)} \mid \delta,\sigma,\square) \rtran{m} (\code{skip \gs{A$'$}} \mid \gs{\delta[\code{X/X$'$}]},\sigma[\code{l} \mapsto \code{v}],\square)} \\[6pt]
& \text{[D2r]} \quad \frac{\begin{aligned} & \gs{\code{A} = \code{m:A$'$}  \quad \code{m} = \func{previous}{}} \quad (\code{e pa} \mid \sigma,\gamma,\square) \hookrightarrow^*_{\code{a}} (\code{n} \mid \sigma,\gamma,\square) \quad \code{$\delta($name$)$} = \code{(m,v):name$'$} \\[-3pt] & \code{$evalV($name,pa,$\gamma)$} = \code{l} \quad \sigma' = \sigma[(\sigma(\code{l})+\code{n}) \mapsto \code{v}]  \quad \gs{\delta' = \delta[\code{name/name$'$)}]} \end{aligned}}{(\code{name[e] = e$'$ (pa,A)} \mid \sigma,\gamma,\delta,\square) \rtran{m} (\code{skip \gs{A$'$}} \mid \sigma',\gamma,\delta',\square)} \\[6pt]
&\sos{C1r} \quad \frac{\gs{\code{A} = \code{m:A$'$}  \quad \code{m} = \func{previous}{}} \quad (\code{e pa} \mid \sigma,\gamma,\square) \hookrightarrow^*_{\code{a}} (\code{v} \mid \sigma,\gamma,\square) \quad \code{$evalV($X,pa,$\gamma)$} = \code{l}}{(\code{X += e (pa,A)} \mid \delta,\sigma,\gamma,\square) \rtran{m} (\code{skip \gs{A$'$}} \mid \delta,\sigma[\code{l} \mapsto \sigma(\code{l}) + \code{v}],\gamma,\square)} \\[6pt]
&\sos{C2r} \quad \frac{\gs{\code{A} = \code{m:A$'$}  \quad \code{m} = \func{previous}{}} \quad (\code{e pa} \mid \sigma,\gamma,\square) \hookrightarrow^*_{\code{a}} (\code{v} \mid \sigma,\gamma,\square) \quad \code{$evalV($X,pa,$\gamma)$} = \code{l}}{(\code{X -= e (pa,A)} \mid \delta,\sigma,\gamma,\square) \rtran{m} (\code{skip \gs{A$'$}} \mid \delta,\sigma[\code{l} \mapsto \sigma(\code{l}) - \code{v}],\gamma,\square)} \\[6pt]
& \sos{C3r} \quad \frac{\begin{aligned} & \gs{\code{A} = \code{m:A$'$}  \quad \code{m} = \func{previous}{}} \quad (\code{e pa} \mid \sigma,\gamma,\square) \hookrightarrow^*_{\code{a}} (\code{n} \mid \sigma,\gamma,\square) \quad (\code{e$'$ pa} \mid \sigma,\gamma,\square) \hookrightarrow^*_{\code{a}} (\code{v} \mid \sigma,\gamma,\square) \\[-3pt] & \code{$evalV($name,pa,$\gamma)$} = \code{l} \quad \sigma' = \sigma[(\sigma(\code{l})+\code{n}) \mapsto (\sigma(\code{l})+\code{n}) - \code{v}] \end{aligned}}{(\code{name[e] -= e$'$ (pa,A)} \mid \delta,\sigma,\gamma,\square) \rtran{m} (\code{skip \gs{A$'$}} \mid \delta,\sigma',\gamma,\square)}  \\[6pt]
& \sos{C4r} \quad \frac{\begin{aligned} & \gs{\code{A} = \code{m:A$'$}  \quad \code{m} = \func{previous}{}} \quad (\code{e pa} \mid \sigma,\gamma,\square) \hookrightarrow^*_{\code{a}} (\code{n} \mid \sigma,\gamma,\square) \quad (\code{e$'$ pa} \mid \sigma,\gamma,\square) \hookrightarrow^*_{\code{a}} (\code{v} \mid \sigma,\gamma,\square) \\[-3pt] & \code{$evalV($name,pa,$\gamma)$} = \code{l} \quad \sigma' = \sigma[(\sigma(\code{l})+\code{n}) \mapsto (\sigma(\code{l})+\code{n}) + \code{v}] \end{aligned}}{(\code{name[e] += e$'$ (pa,A)} \mid \delta,\sigma,\gamma,\square) \rtran{m} (\code{skip \gs{A$'$}} \mid \delta,\sigma',\gamma,\square)} 
\end{align*} }%

\vspace{.2cm}
\noindent \textbf{Conditional Statement} Opening a conditional does not evaluate the condition and instead retrieves a true (\sos{I1rT}) or false (\sos{I1rF}) value from $\delta$ (e.g. \code{$\delta($B$)$} = \code{(m,T):B$'$}). The true (\sos{I2r}) or false (\sos{I3r}) branch is then executed, before the conditional closes (\sos{I4r} or \sos{I5r}) using an identifier.

{\footnotesize
\begin{align*}
&\sos{I1rT} \quad \frac{\gs{\code{A} = \code{m:A$'$} \quad \code{m} = \func{previous}{} \quad \code{$\delta($B$)$} = \code{(m,T):B$'$}}}{(\code{if In b then IP else IQ end (pa,A)} \mid \delta,\square) \rtran{m} (\code{if In T then IP else IQ end (pa,\gs{A$'$})} \mid \gs{\delta[\code{B/B$'$}]},\square)} \\[6pt]
&\sos{I1rF} \quad \frac{\gs{\code{A} = \code{m:A$'$} \quad \code{m} = \func{previous}{} \quad \code{$\delta($B$)$} = \code{(m,F):B$'$}}}{(\code{if In b then IP else IQ end (pa,A)} \mid \delta,\square) \rtran{m} (\code{if In F then IP else IQ end (pa,\gs{A$'$})} \mid \gs{\delta[\code{B/B$'$}]},\square)} \\[6pt]
&\sos{I2r} \quad \frac{(\code{IP} \mid \square) \rtran{\circ} (\code{IP$'$} \mid \square')}{(\code{if In T then IP else IQ end (pa,A)} \mid \square) \rtran{\circ} (\code{if In T then IP$'$ else IQ end (pa,A)} \mid \square')} \\[6pt]
&\sos{I3r} \quad  \frac{(\code{IQ} \mid \square) \rtran{\circ} (\code{IQ$'$} \mid \square')}{(\code{if In F then IP else IQ end (pa,A)},\square) \rtran{\circ} (\code{if In F then IP else IQ$'$ end (pa,A)} \mid \square')} \\[6pt]
&\sos{I4r} \quad \frac{\gs{\code{A} = \code{m:A$'$} \quad \code{m} = \func{previous}{}}}{(\code{if In T then skip I else IQ end (pa,A)} \mid \square) \rtran{m} (\code{skip \gs{A$'$}} \mid \square)} \\[6pt]
&\sos{I5r} \quad \frac{\gs{\code{A} = \code{m:A$'$} \quad \code{m} = \func{previous}{}}}{(\code{if In F then AP else skip I end (pa,A)} \mid \square) \rtran{m} (\code{skip \gs{A$'$}} \mid \square)}
\end{align*} }%

\vspace{.2cm}
\noindent \textbf{While Loop} A while loop either has zero iterations (\sos{W1r}), or begins with a first iteration (\sos{W3r}). This first iteration inserts a renamed copy (via \func{IreL}{}) of the loop body into $\beta$ ($\beta[\code{Wn $\Rightarrow$ IR}]$). The loop body is then executed step-by-step (\sos{W5r}) until the loop body reaches skip. At this point, the iteration is finished and the loop is reset (\sos{W6r}). All other iterations are performed (\sos{W4r}) while \code{T} remains at the head of the boolean sequence ($\code{$\delta($W$)$} = \code{(m,T):W$'$}$). When this sequence contains an \code{F}, no more iterations are required and the loop closes (\sos{W2r}). 

{\footnotesize
\begin{align*}
&\sos{W1r} \hspace{.2cm} \frac{\gs{\code{m} = \func{previous}{} \quad \code{A} = \code{m:A$'$}} \quad \code{$\beta($Wn$)$} = \emph{und} \quad \gs{\code{$\delta($W$)$} = \code{(m,F):W$'$} }}{(\code{while Wn b do IP end (pa,A)} \mid \delta,\beta,\square) \rtran{m} (\code{skip \gs{A$'$}} \mid \gs{\delta[\code{W/W$'$}]},\beta,\square)} \\[6pt]
&\sos{W2r} \hspace{.2cm} \frac{\gs{\code{m} = \func{previous}{} \quad \code{A} = \code{m:A$'$}} \quad \code{$\beta($Wn$)$} = \emph{def} \quad \gs{\code{$\delta($W$)$} = \code{(m,F):W$'$}} }{(\code{while Wn b do IP end (pa,A)} \mid \delta,\beta,\square) \rtran{m} (\code{skip \gs{A$'$}} \mid \gs{\delta[\code{W/W$'$}]},\beta[\code{Wn}],\square)} \\[6pt]
&\sos{W3r} \hspace{.2cm} \frac{\begin{aligned}  &\gs{\code{m} = \func{previous}{} \quad \code{A} = \code{m:A$'$}} \quad \code{$\beta($Wn$)$} = \emph{und} \quad  \gs{\code{$\delta($W$)$} = \code{(m,T):W$'$} \quad \code{$\delta($WI$)$} = \code{(m,C):WI$'$}} \\[-3pt] & \code{IR} = \code{while Wn T do IP$'$ end (pa,A$'$)} \quad \gs{\delta' = \delta[\code{W/W$'$}, \code{WI/WI$'$}]} \quad \beta' = \beta[\code{Wn $\Rightarrow$ IR}]  \end{aligned}}{(\code{while Wn b do IP end (pa,A)} \mid \delta,\beta,\square) \rtran{m} (\code{while Wn T do IP$'$ end (pa,\gs{A$'$})} \mid\delta',\beta',\square)} \\[3pt] & \phantom{\text{[W3a]} \quad} \text{where } \code{IP$'$} = \func{IreL}{\func{setAI}{\code{IP},\code{C}}} \\[6pt]
&\sos{W4r} \hspace{.2cm} \frac{\begin{aligned} &\gs{\code{m} = \func{previous}{} \quad \code{A} = \code{m:A$'$}} \quad \code{$\beta($Wn$)$} = \emph{def} \quad  \gs{\code{$\delta($W$)$} = \code{(m,T):W$'$} \quad \code{$\delta($WI$)$} = \code{(m,C):WI$'$}} \\[-3pt] &\code{IR} = \code{while Wn b do } \func{IreL}{\code{IP}} \code{ end (pa,A$'$)} \quad \gs{\delta' = \delta[\code{W/W$'$}]}  \quad \beta' = \beta [\code{Wn $\Rightarrow$ IR}] \end{aligned}}{(\code{while Wn b do IP end (pa,A)} \mid \delta,\beta,\square) \rtran{m} (\code{while Wn T do } \func{IreL}{\code{IP}} \code{ end (pa,\gs{A$'$})} \mid \delta',\beta',\square)}  \\[6pt]
&\sos{W5r} \hspace{.2cm} \frac{\code{$\beta($Wn$)$} = \emph{def} \quad (\code{IP} \mid \delta,\beta,\square) \rtran{\circ} (\code{IP$'$} \mid \delta',\beta',\square')}{(\code{while Wn T do IP end (pa,A)} \mid \delta,\beta,\square) \rtran{\circ} (\code{while Wn T do IP$'$ end (pa,A)} \mid \delta',\beta'',\square')} \\[3pt] & \phantom{\text{[W3a]} \quad} \text{where }  \beta'' = \beta'[\func{refW}{\code{Wn},\code{IP$'$}}] \\[6pt]
&\sos{W6r} \hspace{.2cm} \frac{\code{$\beta($Wn$)$} = \code{while Wn b do IP end (pa,A)}}{(\code{while Wn T do skip I end (pa,A)} \mid \delta,\beta,\square) \rtran{}_s (\code{while Wn b do IP end (pa,A)} \mid \delta,\beta,\square)}
\end{align*} }%

\vspace{.2cm}
\noindent \textbf{Variable, Procedure and Array Declaration} A declaration reverses a removal statement. Variable declaration (\sos{L1r}) recreates the necessary mappings and initialises this to the final value held on $\delta$ ($\sigma[\code{l} \mapsto \code{v$'$}]$ and $\gamma[(\code{X},\code{Bn}) \Rightarrow \code{l}]$). Procedure declaration (\sos{L2r}) recreates the basis mapping from the inverted program without using reversal information ($\mu[\code{Pn} \Rightarrow \code{(n,IP)}]$). Array declaration (\sos{L3r}) recreates the array by initialising a block of consecutive memory locations reserving a block of memory and initialising each element to their final values on $\delta$ ($\sigma[\code{l} \mapsto \code{l1},\{\code{l1},\ldots,\code{l1}\code{+n}\} \mapsto \delta(\code{name},\code{m})]$).

{\footnotesize
\begin{align*}
&\sos{L1r} \quad \frac{\gs{\code{A} = \code{m:A$'$} \quad \code{m} = \func{previous}{} \quad \code{$\delta($X$)$} = \code{(m,v$'$):X$'$}} \quad \func{nextLoc}{\sigma} = \code{l} \quad \code{pa} = \code{Bn:pa$'$} }{(\code{var X = e (pa,A)} \mid \delta,\sigma,\gamma,\square) \rtran{m} (\code{skip \gs{A$'$}} \mid \gs{\delta[\code{X/X$'$}]},\sigma[\code{l} \mapsto \code{v$'$}],\gamma[(\code{X},\code{Bn}) \Rightarrow \code{l}],\square)} \\[6pt]
&\sos{L2r} \quad \frac{\gs{\code{A} = \code{m:A$'$} \quad \code{m} = \func{previous}{}}}{(\code{proc Pn n is IP (pa,A)} \mid \mu,\square) \rtran{m} (\code{skip \gs{A$'$}} \mid \mu[\code{Pn} \Rightarrow \code{(n,IP)}],\square)} \\[6pt]
& \sos{L3r} \quad \frac{\begin{aligned} & \gs{\code{A} = \code{m:A$'$} \quad \code{m} = \func{previous}{}} \quad \func{nextLoc}{\sigma} = \code{l} \quad \func{nextLocBlock}{\sigma,n} = \code{l1} \quad \code{pa} = \code{Bn:pa$'$} \\[-3pt]& \sigma' = \sigma[\code{l} \mapsto \code{l1},\{\code{l1},\ldots,\code{l1}\code{+n}\} \mapsto \delta(\code{name},\code{m})]  \quad \gamma' = \gamma[(\code{name},\code{Bn}) \Rightarrow \code{l}]  \end{aligned}}{(\code{arr[n] name (pa,A)} \mid \sigma,\gamma,\square) \rtran{m} (\code{skip \gs{A$'$}} \mid \sigma',\gamma',\square)} 
\end{align*} }%

\vspace{.2cm}
\noindent \textbf{Variable, Procedure and Array Removal} Removal statements undo the effect of declaration statements. 
Variable removal (\sos{H1r}) removes the mappings for this variable and restores the memory location to \code{0} ($\sigma[\code{l} \mapsto \code{0}]$). Procedure removal (\sos{H2r}) simply removes the renamed copy of the body from $\mu$ ($\mu[\code{Pn}]$). Array removal (\sos{H3r}) removes the mapping for this array, restoring each memory location it used to \code{0} ($\sigma[\code{l} \mapsto \code{0},\{\code{l1},\ldots,\code{l1}\code{+n}\} \mapsto 0]$). 

{\footnotesize
\begin{align*}
&\sos{H1r} \quad \frac{\gs{\code{A} = \code{m:A$'$} \quad \code{m} = \func{previous}{}} \quad \code{$\gamma($X,Bn$)$} = \code{l} \quad \code{pa} = \code{Bn:pa$'$}}{(\code{remove X = e (pa,A)} \mid \delta,\sigma,\gamma,\square) \rtran{m} (\code{skip \gs{A$'$}} \mid \delta,\sigma[\code{l} \mapsto \code{0}],\gamma[(\code{X},\code{Bn})],\square)} \\[6pt]
&\sos{H2r} \quad \frac{\gs{\code{A} = \code{m:A$'$} \quad \code{m} = \func{previous}{}} \quad \code{$\mu$(Pn)} = \emph{def}}{(\code{remove Pn n is IP (pa,A)} \mid \mu,\square) \rtran{m} (\code{skip \gs{A$'$}} \mid \mu[\code{Pn}],\square)} \\[6pt]
& \sos{H3r} \quad \frac{\begin{aligned} & \gs{\code{A} = \code{m:A$'$} \quad \code{m} = \func{previous}{}} \quad \code{pa} = \code{Bn:pa$'$} \quad \code{$\gamma($name,Bn$)$} = \code{l} \quad \code{l1} = \sigma(\code{l}) \\[-3pt]& \sigma' = \sigma[\code{l} \mapsto \code{0},\{\code{l1},\ldots,\code{l1}\code{+n}\} \mapsto 0]  \quad \gamma' = \gamma[(\code{name},\code{Bn})]  \end{aligned}}{(\code{remove arr[n] name (pa,A)} \mid \sigma,\gamma,\square) \rtran{m} (\code{skip \gs{A$'$}} \mid \sigma',\gamma',\square)} 
\end{align*} }%

\vspace{.2cm}
\noindent \textbf{Procedure Call} Opening an inverse procedure call (\sos{G1r}) reverses the closure by recreating the copy of the procedure body in $\mu$, using the inverse renaming function \func{IreP}{} to include the appropriate stack names ($\mu[\code{Cn} \Rightarrow \code{(n,IP$'$)}]$). The body is then executed (\sos{G2r}), before the call is closed (\sos{G3a}) by removing the copy from $\mu$ ($\mu[\code{Cn}]$).

{\footnotesize
\begin{align*}
&\sos{G1r} \quad \frac{\gs{\code{A} = \code{m:A$'$} \quad \code{m} = \func{previous}{}} \quad \func{evalP}{\code{n,pa}} = \code{Pn} \quad \mu(\code{Pn}) = \code{(n,IP)} \quad \gs{\code{$\delta($Pr$)$} = \code{(m,C):Pr$'$}}  }{(\code{call Cn n (pa,A)} \mid \delta,\mu,\square) \rtran{m} (\code{runc Cn IP$'$ end \gs{A$'$}} \mid \gs{\delta[\code{Pr/Pr$'$}]}, \mu[\code{Cn} \Rightarrow \code{(n,IP$'$)}],\square)} \\&\phantom{[RG1] \quad} \text{where } \code{IP$'$} = \func{IreP}{\func{setAI}{\code{IP},\code{C}},\code{Cn}} \\[6pt]
&\sos{G2r} \quad \frac{(\code{IP} \mid \mu,\square) \rtran{\circ} (\code{IP$'$} \mid \mu',\square')}{(\code{runc Cn IP end A} \mid \mu,\square) \rtran{\circ} (\code{runc Cn IP$'$ end A} \mid \mu'[\emph{refC(}\code{Cn},\code{IP$'$}\emph{)}],\square')} \\[6pt]
&\sos{G3r} \quad \frac{\gs{\code{A} = \code{m:A$'$} \quad \code{m} = \func{previous}{}} \quad \code{$\mu$(Cn)} = \emph{def}}{(\code{runc Cn skip I end A} \mid \mu,\square) \rtran{m} (\code{skip \gs{m:A}} \mid \mu[\code{Cn}],\square)}
\end{align*} }%

\vspace{.2cm}

\subsection{Inverted Odd-Even Transposition Sort}
Before we discuss how inverted Odd-Even Transposition Sort works, we shall show the reversal of a single step of execution of the annotated version that we have presented in Section~\ref{annotated-OETS}. Recall that this step is the first assignment to the second array element, and the final program states $\delta'$, $\sigma'$ and $\square$ are produced by the forward execution of this statement. These are the starting states for the reversal of this assignment. We consider the inverse program \code{IP} whose next statement to execute (reverse) is this assignment, such that \code{IP} = \code{begin b1 l[1] = 3 (pa,$\langle$2$\rangle$); IQ end}. Note that \code{IQ} is the inverted version of the statements still left to reverse, namely those on lines~55--56. The inference tree is shown below. 

{\footnotesize
$$ \begin{prooftree}
\Hypo{\code{m} = \func{previous}{}}
\Hypo{\delta(\code{l}) = \code{(m,v):X$'$}}
\Hypo{\code{$evalV($X,pa,$\gamma)$} = \code{l}}
\Infer3[\sos{D2r}]{(\code{l[1] = 3 (pa,$\langle$2$\rangle$);} \mid \delta',\sigma',\square) \rtran{m} (\code{skip $\langle\rangle$;} \mid \sigma'',\delta'',\square)}    
\Infer1[\sos{S1a}]{(\code{l[1] = 3 (pa,$\langle$2$\rangle$); IQ} \mid \delta',\sigma',\square) \rtran{m} (\code{skip $\langle\rangle$; IQ} \mid \sigma'',\delta'',\square)}
\Infer1[\sos{B1a}]{(\code{begin b1 l[1] = 3 (pa,$\langle$2$\rangle$); IQ end} \mid \delta',\sigma',\square) \rtran{m} (\code{begin b1 skip $\langle\rangle$; IQ end} \mid \sigma'',\delta'',\square)}           
\end{prooftree} $$ }

This step produces the states $\sigma''$ = $\sigma[(\sigma(\code{l})+\code{n}) \mapsto \code{v}]$  and $\delta'$ = $\delta[\code{name/name$'$)}]$. Recalling that \code{v} = \code{0}, this means that $\sigma''$ and $\delta''$ are equivalent to $\sigma$ and $\delta$ respectively, and therefore that the reversal is correct. 

\begin{figure}[t!]
{\small \begin{lstlisting}[xleftmargin=2.0ex,mathescape=true,multicols=2,basicstyle=\scriptsize]
$\kw{begin b1.0}$   
 $\kw{arr[5]}~ \code{l;} \quad{\color{blue} ~\langle 78 \rangle};$
 $\kw{while w1.0}~ \code{(count1 < 4)}~\kw{do}$ 
  $\code{count -= 1} \quad{\color{blue} ~\langle 76,66,56,31 \rangle};$
  ${\color{green} \code{//odd phase}}$       
  $\kw{par \{}$
   $\kw{if i4.0}~ \code{(l[1] > l[2])}~\kw{then}$
    $\kw{begin b4.0}$
     $\kw{var}~ \code{temp = 0} \quad{\color{blue} ~\langle 29 \rangle};$
     $\code{l[2] = temp} \quad{\color{blue} ~\langle 28 \rangle};$
     $\code{l[1] = l[2]} \quad{\color{blue} ~\langle 27 \rangle};$
     $\code{temp = l[1]} \quad{\color{blue} ~\langle 26 \rangle};$
     $\kw{remove}~ \code{temp = 0} \quad{\color{blue} ~\langle 25 \rangle}$
    $\kw{end}$
   $\kw{end} \quad{\color{blue} ~\langle 73,72,65,63,50,47,30,23 \rangle}$
  $\kw{\}}$ $\kw{\{}$
   $\kw{if i5.0}~ \code{(l[3] > l[4])}~\kw{then}$
    $\code{begin b5.0}$
     $\kw{var}~ \code{temp = 0} \quad{\color{blue} ~\langle 54 \rangle};$
     $\code{l[4] = temp} \quad{\color{blue} ~\langle 53 \rangle};$
     $\code{l[3] = l[4]} \quad{\color{blue} ~\langle 52 \rangle};$
     $\code{temp = l[3]} \quad{\color{blue} ~\langle 51 \rangle};$
     $\kw{remove}~ \code{temp = 0} \quad{\color{blue} ~\langle 49 \rangle}$
    $\kw{end}$
   $\kw{end} \quad{\color{blue} ~\langle 75,74,64,62,55,48,24,22 \rangle}$
  $\kw{\}};$   
  ${\color{green} \code{//even phase}}$   
  $\kw{par \{}$
   $\kw{if i2.0}~ \code{(l[0] > l[1])}~\kw{then}$
    $\kw{begin b2.0}$
     $\kw{var}~ \code{temp = 0} \quad{\color{blue} ~\langle 42,18 \rangle};$
     $\code{l[1] = temp} \quad{\color{blue} ~\langle 40,17 \rangle};$
     $\code{l[0] = l[1]} \quad{\color{blue} ~\langle 38,14 \rangle};$
     $\code{temp = l[0]} \quad{\color{blue} ~\langle 36,12 \rangle};$ 
     $\kw{remove}~ \code{temp = 0} \quad{\color{blue} ~\langle 35,10 \rangle}$
    $\kw{end}$
   $\kw{end} \quad{\color{blue} ~\langle 70,68,61,58,45,33,21,8 \rangle}$
  $\kw{\}}$ $\kw{\{}$
   $\kw{if i3.0}~ \code{(l[2] > l[3])}~\kw{then}$
    $\kw{begin b3.0}$
     $\kw{var}~ \code{temp = 0} \quad{\color{blue} ~\langle 44,19 \rangle};$
     $\code{l[3] = temp} \quad{\color{blue} ~\langle 43,16 \rangle};$
     $\code{l[2] = l[3]} \quad{\color{blue} ~\langle 41,19 \rangle};$
     $\code{temp = l[2]} \quad{\color{blue} ~\langle 39,13 \rangle};$
     $\kw{remove}~ \code{temp = 0} \quad{\color{blue} ~\langle 37,118 \rangle}$
    $\kw{end}$
   $\kw{end} \quad{\color{blue} ~\langle 71,69,60,59,46,34,20,9 \rangle}$
  $\kw{\}}$ 
 $\kw{end} \quad{\color{blue} ~\langle 77,67,57,32 \rangle};$

 ${\color{green} \code{//inverting array declaration}}$
 $\code{count = 0} \quad{\color{blue} ~\langle 6 \rangle};$
 $\code{l[4] = 6} \quad{\color{blue} ~\langle 5 \rangle};$ $\code{l[3] = 1} \quad{\color{blue} ~\langle 4 \rangle};$
 $\code{l[2] = 4} \quad{\color{blue} ~\langle 3 \rangle};$ $\code{l[1] = 3} \quad {\color{blue} ~\langle 2 \rangle};$
 $\code{l[0] = 7} \quad{\color{blue} ~\langle 1 \rangle};$
 $\kw{remove[5]}~ \code{l} \quad{\color{blue} ~\langle 0 \rangle}$
$\kw{end}$

\end{lstlisting} }
\caption{Inverted Odd-Even Transposition Sort. Empty else branches and paths omitted.}
\label{fig:odd-even-sort-inv}
\end{figure}

\begin{example} \label{sec:sort-inv}
Recall the executed annotated version of the sorting program in ~Example~\ref{sec:sort-ann}, which we denote as \code{AP}. The inverted version of this program is produced via \func{inv}{\code{AP}}
and is shown in Figure~\ref{fig:odd-even-sort-inv}.

The overall statement order has been inverted, including within nested constructs. For example, the block is the highest level construct, containing local declarations (reversing forward removal statements), a block body and local removals (reversing declaration statements). The block body still consists of a loop which contains two sequentially composed parallel statements, but with the `odd' phase reversed first. 

Complete execution of the program reverses the annotated execution. The number of iterations of the loop, and the branches executed for all conditional statements, are determined using boolean values (\code{1} and \code{0} used for true and false respectively) from the stacks \code{W} and \code{B}. Pairs stored on the stacks \code{temp}, \code{count} and \code{l} are used to restore variables or array elements to their previous values. For example, the pair \code{(54,7)} indicates the statement using identifier \code{54} overwrote the value \code{7} of the variable \code{temp}. So, the variable \code{temp} is therefore restored to \code{7}. 

As reversal information is used, it is removed from $\delta$. Given that $\delta$ was initially empty, after complete inverse execution $\delta$ is again empty. This means our reversibility is garbage-free. $\myqed$
\end{example}

\subsection{Proving Correctness of Inversion}
We now state our inversion result. Theorem~\ref{theorem:inv-result} says that given the final state produced by the matching annotated execution, the reverse execution restores this to a state equivalent to that of before the forward execution. Equivalence between program states, where one is used in a reverse execution, is defined below (correspondingly to Definition~\ref{def:prog-states}). The inversion result also restores the auxiliary store, showing our reversibility to be garbage-free. 


\begin{definition}{(Equivalent inverse states)} \label{def:equiv-states-aux}
Let $\square$ be the tuple of environments \{$\sigma$,$\gamma$,$\mu$,$\beta$,$\delta$\} and $\square_1$ be the tuple of annotated environments \{$\sigma_1$,$\gamma_1$,$\mu_1$,$\beta_1$,$\delta_1$\}. $\square$ is \emph{equivalent} to $\square_1$, written $\square \approx \square_1$, if and only if:\vspace{-.2cm} 
\begin{enumerate}
\item Same as part 1 in Definition~\ref{def:prog-states}.\vspace{-.4cm} 
\item \func{d}{$\beta$} = \func{d}{$\beta_1$};  $\beta(\textup{\code{Wn}})$ = \textup{\code{AP}}, $\beta_1(\textup{\code{Wn}})$ = \textup{\code{IP}}, \code{IP} = $inv($\code{AP}$)$ for all \code{Wn} $\in$ \func{d}{$\beta$}.\vspace{-.4cm} 
\item \func{d}{$\mu$} = \func{d}{$\mu_1$};  $\mu(\textup{\code{Pn}})$ = (\textup{\code{n,AP}}), $\mu_1(\textup{\code{Pn}})$~=~(\textup{\code{n,IP}}) and \code{IP} = $inv($\code{AP}$)$ for all \code{Pn} $\in$ \func{d}{$\mu$}.\vspace{-.4cm}
\item $\delta$ = $\delta_1$
\end{enumerate}
\end{definition}
 
\begin{theorem} \label{theorem:inv-result}
Let \textup{\texttt{P}} be an original program, \textup{\texttt{AP}} be the corresponding annotated version \textup{\texttt{$ann($P$)$}}, and \code{AP$'$} be the executed version of \code{AP}. Further let \textup{\texttt{$\square$}} be the tuple \{$\sigma$,$\gamma$,$\mu$,$\beta$,$\delta$\} of all initial state environments. 

If $ (\code{P} \mid \square) \hookrightarrow^* (\code{skip} \mid \square') $ for some program state $\square'$, and therefore by Theorem~\ref{theorem:ann-result} the annotated execution $ (\code{AP} \mid \square_1) \atran{\circ}^* (\code{skip I} \mid \square_1') $ exists for some \code{I} and program state $\square_1'$ such that $\square_1' \approx \square'$, then there exists a corresponding inverse execution $ (\func{inv}{\code{AP$'$}} \mid \square_2') \rtran{\circ}^* (\code{skip I$'$} \mid \square_2) $, for some program states $\square_2'$, $\square_2$, such that $\square_2' \approx \square_1'$ and $\square_2 \approx \square_1$. Provided this holds, Definition~\ref{def:equiv-states-aux} states that $\delta_2$ = $\delta_1$, showing the reversal to be garbage-free.
\end{theorem}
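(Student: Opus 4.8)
The plan is to prove Theorem~\ref{theorem:inv-result} by induction on the length $n$ of the annotated execution $(\code{AP} \mid \square_1) \atran{\circ}^* (\code{skip I} \mid \square_1')$, mirroring the structure of the proof of Theorem~\ref{theorem:ann-result} but reading the correspondence backwards: the $i$-th step of the inverse execution undoes the $(n+1-i)$-th step of the annotated execution. The central ingredient I would establish first is a single-step inversion lemma: for every annotated rule \sos{Ra} with conclusion $(\code{AS} \mid \square) \atran{\circ} (\code{AS$'$} \mid \square')$, the matching rule \sos{Rr} is applicable to the (executed) inverse of \code{AS$'$} in state $\square'$ and yields the (executed) inverse of \code{AS} in a state that agrees with $\square$ on $\sigma$ and $\gamma$, is $\approx$-equivalent on the remaining components, and has the pair(s) that \sos{Ra} pushed onto $\delta$ popped back off. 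For the purely structural rules (\sos{S1a}, \sos{S2a}, \sos{P1a}--\sos{P3a}, \sos{B1a}, \sos{B2a}, \sos{I2a}, \sos{I3a}) this is immediate, since $inv$ commutes with sequential composition in reversed order, with parallel composition and with blocks, and no reversal information is touched; for the data-manipulating rules (\sos{D1a}/\sos{D2a}, \sos{I1aT}--\sos{I5a}, \sos{L1a}--\sos{L3a}, \sos{H1a}--\sos{H3a}) it follows by reading the forward rule's $\delta$-push and $\sigma$/$\gamma$-update against the inverse rule's $\delta$-pop and inverse update, using that a constructive increment inverts a decrement and vice versa.

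For the base cases ($n=1$) the annotated execution is a single step via \sos{P3a}, \sos{D1a} or \sos{W1a} (the axioms that can reduce a whole program in one step), and one checks directly that \sos{P3r}, \sos{D1r} and \sos{W1r} restore the state and empty the relevant stack, noting that after the forward step consumed identifier $m$ via \func{next}{}, the side condition $\code{m} = \func{previous}{}$ of the inverse rule is met. For the inductive step I would peel off the first annotated step $(\code{AP} \mid \square_1) \atran{\circ} (\code{AP$_1$} \mid \square_1'') \atran{\circ}^* (\code{skip I} \mid \square_1')$, apply the induction hypothesis to the shorter tail to obtain an inverse execution of $\func{inv}{\code{AP$_1'$}}$ from a state $\approx \square_1'$ down to one $\approx \square_1''$, and then append the one extra inverse step given by the single-step lemma applied to the first annotated step — which, because $inv$ reverses composition order, is exactly the last step of the inverse execution of $\func{inv}{\code{AP$'$}}$. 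The identifier discipline is what makes this append forced and unambiguous: \func{next}{} hands out $0,1,2,\dots$ during annotation, so \func{previous}{} returns them as $\dots,2,1,0$, and every inverse rule that consumes an identifier requires it to be simultaneously the current output of \func{previous}{} and the head of the statement's stack; hence the inverse execution can only take steps in exactly the reverse order of the forward ones, which is precisely what rules out the faulty reversals of Examples~\ref{ex:incorrect-reverse-order} and~\ref{ex:incorrect-reverse-order-cond}. Throughout, the equivalence $\approx$ of Definition~\ref{def:equiv-states-aux} is maintained because the inverse rules that touch $\beta$ and $\mu$ reinstate copies that are syntactically $inv$ of the annotated copies (clauses~2 and~3), and the garbage-freeness claim $\delta_2 = \delta_1$ is then read off clause~4 of the final equivalence $\square_2 \approx \square_1$, since the auxiliary store the inverse execution must return to is the initial one of the annotated execution.

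I expect the main obstacle to be the while-loop and procedure-call cases, where a renamed copy of the loop (resp.\ procedure) body is installed in $\beta$ (resp.\ $\mu$) for the duration of each iteration (resp.\ call), the identifiers used inside it are reflected into that copy via \func{refW}{} / \func{refC}{}, and they are finally dumped onto the \code{WI} / \code{Pr} stacks by \sos{W2a} / \sos{G3a}. Reversal must reconstruct those identifier sequences in the right order: \sos{W3r}/\sos{W4r} must, using the pair popped from \code{WI}, reinstall via \func{setAI}{} and \func{IreL}{} a body whose statement stacks carry exactly the identifiers that the corresponding forward iteration used, so that the ensuing \sos{W5r}/\sos{W6r} steps can pop them in reverse, and this must compose correctly across several iterations and across nesting, with \sos{G1r}/\sos{G3r} behaving analogously for calls. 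So the bulk of the work is a bookkeeping argument that the composite of the forward renaming/reflection machinery with its inverse counterpart is the identity on the program text (up to $inv$) and on the identifier stacks, and that \func{getAI}{} and \func{setAI}{} are mutually inverse on the sequences stored in \code{WI} and \code{Pr}. Once this is in place, the loop and call cases slot into the same induction as the rest.
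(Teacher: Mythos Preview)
Your single-step correspondence between \sos{Ra} and \sos{Rr} is essentially correct, and you are right that the loop/call bookkeeping with \func{getAI}{}, \func{setAI}{}, \func{refW}{} and \func{refC}{} is the place where most of the tedious checking lives. However, the induction you sketch has two genuine gaps that the paper identifies explicitly and builds machinery to close; without that machinery the inductive step as you describe it does not go through.

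First, after peeling off the first annotated step $(\code{AP}\mid\square_1)\atran{\circ}(\code{AP$_1$}\mid\square_1'')$, the residual \code{AP$_1$} is in general only a \emph{partially executed} program (e.g.\ a conditional with its condition already evaluated, or a parallel composition with one side advanced). Applying the induction hypothesis to \code{AP$_1$} would require knowing that $\func{inv}{\code{AP$_1'$}}$ runs all the way to \code{skip}, but it does not: it should stop at the position matching where the forward execution was interrupted. Your ``append one more inverse step'' move therefore has nothing to append to. The paper handles this by introducing \code{abort}, skip equivalents, and an extended inversion $\func{inv$^+$}{}$ (Section~\ref{subsec:abort}) so that the IH applied to a partial program yields an execution to a skip equivalent rather than to \code{skip}; the stronger Proposition~\ref{prop:stronger-theorem-4} is then what is actually proved by induction, via the mutually inductive Statement and Program Properties (Lemmas~\ref{lem:sp} and~\ref{lem:pp}).

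Second, your claim that the identifier discipline makes the inverse order ``forced and unambiguous'' is only true for identifier steps. Skip steps (\sos{S2a}, \sos{P3a}, \sos{B2a}, \sos{W6a}) carry no identifier, so across a parallel composition they can be interleaved differently forwards and backwards; a naive step-for-step match fails. The paper fixes this by first normalising every standard execution to an equivalent \emph{uniform} execution in which skip steps fire as soon as available (Definition~\ref{def:uniform-ex}, Lemmas~\ref{claim:1}--\ref{lem:equiv-uniform-rev}), proving the result for uniform executions, and then observing that every uniform reverse execution is in particular a standard one. Without this normalisation your induction cannot align the skip steps of the forward and inverse runs.
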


Before proving this result, we consider the following challenges:
\begin{enumerate}
\item Parallel composition allows different interleaving orders, \vspace{-.3cm}
\item Partially executed programs may not reach skip when inverted, or may not stop at the corresponding position.
\end{enumerate}

\begin{figure}[t] 
\centering
{\small $\begin{array}{ccccc}
(\code{P} \mid \square) \atran{\circ}^* (\code{skip I} \mid \square') & & ${\Huge $\Rightarrow$}$1 & & (\code{P} \mid \square) \uatran{\circ}{^*} (\code{skip I} \mid \square') \\
& &  \\
\text{? }${\Huge$ \downarrow $}$ & & & & ${\Huge $\Downarrow$}$2 \\
& &  \\
(\func{inv$^+$}{\code{P}} \mid \square_1') \rtran{\circ}^* (\texttt{Q} \mid \square_1) & & 3${\Huge $\Leftarrow$}$ & & (\func{inv$^+$}{\code{P}} \mid \square_1') \urtran{\circ}{^*} (\texttt{Q} \mid \square_1) 
\end{array}$ }
\caption{Diagram representation of proof outline}
\label{diagram-theorem2-proof}
\end{figure}

The first challenge considers different interleaving orders of the same program introduced via our parallel operator. Statements of two programs in parallel may interleave in different orders, producing different behaviours. Some interleavings however may be ``equivalent'', typically as a result of skip steps performed in a different order. This can lead to a mismatch of skip steps between a forward and reverse execution, impacting our proof of correctness. In Section~\ref{subsec:interl} we introduce the notion of \emph{uniform execution} to avoid this issue. 

The second challenge concerns partially executed programs (defined later). When reversing a complete program, we judge success as whether the inverted version executes completely to skip. Imagine proving this via induction, where after the first step of an execution we no longer have a complete execution. The remaining execution will not reverse to skip, but to some intermediate position, from which the first step can be reversed to reach skip. In Section~\ref{subsec:abort} we overcome this by introducing \emph{abort} statements and \emph{skip equivalents}. 

Given the two challenges, our method of proving the reversibility described here to be correct is illustrated in Figure~\ref{diagram-theorem2-proof}. We first convert an arbitrary forward execution into an equivalent uniform execution ($\Rightarrow_1$), then prove our property ($\Downarrow_2$) for uniform executions, before converting back into a standard reverse execution. We now discuss this in more detail.

\subsubsection{Interleaving Parallel Composition} \label{subsec:interl}
The semantics of parallel composition allows multiple different interleaving orders of a given program. Different executions are equivalent when the order of identifier steps is identical while the order of skip steps is different. Since no skip steps alter the program state, each of these executions produces the same final program state (hence executions are equivalent). Inverse skip steps can also be applied in several different orders while preserving an equivalent execution. This results in the potential for constructs to remain ``open'' (so to speak) during an inverse execution for longer than during the forward execution, adding difficultly to the task of matching rule uses between a forward and reverse execution.  

We avoid these difficulties by considering executions as a composition of segments. Firstly let any arbitrary execution be named a \emph{standard execution}. We now introduce a \emph{uniform execution}, where each identifier step is followed by all of the possible closing skip steps. A sequence starting with an identifier step and followed by all its closing skip steps is called  a \emph{segment}. All skip steps within a segment are said to have been `caused' by the preceding identifier step.

\begin{definition}{(Uniform execution)} \label{def:uniform-ex}
A \emph{uniform execution} is a standard program execution where all skip steps are performed as soon as they are available. A step of a uniform execution is represented using$\uatran{\circ}{}$and$\urtran{\circ}{}$for forward and reverse execution respectively. 
\end{definition}

We remark that skip steps within a uniform execution are not interleaved across a parallel composition. This is due to identifier steps executed on one side not being able to make skip steps available on the other. Therefore any two equivalent standard executions (that is two transition sequences where the order of identifier steps is identical, with skip steps potentially different) will have the same uniform version. All skip steps from each sequence will be performed as soon as they are available, which will be identical in both cases. Consider Example~\ref{ex:uniform-exec} that shows a uniform version of a standard execution.

\begin{example} \label{ex:uniform-exec}
Consider the concurrent program in Figure~\subref*{ex:standUni-program} and one possible standard execution captured with identifiers \code{0}--\code{2}. Figure~\subref*{ex:standUni-executions} shows first the order of rule uses under this standard execution (with the identifier it used or that caused it). \sos{S2a} and \sos{B2a} are not executed as soon as available. The uniform execution contains unchanged order of identifier steps (an equivalent execution), but with skip steps performed as soon as possible.  $\myqed$
\end{example}

\newsavebox\exStandUniProgram
\newsavebox\exStandUniEx

\begin{lrbox}{\exStandUniProgram}
\begin{minipage}{0.25\textwidth}
\centering
{\small \begin{lstlisting}[xleftmargin=4.0ex,mathescape=true,basicstyle=\footnotesize]
$\kw{par \{} $
  $\kw{begin b1.0}$
    $\code{X = 20} \quad{\color{blue}\langle \code{1} \rangle} $ 
  $\kw{end}$
$\kw{\} \{}$
  $\code{Y = 2} \quad{\color{blue}\langle \code{0} \rangle} \code{;}$
  $\code{Z = 4} \quad{\color{blue}\langle \code{2} \rangle} $
$\kw{\}}$
\end{lstlisting} }
\end{minipage}
\end{lrbox}

\begin{lrbox}{\exStandUniEx}
\begin{minipage}{0.70\textwidth}

{\small \quad\textbf{Standard Execution}

 \phantom{\quad\quad} \sos{D1a} \quad \sos{D1a} \quad \sos{S2a} \quad \sos{D1a} \quad \sos{B1a} \quad \sos{P3a} \\
 \phantom{\quad\quad} \phantom{\hspace{.1cm}} $\atran{0}$ \hspace{.7cm} $\atran{1}$ \hspace{.5cm} $\atran{}_s$ \hspace{.5cm} $\atran{2}$ \hspace{.6cm} $\atran{}_s$ \hspace{.5cm} $\atran{}_s$}

{\small \quad\textbf{Uniform Execution}

 \phantom{\quad\quad} \sos{D1a} \quad \sos{S2a} \quad \sos{D1a} \quad \sos{B1a} \quad \sos{D1a} \quad \sos{P3a} \\
 \phantom{\quad\quad} \phantom{\hspace{.1cm}} $\atran{0}$ \hspace{.7cm} $\atran{}_s$ \hspace{.5cm} $\atran{1}$ \hspace{.5cm} $\atran{}_s$ \hspace{.6cm} $\atran{2}$ \hspace{.5cm} $\atran{}_s$}

\end{minipage}

\end{lrbox}

\begin{figure}[t]
\subfloat[Small program\label{ex:standUni-program}]{ \usebox\exStandUniProgram }
\subfloat[Standard and uniform execution\label{ex:standUni-executions}]{ \usebox\exStandUniEx }
\caption{The standard execution and a corresponding uniform execution of a given program. Each execution is displayed as an ordered sequence of rule labels, with the arrow from the corresponding transition shown below each. For example, the rule \sos{D1a} above the arrow $\atran{1}$ represents executing the statement \code{X = 20}, while the rule \sos{B1a} above the arrow $\atran{}_s$ represents closing the block on line~4. Intermediate programs and states are omitted.}
\label{ex:standUni}
\end{figure}

One further consideration is of programs that begin with skip steps. This is a special case of a uniform execution, where all skip steps performed at the beginning have not been `caused' by any identifier step. Such an execution performs all initial skip steps in a fixed order, namely from the left side of a parallel before those of the right, and in a depth first manner. This ensures all equivalent executions have a single uniform version. Lemma~\ref{lem:equiv-uniform-fwds} states that for any standard forward execution, there exists a uniform forward execution that has the same behaviour (with respect to the program state).  A proof of this property uses Lemma~\ref{claim:1} to show that statement reordering is valid.

\begin{lemma} \label{claim:1}
Given a sequence of two transitions where each is from the opposite side of a parallel, these two transitions may be reordered in two specific circumstances (depending on the type of each transition). Let \code{P} be an annotated program and $\square$ be the tuple $\allstores$ of initial program state environments.

\begin{enumerate}
\item If $ (\code{P} \mid \square) \atran{}_{S_1} (\code{P$'$} \mid \square) \atran{}_{S_2} (\code{P$'''$} \mid \square) $ exists, for some programs \code{P$'$} and \code{P$'''$}, such that the transition $\atran{}_{S_1}$ is from the opposite side of a parallel composition to the transition $\atran{}_{S_2}$, then there exists the reordered execution $ (\code{P} \mid \square) \atran{}_{S_2} (\code{P$''$} \mid \square) \atran{}_{S_1} (\code{P$'''$} \mid \square) $ for some program \code{P$''$}.

\item If $ (\code{P} \mid \square) \atran{m} (\code{P$'$} \mid \square') \atran{}_{S} (\code{P$'''$} \mid \square') $ exists, for some programs \code{P$'$}, \code{P$'''$} and program state $\square'$, such that the transition $\atran{m}$ is from the opposite side of a parallel composition to the transition $\atran{}_{S}$, then there exists the reordered execution $(\code{P} \mid \square) \atran{}_{S} (\code{P$''$} \mid \square) \atran{m} (\code{P$'''$} \mid \square')$ for some program \code{P$''$}.
\end{enumerate}

\begin{proof}
Recall that skip steps do not change the program state in any way, and the observation that any step from one side of a parallel composition cannot `cause' skip steps on the other side of the same parallel.  

Consider Part~1 of Lemma~\ref{claim:1}. The execution of two skip steps with one on each side of a parallel means that neither skip step is the cause of the other. With one of the steps requiring \sos{P1a} and the other needing \sos{P2a} to occur within its inference tree, the definition of these rules shows the execution of one cannot affect the execution of the other. This means the order of these skip steps cannot change the final program state or structure, meaning the reordering of these is permitted. The intermediate program \code{P$''$} is produced and is as \code{P$'''$} but with the side of the parallel that contains $\atran{}_{S_1}$ still in its initial form. 

Consider Part~2 of Lemma~\ref{claim:1}. With each step being on opposite sides of a parallel, the skip step $\atran{}_s$ cannot have been caused by the identifier step $\atran{m}$. This means that the skip step must have been caused by an identifier step $\atran{k}$ such that $k < m$. As in Part~1, it is therefore guaranteed that both steps were available prior to the execution of either. The rules \sos{P1a} and \sos{P2a} show that the program structure of one side of a parallel cannot be changed via the other side. With the skip step having no effect on the program state, the identifier step $\atran{m}$ and the skip step $\atran{}_s$ can therefore be reordered. The intermediate program \code{P$''$} is as \code{P$'''$} but with the side containing the identifier step $\atran{m}$ still in its original form. Both parts are valid meaning Lemma~\ref{claim:1} holds, as required.
\end{proof}

\end{lemma}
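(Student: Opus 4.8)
The plan is to prove this as a ``parallel moves'' (diamond) lemma resting on two structural facts about the annotated semantics of Section~\ref{sec:seman-ann}: (i) a step taken inside one branch of a parallel composition never \emph{causes} (i.e.\ never newly enables) a step inside the sibling branch, and (ii) steps taken inside opposite branches of a parallel composition are independent, so executing one neither disables the other nor alters its effect, and the two execution orders reach the same configuration. Fact~(i) is exactly the observation recorded just before Definition~\ref{def:uniform-ex}; I would establish it by induction on the derivation of a transition, noting that \sos{P1a} and \sos{P2a} are the only rules that act within a parallel and that each rewrites and inspects solely its own side, while the side condition of \sos{P3a} (and similarly the premises of \sos{B2a}, \sos{S2a}, \sos{W6a}) is met only once the relevant subprogram is already \code{skip}, so a closing step is always attributed to whichever step last produced that \code{skip}, never to a step on the opposite side. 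Fact~(ii) is the standard local-confluence argument for disjoint redexes, again by induction on the two derivations, with \sos{P1a}/\sos{P2a} providing the frame that isolates the two branches.

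With these facts in hand I would assemble the two parts uniformly. For each part, let $C[\,\cdot\,]$ be the common evaluation context above the least parallel composition $\code{Q$_1$ par Q$_2$}$ at which the two given transitions diverge (this context is identical for both since both start from \code{P}), so one transition acts inside $\code{Q$_1$}$ and the other inside $\code{Q$_2$}$. For Part~1, the first transition $\atran{}_{S_1}$ is a skip step, so by Fact~(i) it does not cause $\atran{}_{S_2}$; hence $\atran{}_{S_2}$ is already available from \code{P}. Firing it first yields \code{P$''$}, which is \code{P$'''$} with the $S_1$-side left in its original shape, and by Fact~(ii) $\atran{}_{S_1}$ remains available there and produces \code{P$'''$}; since both are skip steps the state stays $\square$ throughout. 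For Part~2 the argument is the same with $\atran{m}$ in the role of $\atran{}_{S_1}$: an identifier step on one side cannot cause a skip step on the other (Fact~(i)), so $\atran{}_{S}$ was already enabled at \code{P} — in fact enabled by some earlier identifier step $\atran{k}$ with $k<m$ — and we may fire $\atran{}_{S}$ first (leaving the state $\square$) and then $\atran{m}$, reaching the same \code{P$'''$} and the same state $\square'$, because the skip step contributes nothing to the state and the effect of the identifier step is order-independent by Fact~(ii).

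I expect the main obstacle to be the rigorous proof of Fact~(i) together with making precise the informal phrase ``from the opposite side of a parallel composition''. One has to define exactly which parallel node a transition is ``taken at'' — the lowest \sos{P1a}/\sos{P2a} node in its derivation, equivalently the least common ancestor parallel of the two derivations — show this notion is well defined, and then check the boundary cases, most importantly that a closing step \sos{P3a}, \sos{B2a}, \sos{S2a}, or \sos{W6a} occurring as $\atran{}_{S}$ is never genuinely on the ``opposite side'' of the step that produced the \code{skip} it consumes, so the hypothesis of the lemma already excludes the troublesome configurations. Once the ``caused-by'' relation between steps is made precise and shown to respect the parallel structure, the remaining confluence bookkeeping — describing \code{P$''$} and verifying that the two derivations reassemble into the claimed reordered execution — is routine.
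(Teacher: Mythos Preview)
Your proposal is correct and follows essentially the same argument as the paper: both rest on the observations that skip steps leave the state unchanged and that a step derived through \sos{P1a} cannot affect, enable, or be enabled by a step derived through \sos{P2a}, and both describe \code{P$''$} as \code{P$'''$} with the not-yet-fired side restored. Your version is considerably more rigorous---the paper treats the independence of opposite sides and the ``caused-by'' relation as self-evident from the shape of \sos{P1a}/\sos{P2a}, whereas you propose to establish them by induction on derivations and to make the evaluation-context and ``opposite side'' notions precise---but the underlying proof idea is the same.
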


\begin{lemma}{(Equivalent uniform execution)} \label{lem:equiv-uniform-fwds} Given an arbitrary standard forward or reverse program execution, there exists an equivalent uniform execution that modifies the program state equivalently. Let \code{P} be an arbitrary program and $\square$ be the tuple $\allstores$ of initial program state environments.
\begin{enumerate}
\item If $(\code{P} \mid \square) \atran{\circ}^* (\code{skip I} \mid \square')$ for some \code{I} and final program state $\square'$, then there exists an equivalent uniform execution $(\code{P} \mid \square) \uatran{\circ}{^*} (\code{skip I$'$} \mid \square'') $ for \code{I$'$} and program state $\square''$ such that \code{I$'$} = \code{I} and $\square''$ = $\square'$.
\item If $(\code{P} \mid \square) \rtran{\circ}^* (\code{skip I} \mid \square') $ for some \code{I} and final program state $\square'$, then there exists an equivalent uniform execution $(\code{P} \mid \square) \urtran{\circ}{^*} (\code{skip I$'$} \mid \square'') $ for \code{I$'$} and program state $\square''$ such that \code{I$'$} = \code{I} and $\square''$ = $\square'$.
\end{enumerate} 
\end{lemma}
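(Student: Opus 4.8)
The plan is to prove both parts by induction on the number of skip steps that are *delayed* in the given standard execution, that is, skip steps that occur later in the transition sequence than the first point at which they became available. If this number is zero the execution is already uniform and there is nothing to do. For the inductive step, consider the earliest position in the execution where a skip step $\sigma$ is available but the step actually taken is a different transition $\tau$. I would then ``bubble'' $\sigma$ leftwards past the block of transitions that were taken before it, one transition at a time, using Lemma~\ref{claim:1} to justify each swap; each such swap strictly decreases the total delay count (or at worst keeps it fixed while moving $\sigma$ earlier), and none of the swaps changes the final program state or the sequence of identifiers used, since skip steps leave $\square$ untouched and, by Lemma~\ref{claim:1}, reordering is state-preserving. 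Repeating this brings us to an execution with strictly fewer delayed skip steps, to which the induction hypothesis applies, yielding the desired uniform execution with $\code{I'} = \code{I}$ and $\square'' = \square'$.

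The one subtlety in the bubbling argument is that Lemma~\ref{claim:1} as stated only handles the case where the two adjacent transitions lie on opposite sides of a parallel composition. When $\sigma$ is adjacent to a transition $\tau$ that lies on the *same* side of every enclosing parallel (i.e. $\tau$ is ``in front of'' $\sigma$ sequentially, not concurrently), a swap is not needed: in that situation $\tau$ must itself be a transition whose execution did not consume or obstruct $\sigma$, and in fact in the sequential setting the availability of a closing skip step such as \sos{S2a}, \sos{B2a} or \sos{P3a} is never destroyed by a subsequent step that is not its own consumer, so we can argue that $\tau$ too was available earlier and the only way $\sigma$ got delayed is via interleaved concurrent steps — reducing us to the opposite-sides case. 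I would phrase this carefully by tracking, for the chosen delayed skip step $\sigma$, the chain of transitions between its availability point and its actual position, and observing that each of them is either (a) on the opposite side of a parallel from $\sigma$, so Lemma~\ref{claim:1} (Part~1 if it is itself a skip step, Part~2 if it is an identifier step) applies directly, or (b) already delayed itself and hence not the earliest violation, contradicting the choice of $\sigma$.

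For the special case flagged in the text — executions that begin with a run of skip steps not caused by any identifier step — the same induction applies, but the ``target'' uniform order of those initial skip steps is the fixed left-to-right, depth-first order; one shows by the same swapping argument (all such skip steps are pairwise on opposite sides of parallels or sequentially independent, so they commute freely) that any order can be rearranged into the canonical one without affecting $\square$ or $\code{I}$. Part~2 (reverse executions) is entirely symmetric: the reverse semantics $\rtran{}$ has skip steps $\rtran{}_s$ with exactly the same ``no effect on $\square$, cannot be caused across a parallel'' properties, and Lemma~\ref{claim:1}'s proof is insensitive to the direction of the arrows (indeed the text notes the reverse composition and block rules are obtained from the forward ones by relabelling arrows), so the identical induction goes through. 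The main obstacle I anticipate is purely bookkeeping: making the ``delay count'' and the notion of ``earliest violation'' precise enough that each application of Lemma~\ref{claim:1} is clearly legitimate and clearly makes progress, while handling the nested-parallel and sequential-composition cases uniformly; the mathematical content is entirely carried by Lemma~\ref{claim:1}.
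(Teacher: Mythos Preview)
Your approach shares its mathematical core with the paper's proof: both reduce everything to Lemma~\ref{claim:1} and reorder skip steps into their ``as soon as available'' positions. The paper, however, inducts on the \emph{length} of the execution and processes it segment by segment: it isolates the first identifier step, uses Lemma~\ref{claim:1} to pull forward any skip steps between it and the next identifier step that were caused earlier, and then recurses on the strictly shorter suffix.

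Your induction variable---the number of delayed skip steps---does not obviously decrease, and this is a genuine gap. When you bubble $\sigma$ from position $j$ to its availability point $i$, the transition $\tau_i$ originally at position $i$ is pushed to position $i{+}1$. If $\tau_i$ is itself a skip step (e.g.\ an initial skip on the other side of a parallel), it was already enabled at $C_i$ and is now one step late, so the delayed-skip count is unchanged: you have traded one delay for another. Your parenthetical about keeping the count fixed ``while moving $\sigma$ earlier'' hints at a secondary measure, but the earliest-violation position need not move either (the new violation is again at $i$, with $\tau_i$ now the victim). The simplest repair is to adopt the paper's induction on execution length, or to separate the argument into two phases: first eliminate skip steps delayed past an \emph{identifier} step (this count does strictly drop under bubbling), then reorder the remaining skip--skip blocks into canonical order by a finite permutation argument.

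On your case~(b): it never arises, and the reasoning you give for it is off. The cleaner observation---which deserves to be stated up front rather than treated as an exceptional case---is that any two distinct transitions enabled from the same configuration must diverge at some parallel composition, because every rule other than \sos{P1a}/\sos{P2a} is syntax-directed and deterministic. Hence every intermediate $\tau_k$ between $\sigma$'s availability point and its occurrence is automatically on the opposite side of some parallel from $\sigma$, and Lemma~\ref{claim:1} applies directly at each swap.
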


\begin{proof}
A proof of Lemma~\ref{lem:equiv-uniform-fwds} is by induction on the length of the execution using Lemma~\ref{claim:1}.  All executions are assumed to be of complete programs. We consider each of the two possible cases of execution, beginning with the special case of those that start with any number of skip steps. Such a complete execution, written $(\code{P} \mid \square) \atran{}_s^* (\code{P$'$} \mid \square) \atran{\circ}^* (\code{skip I} \mid \square')$, exists such that $(\code{P} \mid \square) \atran{}_s^* (\code{P$'$} \mid \square)$ contains all possible skip steps and that $(\code{P$'$} \mid \square) \atran{\circ}^* (\code{skip I} \mid \square')$ therefore begins with an identifier step. Since this execution is complete, all of the initial skip steps do not have a preceding identifier step that caused them. Therefore the initial skip steps can be reordered as shown in Part~1 of Lemma~\ref{claim:1}, such that all skip steps from the left side of a parallel occur before those from the right. With the execution $(\code{P} \mid \square) \atran{}_s^* (\code{P$'$} \mid \square)$ reordered into a uniform equivalent, application of the induction hypothesis to the remaining shorter execution $(\code{P$'$} \mid \square) \atran{\circ}^* (\code{skip I} \mid \square')$ (guaranteed to begin with an identifier step) shows the existence of a uniform equivalent.

We now consider executions of the form $(\code{P} \mid \square) \atran{m} (\code{P$'$} \mid \square') \atran{}_s^* (\code{P$''$} \mid \square') \atran{m+1} (\code{P$'''$} \mid \square'') \atran{\circ}^* (\code{skip I} \mid \square') $ that begin with an identifier step. To support use of the induction hypothesis, we do not require this execution to be complete. The execution $(\code{P$'$} \mid \square') \atran{}_s^* (\code{P$''$} \mid \square')$ may contain skip steps that were caused by a previous identifier step (using an identifier $k$ such that $k$ $<$ $m$ provided it exists). Using Part~2 of Lemma~\ref{claim:1}, all such skip steps can be reordered to appear prior to the identifier step using $m$. All other skip steps within the execution $(\code{P$'$} \mid \square') \atran{}_s^* (\code{P$''$} \mid \square')$ are guaranteed to have been caused by the transition $\atran{m}$. These cannot have been caused by an identifier rule that has not yet happened, meaning their current position already respects uniformity. The final consideration is of skip steps within $(\code{P$'''$} \mid \square'') \atran{\circ}^* (\code{skip I} \mid \square')$. It is possible for some of these steps to have been caused by any of the preceding identifier steps and are therefore required to be reordered. As before, this can be achieved using Part~2 of Lemma~\ref{claim:1}. From here, the remaining shorter execution beginning with the identifier step using $m+1$ can be shown to be uniform via the induction hypothesis. Therefore Lemma~\ref{lem:equiv-uniform-fwds} holds. 
\end{proof}

Lemma~\ref{lem:equiv-uniform-rev} corresponds to Lemma~\ref{lem:equiv-uniform-fwds} and shows that any uniform reverse execution can be relaxed into an equivalent standard version.

\begin{lemma}{(Equivalent standard execution)} \label{lem:equiv-uniform-rev} Given an arbitrary uniform forwards or reverse program execution, there exists an equivalent standard execution that modifies the program state equivalently. Let \code{P} be an arbitrary program and $\square$ be the tuple $\allstores$ of initial program state environments.
\begin{enumerate}
\item If $(\code{P} \mid \square) \uatran{\circ}{^*} (\code{skip I} \mid \square')$ for some \code{I} and program state $\square'$, then there exists an equivalent standard execution $ (\code{P} \mid \square) \atran{\circ}^* (\code{skip I$'$} \mid \square'') $ for \code{I} and program state $\square''$ such that \code{I$'$} = \code{I} and $\square''$ = $\square'$.
\item If $(\code{P} \mid \square) \urtran{\circ}{^*} (\code{skip I} \mid \square')$ for some \code{I} and program state $\square'$, then there exists an equivalent standard execution $ (\code{P} \mid \square) \rtran{\circ}^* (\code{skip I$'$} \mid \square'') $ for \code{I} and program state $\square''$ such that \code{I$'$} = \code{I} and $\square''$ = $\square'$.
\end{enumerate} 

\begin{proof}
Correspondingly to the proof of Lemma~\ref{lem:equiv-uniform-fwds}. This is via induction on the length of an execution using a version of Lemma~\ref{claim:1} for reverse execution.
\end{proof}
\end{lemma}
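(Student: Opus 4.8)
The plan is to argue by induction on the length of the execution, mirroring the proof of Lemma~\ref{lem:equiv-uniform-fwds} with all arrows reversed. I would first record the trivial half of the statement: by Definition~\ref{def:uniform-ex} a uniform execution is simply a standard execution in which the skip steps happen to be scheduled eagerly, so the given uniform execution already witnesses the claim with $\code{I'} = \code{I}$ and $\square'' = \square'$. The induction is only needed for the slightly stronger fact that is convenient when closing the square of Figure~\ref{diagram-theorem2-proof}: that the closing skip steps of a uniform execution can be re-scheduled (delayed) so as to reproduce an arbitrary equivalent standard execution, i.e. any transition sequence with the same order of identifier steps.

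The key preliminary step is to state and prove the reverse counterpart of Lemma~\ref{claim:1}: for a pair of consecutive $\rtran{}$-transitions lying on opposite sides of a parallel composition, if either (i) both are skip steps, or (ii) the first is an identifier step $\rtran{m}$ and the second a skip step, then the two transitions may be reordered without changing the resulting program text or program state. The argument copies that of Lemma~\ref{claim:1}: the $\rtran{}$-versions of \sos{P1a} and \sos{P2a} ensure that a step on one branch of a parallel never rewrites the other branch, and every reverse skip step -- the reuses of \sos{S2a}, \sos{B2a}, \sos{P3a} together with the loop-reset rule \sos{W6r} -- leaves $\sigma,\gamma,\mu,\beta$ and $\delta$ unchanged; hence such a skip step can neither have been caused by, nor can it affect, a step on the opposite branch, so both transitions are already enabled in the starting configuration and their order is immaterial.

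With this in hand, the induction step is as in Lemma~\ref{lem:equiv-uniform-fwds}. Part~1 (uniform forward $\Rightarrow$ standard forward) uses Lemma~\ref{claim:1} directly; part~2 (uniform reverse $\Rightarrow$ standard reverse) uses its reverse counterpart. A uniform reverse execution decomposes as an optional initial block of skip steps followed by segments, each a single identifier step $\rtran{m}$ together with the closing skip steps it caused. To reach a prescribed equivalent standard execution I would: reorder the leading skip steps, if any, by clause~(i); inside the first segment, either pull any skip step that the target execution places before $\rtran{m}$ forward past it by clause~(ii), or push the skip steps caused by $\rtran{m}$ rightwards to their target positions; and then apply the induction hypothesis to the strictly shorter suffix starting at the next identifier step. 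Every such swap preserves both the identifier sequence and the final state, so $\code{I'} = \code{I}$ and $\square'' = \square'$ are maintained, and the base case (an execution of length $0$ or $1$) is immediate.

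The main obstacle I anticipate is not the reordering bookkeeping but a careful justification of the reverse counterpart of Lemma~\ref{claim:1}. One has to verify that the extra machinery present only in reverse rules -- consuming identifiers through \func{previous}{} and popping the auxiliary store $\delta$ in rules such as \sos{D1r}, \sos{I1rT}, \sos{W3r} and \sos{G1r} -- is insensitive to whether a reverse skip step on the opposite branch is performed earlier or later. Because a reverse skip step changes neither $\delta$, nor the identifier counter, nor the state, moving it past a reverse identifier step on the other branch leaves all of that step's side conditions (its value of \func{previous}{}, its $\delta$-lookups, its evaluation contexts) exactly as before, so the swap is sound; the only rule requiring a little extra care is \sos{W6r}, which does rewrite program text, but that rewrite is confined to its own loop and, since construct identifiers stay unique throughout a well-formed program, cannot interfere with anything on the opposite branch. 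Turning this into an explicit case analysis over the reverse skip rules is the substantive part of the argument.
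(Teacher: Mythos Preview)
Your proposal is correct and follows essentially the same approach the paper sketches: induction on the length of the execution together with a reverse-execution analogue of Lemma~\ref{claim:1}. You additionally and correctly observe that the lemma as literally stated is immediate, since by Definition~\ref{def:uniform-ex} every uniform execution already is a standard execution; the paper itself invokes exactly this observation later when discharging the arrow $\Rightarrow_3$ in the proof of Theorem~\ref{theorem:inv-result}.
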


\subsubsection{Reversing Partially Executed Programs} \label{subsec:abort}
\begin{figure}[t]
{\small
\begin{align*}
\code{PP} &::= \code{$\varepsilon$} ~|~ \code{PS} ~|~ \code{PP};\code{AP} ~|~ \code{PP par PP} \\
\code{PS} &::= \code{skip I} ~|~  \code{X = E (pa,A)} ~|~  \code{name[E] = E (pa,A)}  ~|~ \texttt{X += E (pa,A)} ~|~ \texttt{X -= E (pa,A)} ~| \\ &\phantom{\code{::=}} \texttt{name[E] += E (pa,A)} ~|~ \texttt{name[E] -= E (pa,A)} ~| \\ &\phantom{\code{::=}} \code{if In B B then AP AP else AP AP end (pa,A)} ~| \\ &\phantom{\code{::=}} \code{if In T B then PP AP else AP AP end (pa,A)} ~| \\ &\phantom{\code{::=}} \code{if In F B then AP AP else PP AP end (pa,A)} ~| \\ &\phantom{\code{::=}}  \code{while Wn B do AP end (pa,A)} ~|~ \code{while Wn T do PP end (pa,A)} ~| \\ &\phantom{\code{::=}}  \code{begin Bn ABB ABB end} ~|~ \code{begin Bn PBB ABB end}  ~| \\ &\phantom{\code{::=}}\code{call Cn n (pa,A)} ~|~   \code{runc Cn AP AP end A} ~|~   \code{runc Cn PP AP end A} \\ 
\code{PBB} &::= \code{ADV};\code{ADA};\code{ADP};\code{AP};\code{ARP};\code{ARA};\code{ARV} ~~|~~ \code{ADA};\code{ADP};\code{AP};\code{ARP};\code{ARA};\code{ARV} ~~|~~ \code{ADP};\code{AP};\code{ARP};\code{ARA};\code{ARV} ~~| \\ &\phantom{\code{::=}} \code{AP};\code{ARP};\code{ARA};\code{ARV} ~~|~~ \code{ARP};\code{ARA};\code{ARV} ~~|~~ \code{ARA};\code{ARV} ~~|~~ \code{ARV} ~~|~~ \code{PP};\code{ARP};\code{ARA};\code{ARV}
\end{align*}}
\vspace{-.4cm}
\caption{Syntax of partially executed programs, where \code{PBB} is a partially executed block statement and \code{PP} is a partially executed program. Recall the definition of \code{AP} from Figure~\ref{fig:syntax-ann}.}
\label{fig:partial-prog-syntax}
\end{figure}

Recall that Theorem~\ref{theorem:inv-result} states that given an annotated execution that completes and reaches skip, the corresponding inverted program also completes and reaches skip. This is valid for all executions of complete programs. 

Consider the process of proving this by induction on the length of an execution. This typically requires a small number of initial execution steps to be explicitly stated, before using the induction hypothesis on the remaining execution. After the initial steps, the remaining program may no longer be complete:

\begin{definition}{(Partially executed program)} \label{def:partial-ex}
A partially executed program is a program produced as a result of partial execution of a complete program, respecting uniformity. Figure~\ref{fig:partial-prog-syntax} gives the syntax of partially executed programs. 
\end{definition}

A partially executed program, often referred to as a partial program, can contain conditions and expressions in evaluated form and does not require a one-to-one match between declaration and removal statements within a block. Figure~\ref{fig:partial-prog-syntax} defines a partially executed block \code{PBB}, which can be of eight different forms. The first form is the sequence of three types of declaration, then body and finally three types of removal statements. The subsequent six forms are obtained by removing progressively longer initial subsequences. Finally, the block body could be a partially executed program (\code{PP}) with all removal statements following.  Conditional and \code{runc} statements contain a second copy of their sub-programs, used within functions defined later to determine whether the execution of a statement has already started. These copies are introduced to aid our proof and are not necessary for reversal. In the following proofs, we only include second copies when required. For example, the function \emph{inv$^+$} introduced later must invert a conditional statement differently if its execution has already started, compared to when its execution has not. Partial programs respect uniformity meaning that the initial execution that produces it contains all of the available closing skip steps. Therefore executions only stop between segments, where each segment is an identifier step and all skip steps caused by it. Each partial program begins with an identifier step as a result, and is executed using the same (previously defined) operational semantics.

\begin{example} \label{ex:partial-prog}
The complete program (containing only a block) in Figure~\subref*{ex:partial-prog-original} has a matching removal statement for each declaration. A partial execution of this in Figure~\subref*{ex:partial-prog-partial} violates an assumption of complete programs as there is not a matching declaration statement for each removal. $\myqed$ 
\end{example}


\newsavebox\exPartialProgOrig
\newsavebox\exPartialProgPartial

\begin{lrbox}{\exPartialProgOrig}
\begin{minipage}{0.40\textwidth}
\centering
{\small \begin{lstlisting}[xleftmargin=4.0ex,mathescape=true,basicstyle=\scriptsize]
$\kw{begin b1.0} $
   $\kw{var} \code{ X = 20} \quad{\color{blue}\langle  \rangle} \code{;} $
   $\kw{var} \code{ Y = 40} \quad{\color{blue}\langle  \rangle} \code{;} $   
   $\code{Z = Z + X} \quad{\color{blue}\langle  \rangle} \code{;} $
   $\code{Z = Z + Y} \quad{\color{blue}\langle  \rangle} \code{;} $
   $\kw{remove} \code{ X = 20} \quad{\color{blue}\langle  \rangle} \code{;} $
   $\kw{remove} \code{ Y = 40} \quad{\color{blue}\langle  \rangle} \code{} $   
$\kw{end}$
\end{lstlisting} }
\end{minipage}
\end{lrbox}

\begin{lrbox}{\exPartialProgPartial}
\begin{minipage}{0.40\textwidth}
\centering
{\small \begin{lstlisting}[xleftmargin=4.0ex,mathescape=true,basicstyle=\scriptsize]
$\kw{begin b1.0} $
   $\code{Z = Z + Y} \quad{\color{blue}\langle  \rangle} \code{;} $
   $\kw{remove} \code{ X = 20} \quad{\color{blue}\langle  \rangle} \code{;} $
   $\kw{remove} \code{ Y = 40} \quad{\color{blue}\langle  \rangle} \code{} $   
$\kw{end}$
\end{lstlisting} }
\end{minipage}

\end{lrbox}

\begin{figure}[t]
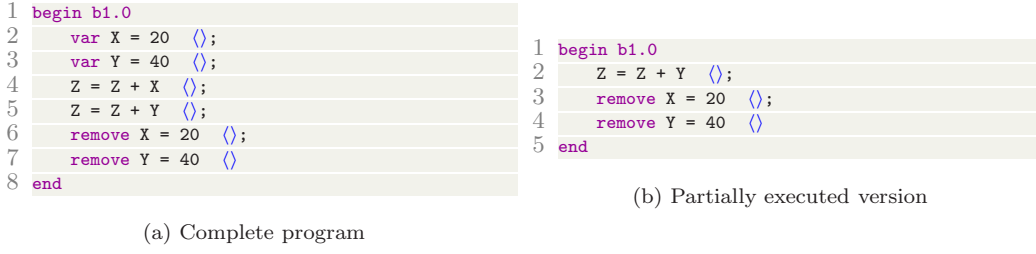

\subfloat[Complete program\label{ex:partial-prog-original}]{ \usebox\exPartialProgOrig }
\subfloat[Partially executed version\label{ex:partial-prog-partial}]{ \usebox\exPartialProgPartial }
\caption{An original program and a possible partially executed version}
\label{ex:partial-prog-and-original}
\end{figure}


The reversal of a partial program may not reach skip correctly. Consider the forward execution of a conditional statement beginning from the middle of the true branch. Reversal of only the remaining execution (the statements that finish this forward conditional) returns to a position in which the inverse conditional remains open and cannot close (the identifiers will likely not allow this). We note here that this is not sufficient if the branch contains skip statements (such as opening and closing of a block). Since these statements do not use identifiers, we cannot rely on identifiers to stop the execution correctly. In most cases this prevents the use of the induction hypothesis on a partial program. The program in Figure~\subref*{ex:partial-prog-partial} demonstrates a case that incorrectly reaches skip. After reversing the three assignments, the block could incorrectly close (as it is a skip step). Reversing a partial program requires the assumption of complete forward execution. As such, all reversal information will have been saved meaning the reverse execution may not stop at the desired position. To prevent such incorrect execution happening, we introduce an \emph{abort statement} to our syntax of partially executed programs (see Figure~\ref{fig:partial-prog-syntax}), written as \code{abort}. This has undefined behaviour and forcibly stops an execution whose only next step is an abort. Such \code{abort} statements cannot appear in original programs and are only used within the proof to follow.

Definition~\ref{def:skip-equiv} introduces \emph{skip equivalent} as the name given to the corresponding program code of the inverted execution of a partial program. 
\begin{definition}{(Skip equivalent)} \label{def:skip-equiv}
A skip equivalent of a program is either a single skip, a single abort or an identical version of the program where one or more sub-programs are either skip, abort or suitable skip equivalents. The function $\func{se}{$\code{P}$}$ in Figure~\ref{func-def:se} produces a skip equivalent of \code{P}.
\end{definition}

\begin{figure}[t]
{\footnotesize \begin{align*}
& \emph{se}(\code{$\varepsilon$})=\mbox{$\varepsilon$} \\
&\emph{se}(\code{AS;AP}) = \emph{se}(\code{AS$)$;AP} \\
&\emph{se}(\code{AP par AP}) = \emph{se}(\code{AP})~\code{par}~\emph{se}(\code{AP}) \\
&\emph{se}(\code{skip}) = \code{skip} \\
&\emph{se}(\code{abort}) = \code{abort} \\
&\emph{se}(\code{X = e (pa,A)}) = \code{skip (pa,A)} \\
&\emph{se}(\code{if In b b then AP AP else AQ AQ end (pa,A)}) = \code{skip (pa,A)}  \\
&\emph{se}(\code{if In T b then AP$'$ AP else AQ AQ end (pa,A)}) \\ \nonumber &\phantom{=} = \code{if In T b then $\emph{se}($AP$')$ AP else AQ AQ end (pa,A)}  \\
&\emph{se}(\code{while Wn b do AP end (pa,A)}) = \code{while Wn b do AP end (pa,A)} \text{ if started}  \\
&\emph{se}(\code{begin Bn AP AP end (pa,A)}) = \code{skip (pa,A)}   \\
&\emph{se}(\code{begin Bn AP AOP end (pa,A)}) = \code{begin Bn $\emph{se}($AP$)$ AOP end (pa,A)} \text{ if } \code{AP} \neq \code{AOP} \\
&\emph{se}(\code{runc Cn AP AP (pa,A)}) = \code{runc Cn $\emph{se}($AP$)$ AP (pa,A)}
\end{align*} }
\vspace{-.4cm}
\caption{A partial definition of the function \emph{se}, shown in full in Appendix~\ref{full-se-appen}. }
\label{func-def:se}
\end{figure}

\begin{example} \label{ex:skip-equiv}
Consider an annotated program \code{AP} = \code{if i1 T then AQ else AR end (pa,A)}, where \code{AQ} is a partially executed version of the true branch. The inverse version \code{IP} = \func{inv}{\code{AP}} should only invert \code{AQ}, leaving the conditional open. The skip equivalent that corresponds to this starting position is \code{if i1 T then skip I else} \func{inv}{\code{AR}} \code{end (pa,A)} (omitting second copies).  $\myqed$ 
\end{example}

With inverted partial programs not guaranteed to reach skip and instead a skip equivalent, we must ensure a reverse execution stops at the required position. For the partial forward execution to perform correctly, all previous steps of execution are assumed to have happened. With all reversal information of this previous execution available, the reverse execution may (unintentionally) continue past the desired skip equivalent.

\begin{example} \label{ex:stop-inv-ex-forcibly}
Recall the partial program in Figure~\subref*{ex:partial-prog-partial}. Inverse execution should invert only the remaining execution, namely the final assignment and both removals. Generating the inverse version based on this partial program produces a block with only those statements to invert. The block could incorrectly close (\sos{B2r}) leading to an incorrect program structure. $\myqed$
\end{example} 


\begin{figure}[t!]
{\footnotesize \begin{align} 
& \func{inv$^+$}{\code{$\varepsilon$},\square} = \varepsilon \\[1pt]
&\label{inv+:sc} \func{inv$^+$}{\code{S;P},\square} = \func{inv}{\code{P}}; \func{inv$^+$}{\code{S},\square} \\[1pt]
&\label{inv+:pc} \func{inv$^+$}{\code{P par Q},\square} = \func{inv$^+$}{\code{P},\square}~\code{par}~\func{inv$^+$}{\code{Q},\square} \\[1pt]
&\label{inv+:skip} \func{inv$^+$}{\code{skip I},\square} = \code{skip I; abort} \\[1pt]
&\label{inv+:cond-full} \nonumber \func{inv$^+$}{\code{if In ob ob then AP AP else AQ AQ end (pa,A)},\square} \\ &\phantom{=}= \code{if In ob ob then } \func{inv}{\code{AP}}~ \func{inv}{\code{AP}} \code{ else } \func{inv}{\code{AQ}}~\func{inv}{\code{AQ}} \code{ end (pa,A); abort} \\[1pt]
&\label{inv+:cond-true} \nonumber \func{inv$^+$}{\code{if In T ob then AP AP$'$ else AQ AQ end (pa,A)},\square} \\ &\phantom{=}= \code{if In ob ob then } \func{inv$^+$}{\code{AP}}~ \func{inv}{\code{AP$'$}} \code{ else } \func{inv}{\code{AQ}}~\func{inv}{\code{AQ}} \code{ end (pa,A)} \\ &\nonumber \text{ where } \code{AP} \neq \code{skip I} \text{ and } \code{AP $\neq$ AP$'$}\\[1pt]
&\label{inv+:while-diff-body} \func{inv$^+$}{\code{while Wn b do AP$'$ end (pa,A)},\square} = \code{while Wn ob do } \func{inv}{\code{AP}} \code{ end (pa,A)} \\ &\nonumber \text{ where } \code{AP} \text{ is such that }\beta(\code{Wn}) = \code{while Wn ob do AP end (pa,A)} \\[1pt]
&\func{inv$^+$}{\code{begin Bn AP AP$'$ end},\square} = \code{begin Bn }  \func{inv$^+$}{\code{AP}}~ \func{inv}{\code{AP}} \code{ end} \\ &\nonumber  \text{ where } \code{AP} \neq \code{AP$'$}\\[1pt]
&\func{inv$^+$}{\code{runc Cn AP AP$'$ (pa,A)},\square} = \code{call Cn n (pa,A)} 
\end{align} }%
\caption{A partial definition of the function \emph{inv$^+$}, shown in full in Appendix~\ref{full-inv+-appen}.}
\label{func-def:inv+}%
\end{figure}

We now define an extended inversion function \emph{inv$^+$}: $\setOf{AP} \rightarrow \setOf{IP}$ in Figure~\ref{func-def:inv+}, that takes an executed annotated program (either complete or partial) and returns the corresponding inverted version that stops at the desired skip equivalent. The function \emph{inv$^+$} calls the original inversion function \emph{inv} on all of the given program statements except the first statement. Should this statement be complete, the abort statement is sequentially composed. If the statement is partially executed, the abort statement is inserted at the appropriate position.

Manually inserting the abort statement via the inversion function \emph{inv$^+$} is sufficient in all cases except those of while loops or procedure calls. A partially executed while loop may have performed many iterations, with many left to execute (or reverse). We therefore define new semantics of loops and procedures that use abort statements to stop execution at a desired position. These rules are not shown here and instead are deferred to Appendix~\ref{appen-rules}.

\subsubsection{Proof of Theorem~\ref{theorem:inv-result}}
With the challenges associated with our inversion result addressed, we are now ready to prove Theorem~\ref{theorem:inv-result}. Recall Figure~\ref{diagram-theorem2-proof}, where proving each of the arrows $\Rightarrow_i$, for $1\leq i \leq 3$,  gives us Theorem~\ref{theorem:inv-result}. We first note that $\Rightarrow_1$ represents the restriction of a standard execution into an equivalent uniform execution, and recall the validation of this via Lemma~\ref{lem:equiv-uniform-fwds} in Section~\ref{subsec:interl}.

We focus on the arrow $\Rightarrow_2$. 
This represents a stronger version of Theorem~\ref{theorem:inv-result}, where all executions are uniform and the annotated program may be either complete or partially executed. The inverted version is now generated using the function \emph{inv$^+$} and its execution may reach either skip or a skip equivalent. Proposition~\ref{prop:stronger-theorem-4} states this stronger result.  

\begin{prop} \label{prop:stronger-theorem-4}
Let \textup{\texttt{P}} be an original program, \textup{\texttt{AP}} be the annotated program \textup{\texttt{$ann($P$)$}}, and \code{AP$'$} be the executed version of \code{AP}. Further let \textup{\texttt{$\square$}} be the tuple \{$\sigma$,$\gamma$,$\mu$,$\beta$,$\delta$\} of all initial program state environments. 

If $ (\code{P} \mid \square) \hookrightarrow^* (\code{skip} \mid \square') $ exists for some program state $\square'$, and therefore by Theorem~\ref{theorem:ann-result} the annotated execution $ (\code{AP} \mid \square_1) \uatran{\circ}{^*} (\code{skip I} \mid \square_1') $ for some \code{I} and program state $\square_1'$ such that $\square_1' \approx \square'$, then there exists a corresponding inverse execution $ (\func{inv$^+$}{\code{AP$'$}} \mid \square_2') \urtran{\circ}{^*} (\code{AQ} \mid \square_2) $, for the program \code{AQ} that is \code{skip I$'$} if \code{AP} was complete and a skip equivalent otherwise, and program states $\square_2'$, $\square_2$, such that $\square_2' \approx \square_1'$ and $\square_2 \approx \square_1$. 
\end{prop}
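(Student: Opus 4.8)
The plan is to prove Proposition~\ref{prop:stronger-theorem-4} by induction on the length of the uniform annotated execution $(\code{AP} \mid \square_1) \uatran{\circ}{^*} (\code{skip I} \mid \square_1')$, strengthening the statement so that the induction goes through for partially executed programs (which is the reason Proposition~\ref{prop:stronger-theorem-4} is phrased in terms of \emph{inv$^+$} and skip equivalents rather than \emph{inv} and \code{skip} directly). Because the execution is uniform, it decomposes into \emph{segments}, each consisting of one identifier step followed by all the skip steps it causes; I would do the induction on the number of such segments rather than on raw transition count, which keeps the case analysis clean since a partial program always begins with an identifier step.

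\textbf{Base case.} A uniform execution of length zero means \code{AP} is already a skip equivalent (e.g. \code{skip I}, or a composition of skips), and \func{inv$^+$}{\code{AP$'$}} is exactly that skip equivalent (possibly with an \code{abort} appended, which cannot fire since there is no preceding step), so the inverse execution is also of length zero and the states match trivially. One also has to handle the genuinely short executions --- the ``base cases'' in the sense of Theorem~\ref{theorem:ann-result}, i.e. single-segment executions via rules like \sos{D1a}, \sos{P3a}, \sos{W1a} --- by checking directly that the matching reverse rule (\sos{D1r}, \sos{P3r}, \sos{W1r}) undoes the state change and pops exactly the pair that was pushed, restoring $\delta$. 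The key observations here are: (i) each forward rule that pushes onto a stack in $\delta$ pushes a pair tagged with the identifier \code{m} it just consumed via \func{next}{}; (ii) the matching reverse rule, when it is the next to fire, has \code{m} = \func{previous}{}, so it pops exactly that pair; and (iii) the identifiers are used in strictly ascending order forwards, hence strictly descending order backwards, so the LIFO discipline of the stacks lines up precisely with the order of reverse steps.

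\textbf{Inductive step.} Write the uniform forward execution as a first segment $(\code{AP} \mid \square_1) \uatran{m}{} (\code{AP}_1 \mid \square_1'') \uatran{}{}_s^* (\code{AP}_2 \mid \square_1'')$ followed by a shorter uniform execution $(\code{AP}_2 \mid \square_1'') \uatran{\circ}{^*} (\code{skip I} \mid \square_1')$ beginning with the identifier step using $m+1$. Since $m+1 = \func{next}{}$ at that point, it will be the last reverse step in the inverted execution of the tail, so by the induction hypothesis there is a uniform reverse execution of \func{inv$^+$}{\code{AP}_2'} that reaches the appropriate skip equivalent of \code{AP}_2, restoring $\square_1''$ and $\delta$ up to the point just after the first segment's push. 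The crux is then to show (a) that $\func{inv$^+$}{\code{AP}'}$ can perform, as its \emph{final} segment, the reverse of the first forward segment --- i.e. the reverse identifier step using $m$ preceded by the skip steps that, in the forward direction, were caused by \emph{earlier} identifier steps and in the reverse direction precede step $m$ --- and (b) that after this final segment we have reached the skip equivalent of \code{AP} with $\square_1$ and $\delta$ fully restored. Step (a) requires a careful structural case analysis on which SOS rule produced the first forward step \sos{R}, using the correspondence \sos{Ra} $\leftrightarrow$ \sos{Rr} and the definitions of \func{inv}{} and \func{inv$^+$}{} (in particular, that $\func{inv$^+$}$ places the \code{abort} exactly where the first reverse step must stop, and that $\func{inv$^+$}$ on a \emph{partially started} construct strips the ``$+$'' once that construct's body has been fully reversed, handing off to plain \func{inv}{}). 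The while-loop and procedure-call cases are the delicate ones, since there $\func{inv$^+$}$ relies on the auxiliary rules with \code{abort} deferred to Appendix~\ref{appen-rules}, and one must track the while environment $\beta$ and procedure environment $\mu$: the forward loop stored a renamed copy via \func{reL}{} / \func{getAI}{}, and the reverse loop must reconstruct exactly the inverted renamed copy via \func{IreL}{} / \func{setAI}{}, so equivalence of $\beta$ in the sense of Definition~\ref{def:equiv-states-aux} (namely $\beta_1(\code{Wn}) = inv(\beta(\code{Wn}))$) has to be maintained as an invariant throughout.

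\textbf{Main obstacle.} The hardest part will be the bookkeeping around partially executed constructs interacting with uniformity: showing that when a forward execution stops ``between segments'' inside, say, a branch of a conditional or partway through a sequence of block declarations/removals, the program produced by \func{inv$^+$}{} is \emph{exactly} the one whose uniform reverse execution stops at the matching skip equivalent --- neither closing a block prematurely (the \sos{B2r} hazard noted in Example~\ref{ex:stop-inv-ex-forcibly}) nor failing to stop. This is where the \code{abort} machinery and the second copies of sub-programs in the partial syntax (Figure~\ref{fig:partial-prog-syntax}) earn their keep, and verifying that the \emph{inv$^+$} clauses in Figure~\ref{func-def:inv+} (especially \eqref{inv+:cond-true} and \eqref{inv+:while-diff-body}) produce precisely the right placement of \code{abort} and the right handoff to \func{inv}{} is the technical heart of the argument. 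Once $\Rightarrow_2$ (Proposition~\ref{prop:stronger-theorem-4}) is established, $\Rightarrow_3$ of Figure~\ref{diagram-theorem2-proof} follows by specializing to a complete \code{AP} (so \code{AQ} is genuinely \code{skip I$'$} and no \code{abort} fires), applying Lemma~\ref{lem:equiv-uniform-rev} to relax the uniform reverse execution back to a standard one, and reading off $\delta_2 = \delta_1$ from Definition~\ref{def:equiv-states-aux}, which yields Theorem~\ref{theorem:inv-result}.
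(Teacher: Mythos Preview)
Your proposal is essentially correct and follows the same inductive strategy as the paper: induction on the length of the uniform annotated execution, strengthened to partially executed programs via \emph{inv$^+$} and skip equivalents, with case analysis on the first SOS rule applied. The paper organises this slightly differently, factoring the argument into two mutually-inductive auxiliary results --- a \emph{Statement Property} (Lemma~\ref{lem:sp}) handling single-statement executions beginning with an identifier step, and a \emph{Program Property} (Lemma~\ref{lem:pp}) handling program executions, itself split into Part~1 (executions beginning with skip steps) and Part~2 (executions beginning with an identifier step) --- and then derives Proposition~\ref{prop:stronger-theorem-4} directly from these. This buys two things your direct induction glosses over: first, it cleanly separates the structural descent through sequential/parallel composition (Program Property) from the per-construct reasoning (Statement Property); second, it explicitly treats complete programs that \emph{begin} with skip steps (hard-coded \code{skip}s, empty blocks, empty parallels), which your segment decomposition does not cover since such initial skip steps have no causing identifier step. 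The paper also makes explicit the ``same sequence of rules SR'' technique --- comparing $\func{inv$^+$}{\code{AS}}$ with $\func{inv$^+$}{\code{AS$'$}}$ where they differ only in abort placement, so the IH result on the shorter execution lifts verbatim to the longer one --- which is the precise mechanism behind your step~(a). Your identification of while loops and procedure calls as the delicate cases, and of the abort/second-copy machinery as the technical heart, matches the paper exactly.
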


Our proof of this proposition uses two smaller results. The first, named the \emph{Statement Property} (Lemma~\ref{lem:sp}), considers all statement executions that begin with an identifier step. The second, named the \emph{Program Property } (Lemma~\ref{lem:pp}), is more general and considers all program executions that begin with either an identifier or a skip step.  

\begin{restatable}[Statement Property]{lemma}{lemsp} \label{lem:sp} 
Let \textup{\texttt{AS}} be a complete or partially executed annotated statement and \textup{\texttt{$\square$}} be the initial environments \{$\sigma$,$\gamma$,$\mu$,$\beta$,$\delta$\}. 

If a uniform execution $ (\code{AS} \mid \square) \uatran{m}{} (\code{AS$'$} \mid \square'') \uatran{}{_s^*}  (\code{AS$''$} \mid \square'') \uatran{\circ}{^*} (\code{skip I} \mid \square') $ exists for some statements \code{AS$'$}, \code{AS$''$} and program states $\square''$, $\square'$, then there exists a uniform inverse execution $ (\func{inv$^{+}$}{\code{AS}} \mid \square_1') \urtran{\circ}{^*} (\code{AT$''$} \mid \square_1'') \urtran{m}{} (\code{AT$'$} \mid \square_1) \urtran{}{_s^*} (\code{AT} \mid \square_1) $ for some statements \code{AT$''$}, \code{AT$'$}, \code{AT} where \code{AT} = \code{skip I} for some \code{I} if \code{AS} is a complete statement and \code{AT} is a skip equivalent if \code{AS} is partially executed, and states $\square_1'$, $\square_1''$, $\square_1$ such that $\square'_1 \approx \square'$, $\square''_1 \approx \square''$ and $\square_1 \approx \square$. 
\end{restatable}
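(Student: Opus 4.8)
The plan is to prove Lemma~\ref{lem:sp} by induction on the length of the given uniform forward execution of $\code{AS}$, carried out in tandem (mutual induction) with the forthcoming Program Property, Lemma~\ref{lem:pp}: the statement case reduces the execution of a compound body to a \emph{strictly shorter} program execution handled by Lemma~\ref{lem:pp}, while Lemma~\ref{lem:pp} decomposes a composed program into strictly shorter statement executions handled by Lemma~\ref{lem:sp}, so at every cross-invocation the execution length strictly decreases and the induction is well founded. Throughout, three ``checkpoints'' must be hit by the constructed reverse execution: after it has undone every forward step except the first, the state is $\approx$ the forward intermediate state $\square''$ (this gives $\square_1'' \approx \square''$); after it additionally undoes the first forward step --- the identifier step using $\code{m}$ --- the state is $\approx \square$; and the auxiliary store is restored exactly, Definition~\ref{def:equiv-states-aux}(4), which is where garbage-freeness comes from. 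Both the base case and the induction step are organised by the rule $\sos{Ra}$ firing in the first transition $\uatran{m}{}$.

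For the atomic statements --- assignments $\sos{D1a},\sos{D2a},\sos{C1a}$--$\sos{C4a}$, declarations $\sos{L1a}$--$\sos{L3a}$, removals $\sos{H1a}$--$\sos{H3a}$ (and the zero-iteration loop $\sos{W1a}$) --- the first forward step is already the last, so the execution has length one and $\code{AS}$ is complete; here $\func{inv$^+$}{\code{AS}}$ is the matching inverse statement of Figure~\ref{fig:func-inv} and I would verify directly that its single matching reverse rule $\sos{Rr}$ fires: $\func{previous}{} = \code{m}$ since $\code{m}$ is the only identifier used forwards; whatever $\sos{Ra}$ pushed onto a stack of $\delta$ is precisely what $\sos{Rr}$ pops, restoring $\delta$; and inspection of the two rules shows $\sigma,\gamma,\mu,\beta$ are restored, giving $\square_1 \approx \square$. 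The remaining checkpoints and the trailing $\urtran{}{_s^*}$ segment are immediate.

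For the compound statements --- conditional ($\sos{I1aT}/\sos{I1aF}$), block ($\sos{B1a}$ whose inner step is an identifier step), procedure call ($\sos{G1a}$), and while loop ($\sos{W3a}$) --- the first forward step opens the construct with $\code{m}$ and the body or branch then executes as a strictly shorter program execution, to which Lemma~\ref{lem:pp}, applied to the \emph{inverted} body, yields a reverse execution restoring the state $\approx \square''$; on top of that I run the inverse of the opening step. The bookkeeping to check is: (i) the \emph{closing} forward step of a conditional/loop/call pushes a boolean or an identifier sequence onto $\code{B}$ / $\code{W},\code{WI}$ / $\code{Pr}$, which is exactly what the reverse \emph{opening} rule ($\sos{I1rT}$, $\sos{W3r}$, $\sos{G1r}$) pops; (ii) the identifier stack recorded on the executed statement stores $\code{m}$ just below the closing identifier, so the reverse consumes identifiers in descending order and the reverse closing rule ($\sos{I4r}$, $\sos{W2r}$, $\sos{G3r}$) reaches $\code{m}$ last, matching ``$\uatran{m}{}$ reversed''; (iii) the renaming and reflection functions are compatible with $\emph{inv}$ --- that $\func{IreL}{\func{setAI}{\emph{inv}(\code{AP}),\code{C}}}$, $\func{IreP}{\func{setAI}{\emph{inv}(\code{AP}),\code{C}},\code{Cn}}$, $\emph{refW}$ and $\emph{refC}$ rebuild copies that are $\emph{inv}$ of the forward copies maintained via $\emph{reL}$/$\emph{reP}$/$\emph{refW}$/$\emph{refC}$ and $\emph{getAI}$ --- so $\beta$ and $\mu$ are restored up to the $\approx$ of Definition~\ref{def:equiv-states-aux}(2)--(3), which asks only for $\emph{inv}$-relatedness, not literal equality.

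The main obstacle is the while-loop and procedure-call cases, for two reasons. First, the iteration count (resp.\ recursion depth) is data-dependent and unbounded, so ``open, run body, close'' is itself nested repetition; handling it needs an inner induction on the number of remaining iterations --- or, more cleanly, folding it into the outer induction on execution length --- together with the renaming algebra above, showing that popping $\code{(m,T)}$ from $\code{W}$ and $\code{(m,C)}$ from $\code{WI}$ faithfully replays each forward iteration in reverse and that the $\emph{def}/\emph{und}$ status of $\beta(\code{Wn})$ and $\mu(\code{Cn})$ tracks correctly. Second, when $\code{AS}$ is \emph{partially} executed the reverse execution must halt at exactly the prescribed skip equivalent $\func{se}{\code{AS}}$ and not over-run --- all forward reversal information is present since the assumed forward execution is complete, so nothing intrinsic stops it --- which is precisely why $\emph{inv$^+$}$ inserts $\code{abort}$ and why the abort-augmented loop and call semantics of Appendix~\ref{appen-rules} are needed; showing these stop the reverse execution neither too early nor too late, in step with the partial-program grammar of Figure~\ref{fig:partial-prog-syntax}, is the delicate part. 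A secondary, administrative obstacle is aligning the forward closing skip steps caused by $\code{m}$ with the reverse skip steps, which I would handle by appeal to uniformity (both executions fire skip steps in the one canonical order) and, where a local reorder is required, to Lemma~\ref{claim:1}.
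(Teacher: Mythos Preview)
Your proposal is correct and follows essentially the same approach as the paper: mutual induction with Lemma~\ref{lem:pp} on execution length, atomic statements as base cases, compound statements handled by extracting the body execution and applying the induction hypothesis, with the abort-augmented rules of Appendix~\ref{appen-rules} pinning down the stopping point for partially executed~$\code{AS}$.

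One point where your case analysis is thinner than the paper's: your description of the compound cases lists only the \emph{opening} rules $\sos{I1aT}/\sos{I1aF}$, $\sos{W3a}$, $\sos{G1a}$, $\sos{B1a}$, which covers complete~$\code{AS}$. When $\code{AS}$ is partially executed, the first identifier step is instead one of $\sos{I2a}/\sos{I3a}$ (mid-branch), $\sos{I4a}/\sos{I5a}$ (branch finished, closing), $\sos{W2aP}$ (loop closing), $\sos{W4aP}$ (non-first iteration), $\sos{W5a}$ (mid-body), $\sos{G2a}$, $\sos{G3a}$, and the paper treats each of these as its own case. The technique there is one you do not name explicitly: for, say, $\sos{I2a}$ the paper peels off the body execution via repeated $\sos{I2a}$ (conclusion-to-premise), applies the Program Property to that strictly shorter execution, re-wraps with $\sos{I2r}$, and separately matches the forward closing step $\sos{I4a}$ with the reverse opening step $\sos{I1rT}$; the comparison between $\func{inv$^+$}{\code{AS}}$ and $\func{inv$^+$}{\code{AS$'$}}$ is made by observing they share the same prefix of rule applications, so the induction-hypothesis run on the latter transfers verbatim to the former. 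Your plan is compatible with all of this, but when you unpack it you will need these additional cases and this ``same rule sequence'' transfer argument.
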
 

\begin{proof}
This proof is deferred to Appendix~\ref{appen-sp}.
\end{proof}

It is sufficient in Lemma~\ref{lem:sp} to consider only statement executions beginning with identifier steps. The Program Property considers executions beginning with skip steps (Part~1) and beginning with identifier steps (Part~2). Respecting uniformity means we only consider complete programs beginning with skip steps.

\begin{restatable}[Program Property]{lemma}{lempp} \label{lem:pp}
Let \textup{\texttt{AP}} be a complete program in Part~1, and either a complete or partially executed annotated program in Part~2. Further let \textup{\texttt{$\square$}} be the tuple \{$\sigma$,$\gamma$,$\mu$,$\beta$,$\delta$\} of initial program environments.

\vspace{-.2cm}
\begin{enumerate}[label=\textbf{Part \arabic*.},leftmargin=*]
\item If a uniform forward execution $ (\code{AP} \mid \square) \uatran{}{_s^*} (\code{AP$'$} \mid \square'') \uatran{\circ}{^*} (\code{skip I} \mid \square') $ exists for some program \code{AP$'$} and program states $\square''$, $\square'$ such that $\square''$ = $\square$, then there exists a uniform inverse execution $ (\func{inv$^{+}$}{\code{AP}} \mid \square_1') \urtran{\circ}{^*} (\code{AQ$'$} \mid \square_1'') \urtran{}{_s^*} (\code{AQ} \mid \square_1)  $ for some programs \code{AQ$'$}, \code{AQ} such that \code{AQ} is skip if \code{AP} is a complete program and a skip equivalent otherwise, and program states $\square_1'$, $\square_1''$, $\square_1$ such that $\square_1' \approx \square'$, $\square_1'' \approx \square''$ and $\square_1 \approx \square$. 
\item If a uniform forward execution $ (\code{AP} \mid \square) \uatran{m}{} (\code{AP$'$} \mid \square'') \uatran{}{_s^*} (\code{AP$''$} \mid \square'') \uatran{\circ}{^*} (\code{skip I} \mid \square') $ exists for some program \code{AP$'$} and program states $\square''$ and $\square'$, then there exists a uniform inverse execution $ (\func{inv$^{+}$}{\code{AP}} \mid \square_1') \urtran{\circ}{^*} (\code{AQ$''$} \mid \square_1'') \urtran{m}{} (\code{AQ$'$} \mid \square_1) \urtran{}{_s^*} (\code{AQ} \mid \square_1) $ for some programs \code{AQ$''$}, \code{AQ$'$}, \code{AQ} is skip if \code{AP} is a complete program and a skip equivalent otherwise, and states $\square_1'$, $\square_1''$, $\square_1$ such that $\square_1' \approx \square'$, $\square_1'' \approx \square''$ and $\square_1 \approx \square$.
\end{enumerate}
\end{restatable}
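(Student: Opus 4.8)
The plan is to prove Lemma~\ref{lem:pp} by a single simultaneous induction with the Statement Property (Lemma~\ref{lem:sp}) on the length of the uniform forward execution, with a case analysis on the top-level syntactic form of \code{AP} as given by Figure~\ref{fig:partial-prog-syntax}: the empty program $\varepsilon$, a single (complete or partially executed) statement \code{AS}, a sequential composition (in the partial case \code{PP};\code{AP}), and a parallel composition (in the partial case \code{PP par PP}). The $\varepsilon$ case is immediate, since \func{inv$^+$}{$\varepsilon$} $=\varepsilon$ and neither side moves. Throughout, the two halves interact: Part~1 collapses to Part~2 once all leading skip steps have been consumed, and Part~2 appeals to Part~1 on the shorter sub-executions of sub-programs, so they must be carried through the induction together; for the single-statement case it must also be carried together with Lemma~\ref{lem:sp}, whose (appendix) proof in turn only invokes Lemma~\ref{lem:pp} on strictly shorter sub-executions occurring inside compound statements, so the mutual recursion stays well-founded on execution length.

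For Part~1, if \code{AP} admits no leading skip step then the claim is exactly Part~2, since uniformity forces the first remaining transition to be an identifier step. Otherwise the forward execution begins $(\code{AP} \mid \square) \uatran{}{_s} (\code{AP}_1 \mid \square)$ with a closing skip step --- an instance of \sos{S2a}, \sos{B2a} or \sos{P3a} --- which leaves $\square$ unchanged. I would apply the induction hypothesis (Part~1) to the strictly shorter execution of \code{AP}$_1$, then re-attach the matching reverse step: these rules coincide in the forward and reverse semantics, and because \func{inv$^+$} places its single extra \code{abort} at the execution frontier (which lies strictly beyond this leading skip step), the reverse counterpart becomes available only at the very end of the reverse execution, contributing one of the trailing $\urtran{}{_s}$ steps of the conclusion. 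For Part~2 with \code{AP} a single statement \code{AS}, the conclusion coincides verbatim with that of the Statement Property, so this case is discharged by citing Lemma~\ref{lem:sp}.

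The two substantive cases of Part~2 are the compositions. For a sequential composition, strictness of ``;'' together with uniformity forces the forward execution to drive the first component completely to \code{skip I$_1$}, remove it with \sos{S2a}, and only then execute the second component to \code{skip}; both sub-executions are strictly shorter, so the induction hypothesis applies, and using the clause \func{inv$^+$}{\code{S;P}} $=$ \func{inv}{\code{P}};~\func{inv$^+$}{\code{S}} I would first reverse \func{inv}{\code{P}} (reaching \code{skip}, then consumed by the reverse of \sos{S2a}) and then reverse \func{inv$^+$}{\code{S}} to the appropriate residual (\code{skip} if \code{S} was complete, a skip equivalent via \func{se}{} otherwise), re-composing the two reverse sequences into one uniform execution. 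For a parallel composition \code{PP$_1$ par PP$_2$}, the key fact (the remark following Definition~\ref{def:uniform-ex}) is that skip steps never cross a parallel composition, so the first segment of the forward execution lies wholly on one side, say taking \code{PP$_1$} to \code{PP$_1'$}; the residual uniform execution of \code{PP$_1'$ par PP$_2$} is strictly shorter, and since \func{inv$^+$} commutes with \code{par}, the induction hypothesis yields a uniform reverse execution of \func{inv$^+$}{\code{PP$_1'$}} \code{par} \func{inv$^+$}{\code{PP$_2$}}, to which I would prepend the reverse of that first forward segment on the \code{PP$_1$} side; the descending discipline of \func{previous}{} guarantees this segment can fire only as the very last identifier step of the reverse execution, so the pieces interleave legally.

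The hard part will be the bookkeeping around \code{abort} and skip equivalents: showing that \func{inv$^+$} inserts its single \code{abort} at exactly the position where the reverse execution must stop, so that it halts at precisely the program claimed (\code{skip I$'$} when \code{AP} was complete, \func{se}{\code{AP$'$}}-style otherwise). This is delicate for two reasons. First, the complete/partial distinction must be threaded correctly through every sub-program of every compound case: a component can be complete even when the enclosing construct is partial, which changes which clause of \func{inv$^+$} (Figure~\ref{func-def:inv+}) and \func{se}{} (Figure~\ref{func-def:se}) fires. Second, \func{inv$^+$}{\code{PP$_1$}} is not obtained from \func{inv$^+$}{\code{PP$_1'$}} by a single backward step, so neither the sequential nor the parallel case can be closed by a naive one-step peel --- the argument has to reason at the granularity of whole segments and lean on the identifier discipline to fix the ordering. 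Verifying that each reverse step consumes exactly the identifier its forward counterpart produced (matching \func{previous}{} against \func{next}{}), and that every pop from $\delta$ undoes exactly the corresponding push, is what simultaneously yields the garbage-freedom conclusion $\delta_2 = \delta_1$ of Definition~\ref{def:equiv-states-aux}, and this must be tracked in parallel with the state equivalences $\square_1' \approx \square'$, $\square_1'' \approx \square''$ and $\square_1 \approx \square$.
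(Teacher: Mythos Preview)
Your proposal is essentially correct and follows the same overall architecture as the paper: mutual induction with Lemma~\ref{lem:sp} on the length of the uniform forward execution, with a case split on the top-level syntactic form and on whether the execution begins with skip or identifier steps. The sequential-composition case matches the paper's Case~\ref{pp-part2-s1a} almost exactly (split at the point where \code{AS} finishes, apply the induction hypothesis to each piece, recompose via \sos{S1r}/\sos{S2r}).

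Two organisational differences are worth noting. First, for Part~1 you peel a single leading skip step and recurse on \code{AP}$_1$, whereas the paper does an explicit structural case analysis (Cases~\ref{pp-part1-s2a}--\ref{pp-part1-p2a}); your approach is sound because skip steps from a complete program preserve completeness, but you should check this explicitly --- it is what licenses re-invoking the Part~1 hypothesis. Second, in the parallel case of Part~2 you correctly flag that \func{inv$^+$}{\code{PP$_1$}} is not obtained from \func{inv$^+$}{\code{PP$_1'$}} by one backward step; the paper's concrete mechanism for closing this gap (Case~\ref{pp-part2-p1a}) is to observe that the two inverted programs share an identical execution prefix, so the \emph{same sequence of rule applications} \textit{SR} obtained from the induction hypothesis on the shorter execution can be replayed verbatim on \func{inv$^+$}{\code{PP$_1$}}~\code{par}~\func{inv$^+$}{\code{PP$_2$}}, after which the reverse of the first forward segment becomes the unique enabled continuation by the \func{previous}{} discipline. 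This is exactly what your phrase ``reason at the granularity of whole segments and lean on the identifier discipline'' is pointing at, but naming the replayed-rule-sequence technique explicitly will make the parallel case go through cleanly.
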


\begin{proof}
This proof is deferred to Appendix~\ref{appen-pp}.
\end{proof}

With our two smaller results proved, we are ready to establish Proposition~\ref{prop:stronger-theorem-4}.
\begin{proof}
This proof is by induction on the length of a uniform execution, namely $ (\code{AP} \mid \square)  \uatran{\circ}{^*} (\code{skip I} \mid \square') $, for some program \code{AP} and states $\square$ and $\square'$. It can be split into two parts. The first part considers executions that begin with skip steps, while the second considers executions that begin with identifier steps. Each part is now considered in turn.

\vspace{.3cm}
\noindent \textbf{Executions beginning with skip steps} This is done as in the proof of Part~1 of Lemma~\ref{lem:pp}. We consider only skip steps that can begin a complete program execution. The base cases (already shown in the proof of Part~1 of Lemma~\ref{lem:pp}) are Case~\ref{pp-part1-s2a} (sequential skips), Case~\ref{pp-part1-p3a} (empty parallel) and Case~\ref{pp-part1-b2a} (empty block). The inductive cases are those that correspond to Case~\ref{pp-part1-s1a} (sequential composition) and Case~\ref{pp-part1-p1a} (parallel composition) of the proof of Part~1 of Lemma~\ref{lem:pp}. With all shown to be valid in~\ref{appen-pp}, this part holds. 

\vspace{.3cm}
\noindent \textbf{Executions beginning with identifier steps} This is done as in the proof of Part~2 of Lemma~\ref{lem:pp}, where only those identifier steps that can be the beginning of a complete execution are considered. The base cases (already used in the proof of Part~2 of Lemma~\ref{lem:pp} and shown in the proof of Lemma~\ref{lem:sp}) are Case~\ref{sp-d1a} (assignment) and Case~\ref{sp-w1a} (loop with zero iterations). The inductive cases are those corresponding to Case~\ref{pp-part2-s1a} (sequential composition) and Case~\ref{pp-part2-p1a} (parallel composition) from the proof of Part~2 of Lemma~\ref{lem:pp}. Since the Program Property is proved via mutual induction with the Statement Property, we also consider several cases (of first steps) from the proof of Lemma~\ref{lem:sp}. This does not need any further base cases (since the two such cases are already used above), but does require the inductive cases corresponding to Case~\ref{sp-i1at} and Case~\ref{sp-i1af} (conditional statement), Case~\ref{sp-w3a} (while loop) and Case~\ref{sp-b1a} (block) from the proof of Lemma~\ref{lem:sp}. With all holding, this part is therefore valid. All possible executions have been shown to hold, meaning Proposition~\ref{prop:stronger-theorem-4} is also valid.
\end{proof}

With Proposition~\ref{prop:stronger-theorem-4} shown, the arrow $\Rightarrow_2$ from Figure~\ref{diagram-theorem2-proof} is therefore also correct. We finally consider the arrow $\Rightarrow_3$ from Figure~\ref{diagram-theorem2-proof}. As previously noted, all uniform executions are also standard executions by definition. This means $\Rightarrow_3$ exists in all cases as required. 
We have therefore completed the proof outline in Figure~\ref{diagram-theorem2-proof} meaning Theorem~\ref{theorem:inv-result} is valid.

\section{Simulation} \label{sec:implem}
Our method of reversibility has been implemented in the form of a simulator named Ripple~\cite{JHphd} (representing ``Reversing an Imperative Parallel Programming Language'') available at \url{https://github.com/jameshoey/ripple}. We plan to present Ripple more fully in an upcoming paper, which will include experimental data regarding the performance of Ripple, and efficiency of our approach. We note here that some aspects of our approach were chosen to aid the proof, including the use of stacks for reversal information and both the while and procedure environments. We believe performance can be improved via more efficient implementation.  

Ripple implements the operational semantics presented here and was created initially to confirm our proof of correctness. This tool takes an original program written in pseudocode, automatically converts this to our syntax (including construct identifiers and removal statements) and then simulates reversibility. Execution in both directions can be either full or step-by-step, with the program state and reversal information viewable at any intermediate position. Parallel composition can be executed with random or manual interleaving. 

Ripple was designed to demonstrate our application to debugging~\cite{JH2019}. This made use of reversibility to step backwards over a misbehaving execution, as well as specific debugging features such as a ``record mode'' that kept an interleaving and semantics history of any execution. 
An interesting extension is to compare the logs produced here (including the use of identifiers and reversal information) to the logs introduced by Lanese et al in \cite{IL2019}. The identifiers (and reversal information) used by Ripple captures an execution sufficiently to replay the execution to any intermediate position. This can be done, however, by executing statements in the same order as during the original forward execution, so currently does not support the causally equivalent replay as in \cite{IL2019}. Such replay ability, and exploration of a closer link to work above, will be developed as future work. 
%
%

\section{Conclusion} \label{sec:conc}
We have introduced a method of reversing the execution of programs written in an imperative concurrent programming language. This includes defining a process named annotation, capable of taking an irreversible program and producing a version that saves any information during forward execution. The process of inversion has also been defined, producing a version that uses this saved information to perform backtracking reversibility. We have overcome challenges introduced by parallel composition operator, such as keeping track of the many different interleaving orders, by assigning identifiers to statements in the order that they are executed. We have presented three sets of operational semantics, defining traditional, annotated and inverted execution respectively. 

We have shown our method of reversibility to be correct. This involved demonstrating that the annotated version performs the same forward execution as the original program, with extra reversal information saved. We have proved that the inverted version performs correct reversal and restores to a program state equivalent to that of before the forward execution. All information saved for reversal is used and the approach is therefore garbage-free.

The work presented here proposes a solution to a problem of correctly reversing an imperative concurrent program. To the best of our knowledge, no other approach to reversing imperative concurrent programs is proved to be correct. The previous sections describe several challenges we have overcome, including the definition of a state-saving approach to reversibility that has two main aims. Firstly, we have strived to store only minimal amount of data. Secondly, we have stored this data using uniform structures (namely stacks) which greatly assist us in the proofs of correctness. Clearly, the two aims are potentially orthogonal but we think that our solution offers a good balance. We note, however, that the format and the amount of information saved is not optimal. For example, for the while loops a single integer representing the number of iterations might be preferable to a stack of boolean values. 

Our concurrent programming language supports a shared memory form of concurrency using a parallel composition operator. We have propose a method of capturing an interleaving order of execution via identifiers. This has been inspired by CCSK (see Section~\ref{sec:related-work}), a method of reversing CCS using Communication Keys (identifiers). An interesting area of future work is to replace parallel composition with threads, adopt the basis of our approach to different forms of concurrency, such as those in the message-passing concurrent programming languages or in distributed computing (as recent advancements with Cauder have). One of our motivations for this work (and the reason we do not propose a reversible language) is to add reversibility to an irreversible programming language. In its current form, our approach is fairly specific to our programming language. However, we believe that the basis of this approach could be applied to other sequential or concurrent
programming languages. More generally, it might be worthwhile to consider adapting our approach to information saving and proof of correctness to developing reversible versions of general algorithms.

While the approach presented here is a solution to reversing the executions of concurrent programs, there are limitations to this. Firstly, several programming constructs are not currently supported. Examples of constructs missing that we aim to implement in future work include data structures such as arrays, stacks or heaps, and pointers, floating point arithmetic and functions. Secondly, there is the potential for a performance bottleneck when maintaining the order of identifiers assigned to executing statements. Currently the next identifier is retrieved from a central counter. In future work we plan to address this inefficiency possibly using the approach in \cite{AubertMedic2021}.
%
%
Thirdly, the approach presented here implements backtracking reversibility. As highlighted in our previous work on reversible debugging~\cite{JH2019,JH2020}, strictly following backtracking can be less efficient. We are currently exploring the modifications necessary to support causal-consistent reversibility~\cite{JHWiP}.
%
%
 Finally, the current implementation has not been optimised. Many choices that have been made to aid our correctness proof are not necessarily the most efficient solution. One example is the use of stacks for our reversal information. Further development will focus on a more efficient implementation.

\bibliographystyle{plain}
\bibliography{bib-fixed}

\clearpage
\appendix

\section{Complete Function Definitions} \label{complete-func}
The following are full definitions of functions introduced in the main text, where some parts were omitted.

\subsection{Annotation} \label{full-ann-appen}
Figure~\ref{fig:func-ann-full} shows the complete definition of the annotation function $ann$.

\begin{figure}[hb!]
{\footnotesize \begin{align*}
&ann(\texttt{$\varepsilon$}) = \texttt{$\varepsilon$} \\
&ann(\texttt{S;P}) = ann(\texttt{S}); ann(\texttt{P}) \\ 
&ann(\texttt{P par Q}) = ann(\texttt{P}) \texttt{ par } ann(\texttt{Q}) \\
&ann(\texttt{skip}) = \texttt{skip I} \\
&ann(\texttt{X = e pa}) = \texttt{X = e (pa,A)} \\
&ann(\texttt{name[e] = e$'$ pa}) = \texttt{name[e] = e$'$ pa} \\
&ann(\texttt{X += e pa}) = \texttt{X += e (pa,A)} \\
&ann(\texttt{X -= e pa}) = \texttt{X -= e (pa,A)} \\
&ann(\texttt{name[e] += e$'$ pa}) = \texttt{name[e] += e$'$ pa} \\
&ann(\texttt{name[e] -= e$'$ pa}) = \texttt{name[e] -= e$'$ pa} \\
&ann(\texttt{if In b then P else Q end pa}) = \\ &\phantom{==} \texttt{if In b then $ann($P$)$ else $ann($Q$)$ end (pa,A)} \\
&ann(\texttt{while Wn b do P end pa}) = \texttt{while Wn b do $ann($P$)$ end (pa,A)} \\
&ann(\texttt{begin Bn P end}) = \texttt{begin Bn $ann($P$)$ end} \\
&ann(\texttt{var X = E pa}) = \texttt{var X = E (pa,A)} \\
&ann(\texttt{arr[n] name pa}) = \texttt{arr[n] name (pa,A)} \\
&ann(\texttt{proc Pn n is P end pa}) = \texttt{proc Pn n is $ann($P$)$ end (pa,A)} \\
&ann(\texttt{call Cn n pa}) = \texttt{call Cn n (pa,A)} \\
&ann(\texttt{runc Cn P end}) = \texttt{runc Cn AP end A} \\
&ann(\texttt{remove X = E pa}) = \texttt{remove X = E (pa,A)} \\
&ann(\texttt{remove arr[n] name pa}) = \texttt{remove arr[n] name (pa,A)} \\
&ann(\texttt{remove Pn n is P end pa}) = \texttt{remove Pn n is $ann($P$)$ end (pa,A)} 
\end{align*} }
\vspace{-.4cm}
\caption{Annotation function $ann$.}
\label{fig:func-ann-full}
\end{figure}

\subsection{Inversion} \label{full-inv-appen}
Figure~\ref{fig:func-inv-full} shows the complete definition of the inversion function $inv$.

\begin{figure}[!h]
{\footnotesize \begin{align*}
&inv(\texttt{$\varepsilon$}) = \texttt{$\varepsilon$} \\
&inv(\texttt{AS;AP}) = inv(\texttt{AP}); inv(\texttt{AS}) \\
&inv(\texttt{AP par AQ}) = inv(\texttt{AP}) \texttt{ par } inv(\texttt{AQ}) \\
&inv(\texttt{skip I}) = \texttt{skip I} \\
&inv(\texttt{X = e (pa,A)}) = \texttt{X = e (pa,A)} \\
&inv(\texttt{name[e] = e$'$ (pa,A)}) = \texttt{name[e] = e$'$ (pa,A)} \\
&inv(\texttt{X += e (pa,A)}) = \texttt{X -= e (pa,A)} \\
&inv(\texttt{X -= e (pa,A)}) = \texttt{X += e (pa,A)} \\
&inv(\texttt{name[e] += e$'$ (pa,A)}) = \texttt{name[e] -= e$'$ (pa,A)} \\
&inv(\texttt{name[e] -= e$'$ (pa,A)}) = \texttt{name[e] += e$'$ (pa,A)} \\
&inv(\texttt{if In b then AP else AQ end (pa,A)}) = \\&\phantom{==} \texttt{if In b then $inv($AP$)$ else $inv($AQ$)$ end (pa,A)} \\
&inv(\texttt{while Wn b do AP end (pa,A)}) = \texttt{while Wn b do $inv($AP$)$ end (pa,A)} \\
&inv(\texttt{begin Bn AP end}) = \texttt{begin Bn $inv($AP$)$ end} \\
&inv(\texttt{var X = e (pa,A)}) = \texttt{remove X = e (pa,A)} \\
&inv(\texttt{arr[n] name (pa,A)}) = \texttt{remove arr[n] name (pa,A)} \\
&inv(\texttt{proc Pn n is AP end (pa,A)}) = \texttt{remove Pn n is $inv($AP$)$ end (pa,A)} \\
&inv(\texttt{call Cn n (pa,A)}) = \texttt{call Cn n (pa,A)} \\
&inv(\texttt{runc Cn AP end A}) = \texttt{runc Cn $inv($AP$)$ A} \\
&inv(\texttt{remove Pn n is AP end (pa,A)}) = \texttt{proc Pn n is $inv($AP$)$ end (pa,A)} \\
&inv(\texttt{remove arr[n] name (pa,A)}) = \texttt{arr[n] name (pa,A)} \\
&inv(\texttt{remove X = e (pa,A)}) = \texttt{var X = e (pa,A)} 
\end{align*} }
\vspace{-.4cm}
\caption{Inversion function $inv$.}
\label{fig:func-inv-full}
\end{figure}

\subsection{Skip Equivalent} \label{full-se-appen}
Figure~\ref{func-def:se-full} shows the complete definition of the skip equivalent function $se$. In two rules below, we use ``if started'' to represent whether or not the execution of the loop body has begun or not.

\begin{figure}[h!]
{\footnotesize \begin{align*}
& \emph{se}(\code{$\varepsilon$})=\mbox{$\varepsilon$} \\
&\emph{se}(\code{AS;AP}) = \emph{se}(\code{AS$)$;AP} \\
&\emph{se}(\code{AP par AP}) = \emph{se}(\code{AP})~\code{par}~\emph{se}(\code{AP}) \\ 
&\emph{se}(\code{skip}) = \code{skip} \\
&\emph{se}(\code{abort}) = \code{abort} \\
&\emph{se}(\code{X = e (pa,A)}) = \code{skip (pa,A)} \\
&\emph{se}(\code{name[e] = e$'$ (pa,A)}) = \code{skip (pa,A)} \\
&\emph{se}(\code{X += e (pa,A)}) = \code{skip (pa,A)} \\
&\emph{se}(\code{X -= e (pa,A)}) = \code{skip (pa,A)} \\
&\emph{se}(\code{name[e] += e$'$ (pa,A)}) = \code{skip (pa,A)} \\
&\emph{se}(\code{name[e] -= e$'$ (pa,A)}) = \code{skip (pa,A)} \\
&\emph{se}(\code{if In b b then AP AP else AQ AQ end (pa,A)}) = \code{skip (pa,A)}  \\
&\emph{se}(\code{if In T b then AP$'$ AP else AQ AQ end (pa,A)}) \\ \nonumber &\phantom{=} = \code{if In T b then $\emph{se}($AP$')$ AP else AQ AQ end (pa,A)}  \\
&\emph{se}(\code{if In F b then AP AP else AQ$'$ AQ end (pa,A)}) \\ \nonumber &\phantom{=}= \code{if In F b then AP AP else $\emph{se}($AQ$')$ AQ end (pa,A)}  \\
&\emph{se}(\code{while Wn b do AP end (pa,A)}) = \code{skip} \text{ if not started}  \\
&\emph{se}(\code{while Wn b do AP end (pa,A)}) = \code{while Wn b do AP end (pa,A)} \text{ if started}  \\
&\emph{se}(\code{while Wn T do AP end (pa,A)}) = \code{while Wn T do $\emph{se}($AP$)$ end (pa,A)} \text{ if started \code{AP}} \\
&\emph{se}(\code{begin Bn AP AP end (pa,A)}) = \code{skip (pa,A)}   \\
&\emph{se}(\code{begin Bn AP AOP end (pa,A)}) = \code{begin Bn $\emph{se}($AP$)$ AOP end (pa,A)} \text{ if } \code{AP} \neq \code{AOP} \\
&\emph{se}(\code{var X = E (pa,A)}) = \code{skip (pa,A)} \\
&\emph{se}(\code{arr[n] name (pa,A)}) = \code{skip (pa,A)} \\
&\emph{se}(\code{proc Pn n is AP (pa,A)}) = \code{skip (pa,A)} \\
&\emph{se}(\code{remove X = E (pa,A)}) = \code{skip (pa,A)} \\
&\emph{se}(\code{remove arr[n] name pa}) = \code{skip (pa,A)} \\
&\emph{se}(\code{remove Pn n is AP (pa,A)}) = \code{skip (pa,A)} \\
&\emph{se}(\code{call Cn n (pa,A)}) = \code{skip (pa,A)} \\
&\emph{se}(\code{runc Cn AP AP (pa,A)}) = \code{runc Cn $\emph{se}($AP$)$ AP (pa,A)}
\end{align*} }
\vspace{-.4cm}
\caption{Definition of the function \emph{se}.}
\label{func-def:se-full}
\end{figure}

\allowdisplaybreaks

\subsection{Extended Inversion} \label{full-inv+-appen}
Figure~\ref{func-def:inv+-full} contains the complete definition of the extended inversion function $inv^+$.

\begin{figure}[h!]
\vspace{-.4cm}{\scriptsize \begin{align*} 
& \func{inv$^+$}{\code{$\varepsilon$},\square} = \varepsilon \\[1pt]
&\func{inv$^+$}{\code{S;P},\square} = \func{inv}{\code{P}}; \func{inv$^+$}{\code{S},\square} \\[1pt]
&\func{inv$^+$}{\code{P par Q},\square} = \func{inv$^+$}{\code{P},\square}~\code{par}~\func{inv$^+$}{\code{Q},\square} \\[1pt]
&\func{inv$^+$}{\code{skip I},\square} = \code{skip I; abort} \\[1pt]
&\func{inv$^+$}{\code{X = e (pa,A)},\square} = \code{X = e (pa,A); abort} \\
&\func{inv$^+$}{\code{name[e] = e$'$ (pa,A)},\square} = \code{name[e] = e$'$ (pa,A); abort} \\
&\func{inv$^+$}{\code{X += e (pa,A)},\square} = \code{X -= e (pa,A); abort} \\
&\func{inv$^+$}{\code{X -= e (pa,A)},\square} = \code{X += e (pa,A); abort} \\
&\func{inv$^+$}{\code{name[e] += e$'$ (pa,A)},\square} = \code{name[e] -= e$'$ (pa,A); abort} \\
&\func{inv$^+$}{\code{name[e] -= e$'$ (pa,A)},\square} = \code{name[e] += e$'$ (pa,A); abort} \\
&\func{inv$^+$}{\code{if In ob ob then AP AP else AQ AQ end (pa,A)},\square} \\ &\phantom{=}= \code{if In ob ob then } \func{inv}{\code{AP}}~ \func{inv}{\code{AP}} \code{ else } \func{inv}{\code{AQ}}~\func{inv}{\code{AQ}} \code{ end (pa,A); abort} \\[1pt]
&\func{inv$^+$}{\code{if In T ob then AP AP$'$ else AQ AQ end (pa,A)},\square} \\ &\phantom{=}= \code{if In ob ob then } \func{inv$^+$}{\code{AP}}~ \func{inv}{\code{AP$'$}} \code{ else } \func{inv}{\code{AQ}}~\func{inv}{\code{AQ}} \code{ end (pa,A)} \text{ where } \code{AP} \neq \code{skip I} \text{ and } \code{AP $\neq$ AP$'$}\\[1pt]
&\func{inv$^+$}{\code{if In F ob then AP AP else AQ AQ$'$ end (pa,A)},\square} \\ &\phantom{=}= \code{if In ob ob then } \func{inv}{\code{AP}}~ \func{inv}{\code{AP}} \code{ else } \func{inv$^+$}{\code{AQ}}~\func{inv}{\code{AQ$'$}} \code{ end (pa,A)} \text{ where } \code{AQ} \neq \code{skip I} \text{ and } \code{AQ $\neq$ AQ$'$} \\[1pt]
&\func{inv$^+$}{\code{if In T ob then skip I AP else AQ AQ end (pa,A)},\square} \\  &\phantom{=}= \code{if In ob ob then abort } \func{inv}{\code{AP}} \code{ else } \func{inv}{\code{AQ}}~\func{inv}{\code{AQ}} \code{ end (pa,A)} \\[1pt]
&\func{inv$^+$}{\code{if In F ob then AP AP else skip I AQ end (pa,A)},\square} \\ &\phantom{=}= \code{if In ob ob then } \func{inv}{\code{AP}}~\func{inv}{\code{AP}} \code{ else abort } \func{inv}{\code{AQ}} \code{ end (pa,A)} \\[1pt]
&\func{inv$^+$}{\code{while Wn b do AP end (pa,A)},\square} = \code{while Wn b do } \func{inv}{\code{AP}} \code{ end (pa,A); abort} \text{ where } \beta(\code{Wn}) = \code{und} \\[1pt]
&\func{inv$^+$}{\code{while Wn b do AP$'$ end (pa,A)},\square} = \code{while Wn ob do } \func{inv}{\code{AP}} \code{ end (pa,A)} \\ &\nonumber \text{ where } \code{AP} \text{ is such that }\beta(\code{Wn}) = \code{while Wn ob do AP end (pa,A)} \\[1pt]
&\func{inv$^+$}{\code{begin Bn AP AP end},\square} = \code{begin Bn } \func{inv}{\code{AP}}~ \func{inv}{\code{AP}} \code{ end; abort} \\[1pt]
&\func{inv$^+$}{\code{begin Bn AP AP$'$ end},\square} = \code{begin Bn }  \func{inv$^+$}{\code{AP}}~ \func{inv}{\code{AP}} \code{ end} \text{ where } \code{AP} \neq \code{AP$'$}\\[1pt]
&\func{inv$^+$}{\code{var X = E (pa,A)},\square} = \code{remove X = E (pa,A); abort} \\[1pt]
&\func{inv$^+$}{\code{arr[n] name (pa,A)},\square} = \code{remove arr[n] name (pa,A); abort} \\[1pt]
&\func{inv$^+$}{\code{proc Pn n is AP (pa,A)},\square} = \code{remove Pn n is}~\func{inv}{\code{AP}}~\code{(pa,A); abort} \\[1pt]
&\func{inv$^+$}{\code{remove X = E (pa,A)},\square} = \code{var X = E (pa,A); abort} \\[1pt]
&\func{inv$^+$}{\code{remove arr[n] name (pa,A)},\square} = \code{arr[n] name (pa,A); abort} \\[1pt]
&\func{inv$^+$}{\code{remove Pn n is AP (pa,A)},\square} = \code{proc Pn n is}~\func{inv}{\code{AP}}~\code{(pa,A); abort} \\[1pt]
&\func{inv$^+$}{\code{call Cn n (pa,A)},\square} = \code{call Cn n (pa,A); abort} \\[1pt]
&\func{inv$^+$}{\code{runc Cn AP AP$'$ (pa,A)},\square} = \code{call Cn n (pa,A)} 
\end{align*} }
\vspace{-.4cm}
\caption{Extended inversion function \emph{inv$^+$}.}
\label{func-def:inv+-full}%
\end{figure}%

\clearpage

\section{Additional Operational Semantics} \label{appen-rules}
In this section, we introduce additional operational semantics used during our proof of the inversion result (Theorem~\ref{theorem:inv-result}). The following rules are used when considering partially executed while loops or procedure calls. 
For each appropriate rule \sos{r}, the modified versions are named \sos{rP} and \sos{rPI}.

\noindent \textbf{While Loop}
Consider a partial execution beginning with the next evaluation of the condition (not the first). This can be the last check (rule \sos{W2a}) or any other except the first (rule \sos{W4a}). The modified version of each, namely \sos{W2aP} and \sos{W4aP}, saves a true abort flag (to indicate reverse execution should stop here) within the element of the boolean sequence needed for inversion (the extended stack \code{W} that now stores triples such that $\delta[\code{(m,T,T) $\rightharpoonup$ W}$). Note all following iterations use the original rule and save a false abort flag (or nothing) (each modified rule requires the loop body to be \code{abort}, which is disallowed in normal execution and only within the proof). The matching inverse rules, namely \sos{W3rP} and \sos{W4rP}, can only be used in cases where a true abort flag is present (meaning one of the rules above must have been executed) and return a loop containing abort as its body. 

{\footnotesize
\begin{align*}
&\sos{W2aP} \quad \frac{\begin{aligned} & \code{m} = \func{next}{} \quad \code{$\beta($Wn$)$} = \emph{def} \quad (\code{b pa} \mid \beta,\square) \hookrightarrow^*_{\code{b}} (\code{F} \mid \beta,\square) \quad \code{AP} = \code{abort} \\[-3pt] & \code{C} = \func{getAI}{\beta(\code{Wn})} \quad \delta' = \delta[\code{(m,T,T) $\rightharpoonup$ W}, \code{(m,C) $\rightharpoonup$ WI}] \quad \beta' = \beta[\code{Wn}] \end{aligned}}{(\code{while Wn b do AP end (pa,A)} \mid \delta,\beta,\square) \atran{m} (\code{skip m:A} \mid \delta',\beta',\square)} \\[8pt]
&\sos{W4aP} \quad \frac{\begin{aligned} & \code{m} = \func{next}{} \quad \code{$\beta($Wn$)$} = \code{while Wn b do AP$'$ end (pa,A)} \quad (\code{b pa} \mid \beta,\square) \hookrightarrow^*_{\code{b}} (\code{T} \mid \beta,\square) \quad \code{AP} = \code{abort} \\[-3pt] & \code{AR} = \code{while Wn b do } \func{reL}{\code{AP$'$}} \code{ end (pa,m:A)} \quad \delta' = \delta[\code{(m,T,T) $\rightharpoonup$ W}] \quad \beta' = \beta[\code{Wn $\Rightarrow$ AR}] \end{aligned}}{(\code{while Wn b do AP end (pa,A)} \mid \delta,\beta,\square) \atran{m} (\code{while Wn T do } \func{reL}{\code{AP$'$}} \code{ end (pa,m:A)} \mid \delta',\beta',\square)} \\[8pt]
&\sos{W3rP} \quad \frac{\begin{aligned}  &\code{m} = \func{previous}{}  \quad \code{A} = \code{m:A$'$} \quad \code{$\beta($Wn$)$} = \emph{und} \quad  \code{$\delta($W$)$} = \code{(m,T,T):W$'$} \quad \code{$\delta($WI$)$} = \code{(m,C):WI$'$} \\[-3pt] & \code{AR} = \code{while Wn b do } \func{IreL}{\func{setAI}{\code{IP},\code{C}}} \code{ end (pa,A$'$)} \quad \delta' = \delta[\code{W/W$'$}, \code{WI/WI$'$}] \quad \beta' = \beta[\code{Wn $\Rightarrow$ AR}] \end{aligned}}{(\code{while Wn b do IP end (pa,A)} \mid \delta,\beta,\square) \rtran{m} (\code{while Wn T do abort end (pa,A$'$)} \mid \delta',\beta',\square)} \\[8pt]
&\sos{W4rP} \quad \frac{\begin{aligned} &\code{m} = \func{previous}{} \quad \code{A} = \code{m:A$'$} \quad \code{$\beta($Wn$)$} = \emph{def} \quad  \code{$\delta($W$)$} = \code{(m,T,T):W$'$} \\[-3pt] & \code{AR} = \code{while Wn b do } \func{reL}{\code{IP}} \code{ end (pa,A$'$)} \quad \delta' = \delta[\code{W/W$'$}] \quad \beta' = \beta[\code{Wn $\Rightarrow$ AR}] \end{aligned}}{(\code{while Wn b do IP end (pa,A)} \mid \delta,\beta,\square) \rtran{m} (\code{while Wn T do abort end (pa,A$'$)} \mid \delta',\beta',\square)} 
\end{align*} }%


Consider a partial execution beginning part way through a loop iteration (the rules \sos{W2aPI} and \sos{W4aPI}). Likewise to above, the use of an abort flag can indicate the inverse iteration that requires partial inversion. The difference here is that the partial loop body performed forwards must also be reversed. In order to determine when to stop within the loop body, this partial version of the loop body \code{PP} is saved alongside the abort flag (note the further extension to stack \code{W} on $\delta$ such that $\delta[\code{(m,T,T,PP) $\rightharpoonup$ W}$). The while loop is extended to contain a copy of the partial program \code{PP} that remains unchanged during execution. Such statements do not appear during normal execution (due to abort) and only within our proof. The matching inverse rules, namely \sos{W3rPI} and \sos{W4rPI}, are only available provided a true abort flag and a partial program have been saved.  

{\footnotesize \begin{align*}
&\sos{W2aPI} \quad \frac{\begin{aligned} & \code{m} = \func{next}{} \quad \code{$\beta($Wn$)$} = \emph{def} \quad (\code{b pa} \mid \beta,\square) \hookrightarrow^*_{\code{b}} (\code{F} \mid \beta,\square) \quad \code{AP} = \code{abort} \quad \code{PP} \neq \code{skip} \\[-3pt] & \code{C} = \func{getAI}{\beta(\code{Wn})} \quad \delta' = \delta[\code{(m,T,T,PP) $\rightharpoonup$ W}, \code{(m,C) $\rightharpoonup$ WI}] \quad \beta' = \beta[\code{Wn}] \end{aligned}}{(\code{while Wn b do AP PP end (pa,A)} \mid \delta,\beta,\square) \atran{m} (\code{skip m:A} \mid \delta',\beta',\square)} \\[8pt]
&\sos{W4aPI} \quad \frac{\begin{aligned} & \code{m} = \func{next}{} \quad \code{$\beta($Wn$)$} = \code{while Wn b do AP$'$ AP$''$ end (pa,A)} \quad (\code{b pa} \mid \beta,\square) \hookrightarrow^*_{\code{b}} (\code{T} \mid \beta,\square)  \\[-3pt] & \code{AP} = \code{abort} \quad \code{PP} \neq \code{skip} \quad \code{AR} = \code{while Wn b do } \func{reL}{\code{AP$'$}} \code{ skip end (pa,m:A)} \\[-3pt] & \delta' = \delta[\code{(m,T,T,PP) $\rightharpoonup$ W}] \quad \beta' = \beta[\code{Wn $\Rightarrow$ AR}] \end{aligned}}
{(\code{S} \mid \delta,\beta,\square) \atran{m} (\code{while Wn T do AP$_1$ skip end (pa,m:A)} \mid \delta',\beta',\square)} \\ &\phantom{\sos{W4aPI} \quad } \text{where } \code{S} = \code{while Wn b do AP PP end (pa,A)}  \\[8pt]
&\sos{W3rPI} \quad \frac{\begin{aligned}  &\code{m} = \func{previous}{} \quad \code{A} = \code{m:A$'$} \quad \code{$\beta($Wn$)$} = \emph{und} \quad  \code{$\delta($W$)$} = \code{(m,T,T,PP):W$'$} \quad \code{$\delta($WI$)$} = \code{(m,C):WI$'$} \\[-3pt] & \code{PP} \neq \code{skip} \quad \code{AR} = \code{while Wn b do } \func{IreL}{\func{setAI}{\code{IP},\code{C}}} \code{ skip end (pa,A$'$)} \\[-3pt] & \delta' = \delta[\code{W/W$'$}, \code{WI/WI$'$}] \quad \beta' = \beta[\code{Wn $\Rightarrow$ AR}] \quad \code{PP$_1$} = \func{inv$^+$}{\code{PP}} \end{aligned}}
{(\code{S} \mid \delta,\beta,\square) \rtran{m} (\code{while Wn T do }\func{IreL}{\code{PP$_1$}} \code{ skip end (pa,A$'$)} \mid \delta',\beta',\square)} \\ &\phantom{\sos{W4rPI} \quad } \text{where } \code{S} = \code{while Wn b do IP PP$'$ end (pa,A)}\\[8pt]
&\sos{W4rPI} \quad \frac{\begin{aligned} &\code{m} = \func{previous}{} \quad \code{A} = \code{m:A$'$} \quad \code{$\beta($Wn$)$} = \emph{def} \quad  \code{$\delta($W$)$} = \code{(m,T,T,PP):W$'$} \quad \code{PP} \neq \code{skip} \\[-3pt] &  \code{AR} = \code{while Wn b do } \func{IreL}{\code{IP}} \code{ skip end (pa,A$'$)}\quad \delta' = \delta[\code{W/W$'$}] \\[-3pt] & \beta' = \beta[\code{Wn $\Rightarrow$ AR}] \quad \code{PP$_1$} = \func{inv$^+$}{\code{PP}} \end{aligned}}
{(\code{S} \mid \delta,\beta,\square) \rtran{m} (\code{while Wn T do }\func{IreL}{\code{PP$_1$}} \code{ skip end (pa,A$'$)} \mid \delta',\beta',\square)} \\ &\phantom{\sos{W4rPI} \quad } \text{where } \code{S} = \code{while Wn b do IP PP$'$ end (pa,A)} \end{align*} }%

\noindent \textbf{Procedure Call}
Consider a procedure call beginning at the end of the execution of the procedure body. The inverse execution should open the inverted call and then stop. Using rule \sos{G3aP}, the second program within the runc construct is manually set to abort (a situation that can never arise during normal execution and only within the proof). Alongside the saving of all identifiers prior to removal, a true abort flag is recorded. As for loops, consider a partial program beginning within the procedure body. The rule \sos{G3aPI} handles this situation, where the second program of the runc construct is the partial program. This partial program is saved alongside the abort flag. The inverse versions, namely \sos{G1rP} and \sos{G1rPI}, are only available provided a true abort flag exists.

{\footnotesize \begin{align*}
&\sos{G3aP} \quad \frac{\begin{aligned} & \texttt{m} = \func{next}{} \quad \texttt{$\mu$(Cn)} = \texttt{OP} \quad \texttt{AOP} = \texttt{abort} \\[-3pt] & \delta' = \delta[\texttt{(m,$getAI(\mu($Cn$)$,T$)$) $\rightharpoonup$ Pr}] \end{aligned}}{(\texttt{runc Cn skip I AOP end A} \mid \delta,\mu,\square) \atran{m} (\texttt{skip m:A} \mid \delta',\mu[\texttt{Cn}],\square)} \\[8pt]
&\sos{G3aPI} \quad \frac{\begin{aligned} &\code{m} = \func{next}{} \quad \code{$\mu$(Cn)} = \emph{def} \quad \code{AOP} \neq \code{abort} \quad \code{PP} \neq \code{skip} \\[-3pt] & \delta' = \delta[\code{(m,$getAI(\mu($Cn$)$,T,PP$)$) $\rightharpoonup$ Pr}] \quad \mu' = \mu[\code{Cn}] \end{aligned}}{(\code{runc Cn skip I AOP PP end A} \mid \delta,\mu,\square) \atran{m} (\code{skip m:A} \mid \delta',\mu',\square)} \\[8pt]
\end{align*} }%


\vspace{-.5cm}
{\footnotesize \begin{align*}
&\sos{G1rP} \quad \frac{\begin{aligned} & \texttt{m} = \func{previous}{} \quad \texttt{A} = \texttt{m:A$'$} \quad  \texttt{$\mu(evalP($n,pa$)$)} = \texttt{(n,IP)} \quad \texttt{$\delta($Pr$)$} = \texttt{(m,C,T):Pr$'$} \\[-3pt] & \delta' = \delta[\texttt{Pr/Pr$'$}] \quad \mu' = \mu[\texttt{Cn} \Rightarrow \texttt{(n,IP$'$)}] \quad \texttt{IP$'$} =  \func{IreP}{\func{setAI}{\code{IP},\code{C}},\code{Cn}} \end{aligned}}{(\texttt{call Cn n (pa,A)} \mid \delta,\mu,\square) \rtran{m} (\texttt{runc Cn abort IP$'$ end A$'$} \mid \delta', \mu',\square)} \\[8pt]
&\sos{G1rPI} \quad \frac{\begin{aligned} & \code{m} = \func{previous}{} \quad \code{A} = \code{m:A$'$} \quad  \texttt{$\mu(evalP($n,pa$)$)} = \texttt{(n,IP)} \quad \code{$\delta($Pr$)$} = \code{(m,C,T,PP):Pr$'$} \\[-3pt] & \delta' = \delta[\code{Pr/Pr$'$}] \quad \mu' = \mu[\code{Cn} \Rightarrow \code{(n,IP$'$)}] \quad \code{IP$'$} = \func{IreP}{\func{setAI}{\code{IP},\code{C}},\code{Cn}} \\[-3pt]& \code{PP$_1$} = \func{IreP}{\func{setAI}{\func{inv$^+$}{\code{PP}},\code{C}},\code{Cn}} \end{aligned}}
{(\code{call Cn n (pa,A)} \mid \delta,\mu,\square) \rtran{m} (\code{runc Cn PP$_1$ IP$'$ skip end A$'$} \mid \delta', \mu',\square)}
\end{align*} }%

\section{Statement Property} \label{appen-sp}
We now give a proof of Lemma~\ref{lem:sp} (Statement Property). By abuse of notation, we use \code{AS} to represent either a complete or partially executed statement. In cases considering full statement execution (e.g. Case~\ref{sp-d1a}), we ignore the abort statement introduced by \func{inv$^+$}{} as the program stops as desired.

\lemsp*
\setcounter{lemma}{9}

All executions from here are uniform and so we omit the specific notation. A uniform execution $(\code{P} \mid \square) \uatran{m}{} (\code{P$'$} \mid \square')$ is now written as $(\code{P} \mid \square) \atran{m} (\code{P$'$} \mid \square')$.

\begin{proof}
This proof is by mutual induction of the Statement Property (this lemma) and the Program Property (Lemma~\ref{lem:pp}), on the length of the executions $ (\code{AS} \mid \square) \atran{\circ}^* (\code{skip I} \mid \square') $ and $ (\code{AP} \mid \square) \atran{\circ}^* (\code{skip I} \mid \square') $ respectively.

With no executions of length 0, our base cases are any executions of length 1. The two base cases are a single assignment statement (Case~\ref{sp-d1a}, with equivalent cases for each type of assignment omitted) and a single while loop statement that performs zero iterations (Case~\ref{sp-w1a}). 

\begin{case}{(Assignment \sos{D1a})} \label{sp-d1a}
Consider a single assignment of the value of the expression \code{e} to the variable \code{X}. Let \code{AS} = \code{X~=~e~(pa,A)} and $\square$ be the tuple of initial program state environments $\allstores$. Assume an execution exists with the first transition via \sos{D1a} such that $$ (\code{AS} \mid \square) = (\code{X~=~v~(pa,A)} \mid \square) \atran{m} (\code{skip (pa,m:A)} \mid \square') $$ for identifier \code{m} and program state $\square'$ such that $\square'  = \square[\sigma[\code{l $\mapsto$ v}], \delta[\code{(m,v1) $\rightarrow$ X}]]$, where \code{l} is the memory location for the variable \code{X}, \code{v} is the result of evaluating the expression \code{e} (the new value) and \code{v1} is the current value of \code{X} (the old value). Application of the rule \sos{D1a} means the premises were valid at that point, namely that \code{m} = \func{next}{}, $(\code{e pa} \mid \square) \hookrightarrow^*_{\code{a}} (\code{v} \mid \square) $ (the expression evaluates to \code{v}) and $ \code{$evalV($X,pa,$\gamma)$} = \code{l} $ (the variable is bound to memory location \code{l}).

We need to show that there exists an execution $$(\func{inv$^+$}{\code{AS}} \mid \square_1') \rtran{\circ}^* (\code{AT$''$} \mid \square_1'') \rtran{m} (\code{AT$'$} \mid \square_1) \rtran{}_s^* (\code{AT} \mid \square_1) $$ for some statements \code{AT$''$}, \code{AT$'$}, \code{AT}, and program states $\square_1'$, $\square_1''$, $\square_1$ such that $\square_1' \approx \square'$ and $\square_1 \approx \square$. Firstly, we have that \func{inv$^+$}{\code{AS}} = \code{X = e (pa,m:A)}. Beginning in the program state $\square_1'$, such that $\square_1' \approx \square'$, we note that no further execution was performed forwards and so does not require inversion. Therefore \code{AT$''$} = \code{X = e (pa,m:A)}, and $\square_1'$ = $\square_1''$, giving the execution $(\func{inv$^+$}{\code{AS}} \mid \square_1') \rtran{\circ}^* (\code{AT$''$} \mid \square_1'') $ with length 0.

From the program state $\square_1'$, we have that \code{m} = \func{previous}{} and that the identifier stack for this assignment is equal to \code{m:A} (\code{m} as its head). Also, $\delta_1'$(\code{X}) = \code{(m,v1):X$'$} (the stack \code{X} contains the pair \code{(m,v1)} as its head) and $ \code{$evalV($X,pa,$\gamma)$} = \code{l} $ (the variable is bound to the memory location \code{l} or equivalent). With all premises of the rule \sos{D1r} valid, and that no other rule is applicable, we have the execution $$ (\func{inv$^+$}{\code{AS}} \mid \square_1') = (\code{X~=~e~(pa,m:A)} \mid \square_1') \rtran{m} (\code{skip (pa,A)} \mid \square_1) $$ for state $\square_1$ such that $\square_1 = \square_1'[\sigma_1'[\code{l $\mapsto$ v1}]$ and $\delta_1'[\code{X/X$'$}]]$, meaning $\square_1 \approx \square$ (as required). This execution has reached skip and so no further execution is available. Therefore we take \code{AT$'$} = \code{skip (pa,A)} and \code{AT} = \code{AT$'$}. With program states matching, and \code{AT} being skip, we have shown our desired inverse execution and therefore this case to hold.
\end{case}

\begin{case}{\textbf{(Loop \sos{W1a})}} \label{sp-w1a}
Consider a while loop with zero iterations (loop condition immediately evaluates to false). This cannot be partially executed as it completes in a single step, therefore no copies are included. Let \code{AS} = \code{while Wn b do AP end (pa,A)} and $\square$ be the tuple of initial program state environments $\allstores$. Assume an execution exists with the first transition via \sos{W1a} $$ (\code{AS} \mid \square) = (\code{while Wn b do AP end (pa,A)} \mid \square) \atran{m} (\code{skip (pa,m:A)} \mid \square') $$ for program state $\square'$ such that $\square'$ = $\square[\delta[\code{(m,F,F) $\rightarrow$ W}]]$  (noting the false abort flag that is of no consequence here). Application of the rule \sos{W1a} means all premises were valid at this point, namely that \code{m} = \func{next}{}, $\beta$(\code{Wn}) = \emph{und} (there is no mapping for this loop within $\beta$ since it has not started) and $(\code{b pa} \mid \square) \hookrightarrow^*_{\code{b}} (\code{F} \mid \square) $ (the condition evaluates to \code{F}).

We need to show that there exists an execution $$(\func{inv$^+$}{\code{AS}} \mid \square_1') \rtran{\circ}^* (\code{AT$''$} \mid \square_1'') \rtran{m} (\code{AT$'$} \mid \square_1) \rtran{}_s^* (\code{AT} \mid \square_1) $$ for some statements \code{AT$''$}, \code{AT$'$}, \code{AT}, and program states $\square_1'$, $\square_1''$, $\square_1$ such that $\square_1' \approx \square'$ and $\square_1 \approx \square$. We have that \func{inv$^+$}{\code{AS}} =  \code{while Wn b do} \func{inv}{\code{AP}} \code{end (pa,m:A)} (note can ignore the abort in this case). Starting in the program state $\square_1'$ such that $\square_1' \approx \square'$, the forward execution above does not contain any further execution and so no use of the induction hypothesis is required. We therefore take \code{AT$''$} =  \code{while Wn b do} \func{inv}{\code{AP}} \code{end (pa,m:A)}, and $\square_1'$ = $\square_1''$, giving the execution $(\func{inv$^+$}{\code{AS}} \mid \square_1') \rtran{\circ}^* (\code{AT$''$} \mid \square_1'') $ with length 0.

From the program state $\square_1'$, we have that \code{m} = \func{previous}{} (since no further identifiers have been used) and that the identifier stack for this loop is equal to \code{m:A} (\code{m} as its head). We also have that $\beta_1'$(\code{Wn}) = \emph{und} (since the while loop has not started) and $\delta_1'$(\code{W}) = \code{(m,F,F):W$'$} (the stack \code{W} has the triple \code{(m,F,F)} as its head). This shows that all premises of \sos{W1r} are valid, and that no other rule is applicable. This gives the execution $$ (\code{while Wn b do}~\func{inv}{\code{AP}}~\code{end (pa,m:A)} \mid \square_1'') \rtran{m} (\code{skip (pa,A)} \mid \square_1) $$ for program state $\square_1$ such that $\square_1$ = $\square_1''[\delta_1''[\code{W/W$'$}]]$ (with $\delta_1''(\code{W}) = \code{(m,F,F):W$'$}$), resulting in $\square_1 = \square$ as required. This execution has reached skip and so no further execution is available. Therefore we take \code{AT$'$} = \code{skip (pa,A)} and \code{AT} = \code{AT$'$}. With \code{AT} having reached skip and program states that match at each required position, we have our desired inverse execution meaning this case holds.
\end{case}

With Lemma~\ref{lem:sp} valid for all base cases, it is valid for all executions of length 1. We now consider inductive cases. Assume Lemma~\ref{lem:sp} holds for all statements \code{AR} and program states $\square^*$ such that the execution is of length $k$ (where $k \geq 1$), namely $ (\code{AR} \mid \square) \atran{m} (\code{AR$'$} \mid \square'') \atran{}_s^*  (\code{AR$''$} \mid \square'') \atran{\circ}^* (\code{skip I} \mid \square')$ (induction hypothesis). Now assume that the execution $ (\code{AS} \mid \square) \atran{m} (\code{AS$'$} \mid \square'') \atran{}_s^*  (\code{AS$''$} \mid \square'') \atran{\circ}^* (\code{skip I} \mid \square')$ has length $l$ such that $l > k$.

Inductive cases considered are for conditional statements (Case~\ref{sp-i1at}--\ref{sp-i5a}), loops (Cases~\ref{sp-w2a}--\ref{sp-w5a}), blocks (Case~\ref{sp-b1a}--\ref{sp-h2a}) and call statements (Cases~\ref{sp-g1a}--\ref{sp-g3a}). 

\vspace{.3cm}
\noindent \textbf{Conditional Statements}

\begin{case}{\textbf{(Conditional \sos{I1aT})}} \label{sp-i1at} 
Consider the opening of a conditional statement. Let \code{AS} = \code{if In b b then AP AP else AQ AQ end (pa,A)} and the initial program state $\square$ = $\allstores$. Assume the following execution exists where the first transition is via \sos{I1aT} $$ \begin{aligned} &(\code{AS} \mid \square) = (\code{if In b b then AP AP else AQ AQ end (pa,A)} \mid \square) \\ &\phantom{(\code{AS} \mid \square)} \atran{m} (\code{if In T b then AP AP else AQ AQ end (pa,m:A)} \mid \square'') \\ &\phantom{(\code{AS} \mid \square)} \atran{}_s^* (\code{if In T b then AP$'$ AP else AQ AQ end (pa,m:A)} \mid \square'') \\ &\phantom{(\code{AS} \mid \square)} \atran{\circ}^* (\code{skip (pa,A$'$)} \mid \square')  \end{aligned} $$ for some program \code{AP$'$} and program states $\square''$, $\square'$ such that $\square''$ = $\square$ (as \sos{I1aT} does not change the program state), and an identifier stack \code{A$'$} that ends with the sub-stack \code{(m:A)}. Application of \sos{I1aT} means that all premises were valid at the time, namely \code{m} = \func{next}{} and $ (\code{b pa} \mid \square) \hookrightarrow_b^* (\code{T} \mid \square) $. Since this transition opens a conditional, \code{AS} must be a complete statement and therefore \code{AP} must also be a complete program. As such, it is possible for \code{AP} to begin with skip steps meaning the execution $$ \begin{aligned} &(\code{if In T b then AP AP else AQ AQ end (pa,m:A)} \mid \square'') \\ &\phantom{} \atran{}_s^* (\code{if In T b then AP$'$ AP else AQ AQ end (pa,m:A)} \mid \square'') \end{aligned} $$ exists for some \code{AP$'$}. This group of skip steps are said to have been caused by the identifier rule using \code{m}. The inverted version of this group will be contained within the uniform inverse execution of the remaining forward program (corresponding inverse skip steps will be caused by the previous inverse identifier rule). From here, we use \code{AS$'$} to represent \code{if In T b then AP$'$ AP else AQ AQ end (pa,m:A)}.   

We need to show that there exists an execution $$(\func{inv$^+$}{\code{AS}} \mid \square_1') \rtran{\circ}^* (\code{AT$''$} \mid \square_1'') \rtran{m} (\code{AT$'$} \mid \square_1) \rtran{}_s^* (\code{AT} \mid \square_1) $$ for some statements \code{AT$''$}, \code{AT$'$}, \code{AT}, and program states $\square_1'$, $\square_1''$, $\square_1$ such that $\square_1' \approx \square'$, $\square_1'' \approx \square''$ and $\square_1 \approx \square$. From Figure~\ref{func-def:inv+}, we have that \func{inv$^+$}{\code{AS}} = \code{if In b b then} \func{inv}{\code{AP}}~\func{inv}{\code{AP}} \code{else} \func{inv}{\code{AQ}}~\func{inv}{\code{AQ}} \code{end (pa,A$'$)} (where the sequentially composed abort statement can be ignored here as there is no further execution and will stop at the required position).

From our assumed execution above, we note that the execution $$ \begin{aligned} &(\code{AS$'$} \mid \square'') = (\code{if In T b then AP$'$ AP else AQ AQ end (pa,m:A)} \mid \square'') \\ & \phantom{(\code{AS$'$} \mid \square)} \atran{\circ}^* (\code{skip (pa,A$'$)} \mid \square') \end{aligned} $$ is a uniform execution and must begin with an identifier step. Since \code{AS$'$} must be a partially executed program, from Figure~\ref{func-def:inv+}, we have that \func{inv$^+$}{\code{AS$'$}} = \code{if In b b then} \func{inv$^+$}{\code{AP$'$}} \func{inv}{\code{AP}} \code{else} \func{inv}{\code{AQ}} \func{inv}{\code{AQ}} \code{end (pa,m:A)}, where an abort is added to the end of the inverted true branch in order to stop the execution at the required position (after entire execution of the true branch but before the closure of the conditional). The induction hypothesis of the Statement Property (Lemma~\ref{lem:sp}) applied to this shorter execution gives us $$ \begin{aligned} & (\func{inv$^+$}{\code{AS$'$}} \mid \square_1') = (\code{if In b b then}~\func{inv$^+$}{\code{AP}}~\func{inv}{\code{AP}} \\&\phantom{(\func{inv$^+$}{\code{AS$'$}} \mid \square_1')~=~ } \code{else } \func{inv}{\code{AQ}}~\func{inv}{\code{AQ}} \code{ end (pa,m:A)} \mid \square_1') \\&\phantom{(\func{inv$^+$}{\code{AS$'$}} \mid \square_1') } \rtran{\circ}^* (\code{AR$'$} \mid \square_1'') \end{aligned} $$ by a sequence of rule applications SR, for a skip equivalent \code{AR$'$} and states $\square_1'$, $\square_1''$ such that $\square_1' \approx \square'$ and $\square_1'' \approx \square''$. Since \code{AP} must be a full program, \func{inv$^+$}{\code{AP}} sequentially composes an abort statement to the end. As a result, the induction hypothesis use above performs the entire inverted true branch until the abort is reached. Therefore \code{AR$'$} = \code{if In T b then abort} \func{inv}{\code{AP}} \code{else} \func{inv}{\code{AQ}}~\func{inv}{\code{AQ}} \code{end (pa,m:A)}. 

Now compare the statements \func{inv$^+$}{\code{AS}} and \func{inv$^+$}{\code{AS$'$}}. They are the same with the exception that \func{inv$^+$}{\code{AS$'$}} contains an abort statement at the end of the true branch. Since both will begin with the same execution (inversion of the rest of the forward program after the identifier plus skip steps), then by the same sequence of rules SR we obtain $$ \begin{aligned} & (\func{inv$^+$}{\code{AS}} \mid \square_1') = (\code{if In b b then } \func{inv$^+$}{\code{AP}}~\func{inv}{\code{AP}} \\ & \phantom{(\func{inv$^+$}{\code{AS}} \mid \square_1') = } \code{ else } \func{inv}{\code{AQ}}~\func{inv}{\code{AQ}} \code{ end (pa,A$'$)} \mid \square_1') \\ & \phantom{(\func{inv$^+$}{\code{AS}} \mid \square_1')} \rtran{\circ}^* (\code{AR} \mid \square_1'') \end{aligned} $$ for program \code{AR}. The same sequence of rules mean that this transition sequence will perform the execution of the entire true branch, reaching the point at which the execution of \func{inv$^+$}{\code{AS$'$}} hit the abort statement (minus the trivial \sos{S1a} rule that has no effect on the program state). Therefore \code{AR} = \code{if In T b then skip I$'$} \func{inv}{\code{AP}} \code{else} \func{inv}{\code{AQ}}~\func{inv}{\code{AQ}} \code{end (pa,m:A)} for some \code{I$'$}. With the program states matching to our desired execution, we can take \code{AT$''$} = \code{AR}. 

From program state $\square_1''$ (recall $\square_1'' \approx \square''$), we have that \code{m} = \func{previous}{} (since no further identifiers have been used) and the identifier stack for this statement is equal to \code{m:A}. This shows all premises of the rule \sos{I4r} to be valid, and that no other rule is applicable. Using \sos{I4r}, we have $$ \begin{aligned} & (\code{AT$''$} \mid \square_1'') = (\code{if In T b then skip I$'$ } \func{inv}{\code{AP}} \code{ else } \func{inv}{\code{AQ}}~\func{inv}{\code{AQ}} \\ & \phantom{(\code{AT$''$} \mid \square_1'') = } \code{ end (pa,m:A)} \mid \square_1'')    \rtran{m} (\code{skip (pa,A)} \mid \square_1) \end{aligned} $$ for program state $\square_1$ such that $\square_1$ = $\square_1''$ (since this rule does not change the program state), meaning $\square_1 \approx \square$ as required. With the program states matching, we can take \code{AT$'$} = \code{skip (pa,A)}. With no skip steps to apply (already reached skip), we can take \code{AT} = \code{AT$'$}. Since \code{AT} is skip, we can conclude this case holds.
\end{case}

\begin{case}{\textbf{(Conditional \sos{I1aF})}} \label{sp-i1af} 
Consider the opening of a conditional statement where the condition evaluates to false. This follows Case~\ref{sp-i1at} using rules \sos{I1aF} and \sos{I5a} in place of \sos{I1aT} and \sos{I4a} respectively.
\end{case}

\begin{case}{\textbf{(Conditional \sos{I2a})}} \label{sp-i2a} 
Consider an identifier step from within the true branch of a conditional statement. Let \code{AS} = \code{if In T b then AP AOP else AQ AQ end (pa,A)} and the initial program state $\square$ = $\allstores$. Assume the following execution exists with the first transition via the rule \sos{I2a} $$ \begin{aligned} &(\code{AS} \mid \square) = (\code{if In T b then AP AOP else AQ AQ end (pa,A)} \mid \square) \\&\phantom{(\code{AS} \mid \square)} \atran{m} (\code{if In T b then AP$''$ AOP else AQ AQ end (pa,A)} \mid \square'') \\&\phantom{(\code{AS} \mid \square)} \atran{}_s^* (\code{if In T b then AP$'$ AOP else AQ AQ end (pa,A)} \mid \square'') \\&\phantom{(\code{AS} \mid \square)} \atran{\circ}^* (\code{skip (pa,A$'$)} \mid \square') \end{aligned} $$ for programs \code{AP$''$}, \code{AP$'$}, and program states $\square''$ and $\square'$. We note that this assumed execution can be rewritten to highlight the point at which the true branch finishes, specifically  $$ \begin{aligned} &(\code{AS} \mid \square) = (\code{if In T b then AP AOP else AQ AQ end (pa,A)} \mid \square) \\&\phantom{(\code{AS} \mid \square)} \atran{m} (\code{if In T b then AP$''$ AOP else AQ AQ end (pa,A)} \mid \square'') \\&\phantom{(\code{AS} \mid \square)} \atran{}_s^* (\code{if In T b then AP$'$ AOP else AQ AQ end (pa,A)} \mid \square'') \\&\phantom{(\code{AS} \mid \square)} \atran{\circ}^* (\code{if In T b then skip I$_1$ AOP else AQ AQ end (pa,A)} \mid \square''') \\&\phantom{(\code{AS} \mid \square)} \atran{n} (\code{skip (pa,n:A)} \mid \square') \end{aligned} $$ for some program state $\square'''$. From here, let \code{AS$'$} = \code{if In T b then skip I$_1$ AOP else AQ AQ end (pa,A)}.

We need to show that there exists an execution $$(\func{inv$^+$}{\code{AS}} \mid \square_1') \rtran{\circ}^* (\code{AT$''$} \mid \square_1'') \rtran{m} (\code{AT$'$} \mid \square_1) \rtran{}_s^* (\code{AT} \mid \square_1) $$ for some statements \code{AT$''$}, \code{AT$'$}, \code{AT}, and program states $\square_1'$, $\square_1''$, $\square_1$ such that $\square_1' \approx \square'$, $\square_1'' \approx \square''$ and $\square_1 \approx \square$. From Figure~\ref{func-def:inv+}, we have that \func{inv$^+$}{\code{AS}} = \code{if In b b then} \func{inv$^+$}{\code{AP}} \func{inv}{\code{AOP}} \code{else} \func{inv}{\code{AQ}} \func{inv}{\code{AQ}} \code{end (pa,n:A)}, with abort inserted to stop the inverse execution at the end of the inverted true branch. 

From our rewritten assumed execution, $$ \begin{aligned} &(\code{AS$'$} \mid \square''') = (\code{if In T b then skip I$_1$ AOP else AQ AQ end (pa,A)} \mid \square''')  \\ &\phantom{(\code{AS$'$} \mid \square''')} \atran{n} (\code{skip (pa,n:A)} \mid \square') \end{aligned} $$ must be the closing of the conditional, via the rule \sos{I4a}. There can only be one matching inverted execution step, namely the opening of the inverted conditional statement via the rule \sos{I1rT} (the very first step of the inverse execution). Therefore we have the execution \begin{equation} \label{eq-sp-case-i2a-part1} \begin{aligned} &(\func{inv$^+$}{\code{AS}} \mid \square_1') = (\code{if In b b then } \func{inv$^+$}{\code{AP}}~\func{inv}{\code{AOP}} \\&\phantom{(\func{inv$^+$}{\code{AS}} \mid \square_1') ===} \code{ else } \func{inv}{\code{AQ}} ~\func{inv}{\code{AQ}} \code{ end (pa,n:A)} \mid \square_1') \\&\phantom{(\func{inv$^+$}{\code{AS}} \mid \square_1')} \rtran{n} (\code{if In T b then } \func{inv$^+$}{\code{AP}}~\func{inv}{\code{AOP}} \\& \phantom{(\func{inv$^+$}{\code{AS}} \mid \square_1') ===} \code{ else } \func{inv}{\code{AQ}} ~\func{inv}{\code{AQ}} \code{ end (pa,A)} \mid \square_1''') \\&\phantom{(\func{inv$^+$}{\code{AS}} \mid \square_1')} \rtran{}_s^* (\code{if In T b then IP$'$} ~\func{inv}{\code{AOP}} \code{ else } \func{inv}{\code{AQ}} ~\func{inv}{\code{AQ}} \\& \phantom{(\func{inv$^+$}{\code{AS}} \mid \square_1') ===}\code{ end (pa,A)} \mid \square_1''')  \end{aligned} \end{equation} for some program \code{IP$'$} and program states $\square_1'$, $\square_1'''$ such that $\square_1' \approx \square'$ and $\square_1''' \approx \square'''$.

Returning to our assumed execution, repeated use of \sos{I2a} (from conclusion to premises) allows us to obtain the shorter execution (as the conditional must close) $$ \begin{aligned} & (\code{AP} \mid \square)  \atran{m} (\code{AP$''$} \mid \square'') \atran{}_s^* (\code{AP$'$} \mid \square'') \atran{\circ}^* (\code{skip I$_1$} \mid \square''')\end{aligned}  $$ With this guaranteed to be a uniform execution and to begin with an identifier step, application of the induction hypothesis of Part~2 of the Program Property (Lemma~\ref{lem:pp}) gives $$ \begin{aligned} & (\func{inv$^+$}{\code{AP}}  \mid \square_1''')  \rtran{\circ}^* (\code{AR$''$} \mid \square_1'') \rtran{m} (\code{AR$'$} \mid \square_1) \rtran{}_s^* (\code{AR} \mid \square_1)\end{aligned} $$ for some programs \code{AR$''$}, \code{AR$'$}, \code{AR} such that \code{AR} is a skip equivalent, and program states $\square_1'''$, $\square_1''$, $\square_1$ such that $\square_1''' \approx \square'''$, $\square_1'' \approx \square''$ and $\square_1 \approx \square$. Through repeated use of the corresponding inverse rule \sos{I2r} (premises to conclusion), we have the execution 
\begin{equation} \label{eq-sp-case-i2a-part2}
\begin{split} &(\code{if In T b then } \func{inv$^+$}{\code{AP}} ~\func{inv}{\code{AOP}} \code{ else } \func{inv}{\code{AQ}} ~\func{inv}{\code{AQ}} \code{ end (pa,A)} \mid \square_1''') \\&  \rtran{\circ}^* (\code{if In T b then } \code{AR$''$} ~\func{inv}{\code{AOP}} \code{ else } \func{inv}{\code{AQ}} ~\func{inv}{\code{AQ}} \code{ end (pa,A)} \mid \square_1'') \\& \rtran{m} (\code{if In T b then } \code{AR$'$} ~\func{inv}{\code{AOP}} \code{ else } \func{inv}{\code{AQ}} ~\func{inv}{\code{AQ}} \code{ end (pa,A)} \mid \square_1) \\& \rtran{}_s^* (\code{if In T b then } \code{AR} ~\func{inv}{\code{AOP}} \code{ else } \func{inv}{\code{AQ}} ~\func{inv}{\code{AQ}} \code{ end (pa,A)} \mid \square_1) \end{split}
\end{equation}
Since \code{AR} is either skip or a skip equivalent produced via the induction hypothesis, by the definition of skip equivalents, we can conclude that \code{if In T b then} \code{AR} \func{inv}{\code{AOP}} \code{else} \func{inv}{\code{AQ}} \func{inv}{\code{AQ}} \code{end (pa,A)} is also a skip equivalent. Note that if there are skip steps to apply within \ref{eq-sp-case-i2a-part1}, then those same steps in the same order will be present within the first transition of \ref{eq-sp-case-i2a-part2}. Through the composition of the executions \ref{eq-sp-case-i2a-part1} and \ref{eq-sp-case-i2a-part2} in that order, we get $$ \begin{aligned} &(\code{if In b b then } \func{inv$^+$}{\code{AP}}~\func{inv}{\code{AOP}} \code{ else } \func{inv}{\code{AQ}} ~\func{inv}{\code{AQ}} \\&\phantom{=} \code{ end (pa,n:A)} \mid \square_1') \\&\phantom{} \rtran{n} (\code{if In T b then } \func{inv$^+$}{\code{AP}}~\func{inv}{\code{AOP}}  \code{ else } \func{inv}{\code{AQ}} ~\func{inv}{\code{AQ}} \\& \phantom{=} \code{ end (pa,A)} \mid \square_1''') \\&  \rtran{\circ}^* (\code{if In T b then } \code{AR$''$} ~\func{inv}{\code{AOP}} \code{ else } \func{inv}{\code{AQ}} ~\func{inv}{\code{AQ}} \code{ end (pa,A)} \mid \square_1'') \\& \rtran{m} (\code{if In T b then } \code{AR$'$} ~\func{inv}{\code{AOP}} \code{ else } \func{inv}{\code{AQ}} ~\func{inv}{\code{AQ}} \code{ end (pa,A)} \mid \square_1) \\& \rtran{}_s^* (\code{if In T b then } \code{AR} ~\func{inv}{\code{AOP}} \code{ else } \func{inv}{\code{AQ}} ~\func{inv}{\code{AQ}} \code{ end (pa,A)} \mid \square_1)  \end{aligned} $$ which matches our desired execution at each point. Therefore we take \code{AT$'$} = \code{if In T b then} \code{AR$'$} ~\func{inv}{\code{AOP}} \code{else} \func{inv}{\code{AQ}} \func{inv}{\code{AQ}} \code{end (pa,A)} and \code{AT} = \code{if In T b then} \code{AR} \func{inv}{\code{AOP}} \code{else} \func{inv}{\code{AQ}} \func{inv}{\code{AQ}} \code{end (pa,A)}, meaning this case holds.
\end{case}

\begin{case}{\textbf{(Conditional \sos{I3a})}} \label{sp-i3a} 
Consider an identifier step that comes from the false branch of a conditional statement. This case matches Case~\ref{sp-i2a} but uses \sos{I3a} in place of \sos{I2a}.
\end{case}

\begin{case}{\textbf{(Conditional \sos{I4a})}} \label{sp-i4a} 
Consider the closing of a conditional statement (the final step). Let \code{AS} = \code{if In T b then skip I AP else AQ AQ end (pa,A)} and the initial program state $\square$ = $\allstores$. Assume the following execution exists with the first transition via \sos{I4a} $$ \begin{aligned} & (\code{AS} \mid \square) = (\code{if In T b then skip I AP else AQ AQ end (pa,A)} \mid \square) \\ &\phantom{(\code{AS} \mid \square)} \atran{m} (\code{skip (pa,m:A)} \mid \square') \end{aligned} $$ for program state $\square'$ such that $\square'$ = $\square$[$\delta$[\code{(m,T) $\rightharpoonup$ B}]].  Using \sos{I4a} means all premises were valid at the time, namely \code{m} = \func{next}{}. Note that no skip steps or further execution is possible.

We need to show that there exists an execution $$(\func{inv$^+$}{\code{AS}} \mid \square_1') \rtran{\circ}^* (\code{AT$''$} \mid \square_1'') \rtran{m} (\code{AT$'$} \mid \square_1) \rtran{}_s^* (\code{AT} \mid \square_1) $$ for some statements \code{AT$''$}, \code{AT$'$}, \code{AT}, and program states $\square_1'$, $\square_1''$, $\square_1$ such that $\square_1' \approx \square'$, $\square_1'' \approx \square''$ and $\square_1 \approx \square$. From Figure~\ref{func-def:inv+}, we have that \func{inv$^+$}{\code{AS}} = \code{if In b b then abort} \func{inv}{\code{AP}} \code{else} \func{inv}{\code{AQ}} \func{inv}{\code{AQ}} \code{end (pa,m:A)}, with abort inserted to stop the inverse execution at the beginning of the execution of the inverse true branch. 

From \func{inv$^+$}{\code{AS}}, it is clear that we are immediately in a state such that the corresponding inverse rule \sos{I1rT} can be applied. We therefore take \code{AT$''$} = \func{inv$^+$}{\code{AS}} and $\square_1''$ = $\square_1'$, giving the first part of our desired execution with length 0.

From the program state $\square_1''$ (recall $\square_1''$ = $\square_1'$), we have that \code{m} = \func{previous}{} (since no other identifiers were used), the identifier stack for this statement is equal to \code{m:A} and $\square_1''$($\delta_1''$(\code{B})) = \code{(m,T):B$'$} where \code{B$'$} is the remaining stack. With all premises of the rule \sos{I1rT} shown to be valid, and that no other rule can be applied due to these premises, application of the rule \sos{I1rT} (the corresponding inverse identifier rule as expected) gives us $$ \begin{aligned} &(\code{AT$''$} \mid \square_1'') = (\code{if In b b then abort } \func{inv}{\code{AP}} \code{ else } \func{inv}{\code{AQ}}~\func{inv}{\code{AQ}} \\&\phantom{(\code{AT$''$} \mid \square_1'') = } \code{ end (pa,m:A)} \mid \square_1'') \\&\phantom{(\code{AT$''$} \mid \square_1'')} \rtran{m} (\code{if In T b then abort } \func{inv}{\code{AP}} \code{ else } \func{inv}{\code{AQ}}~\func{inv}{\code{AQ}} \\&\phantom{(\code{AT$''$} \mid \square_1'') = } \code{ end (pa,A)} \mid \square_1) \\&\phantom{(\code{AT$''$} \mid \square_1'')} \rtran{}_s^* (\code{if In T b then AP$'$ } \func{inv}{\code{AP}} \code{ else } \func{inv}{\code{AQ}}~\func{inv}{\code{AQ}} \\&\phantom{(\code{AT$''$} \mid \square_1'') = } \code{ end (pa,A)} \mid \square_1) \end{aligned} $$ for some program \code{AP$'$} and program state $\square_1$ such that $\square_1$ = $\square_1''$[$\delta_1''$[\code{B/B$'$}]], meaning $\square_1 \approx \square$ as required. The abort statement forcibly stops the execution at that point, meaning no skip steps are available in the final transition. Therefore \code{AP$'$} = \code{abort}. We can take \code{AT$'$} = \code{if In T b then abort} \func{inv}{\code{AP}} \code{else} \func{inv}{\code{AQ}}\func{inv}{\code{AQ}} \code{end (pa,A)}, and \code{AT} = \code{AT$'$}. As such, this shows our desired execution and therefore this case to hold.
\end{case}

\begin{case}{\textbf{(Conditional \sos{I5a})}} \label{sp-i5a} 
Consider the closing of a conditional statement that previously executed the false branch. This follows Case~\ref{sp-i4a} and uses the rules \sos{I5a} and \sos{I1rF} instead of \sos{I4a} and \sos{I1rT} respectively.
\end{case}

\vspace{.3cm}
\noindent \textbf{While Loops}

\begin{case}{\textbf{(Loop \sos{W2aP})}} \label{sp-w2a}
Consider the closing of a while loop. This final step of the forward execution will be the first step of the reverse execution. To stop the execution after the matching inverse step, an abort statement is manually inserted to ensure the rule \sos{W2aP} is used. Because of this, the second copies are not required. In normal execution outside of the proof, this abort will not appear and the rule \sos{W2a} is used. Therefore let \code{AS} = \code{while Wn b do abort end (pa,A)} and the initial program state $\square$ = $\allstores$. Assume the following execution exists with the first transition via \sos{W2aP} $$ (\code{AS} \mid \square) = (\code{while Wn b do abort end (pa,A)} \mid \square) \atran{m} (\code{skip (pa,m:A)} \mid \square') $$ for program state $\square'$ such that $\square'$ = $\square[\delta[\code{(m,T,T) $\rightarrow$ W, (m,C) $\rightarrow$ WI}], \beta[\texttt{Wn}]]$. Note that no skip steps are available, and that a true abort flag is also pushed to stack \code{W}. All premises of \sos{W2aP} must have been valid, namely \code{m} = \func{next}{}, $\beta$(\code{Wn}) = \code{while Wn b do AP end (pa,A)} (a mapping exists as the loop must have started), $ (\code{b pa} \mid \square) \hookrightarrow_b^* (\code{F} \mid \square) $ (the condition evaluates to false) and the body of the loop is \code{abort}.

We need to show that there exists an execution $$(\func{inv$^+$}{\code{AS}} \mid \square_1') \rtran{\circ}^* (\code{AT$''$} \mid \square_1'') \rtran{m} (\code{AT$'$} \mid \square_1) \rtran{}_s^* (\code{AT} \mid \square_1) $$ for some statements \code{AT$''$}, \code{AT$'$}, \code{AT}, and program states $\square_1'$, $\square_1''$, $\square_1$ such that $\square_1' \approx \square'$, $\square_1'' \approx \square''$ and $\square_1 \approx \square$. From Figure~\ref{func-def:inv+}, we have that \func{inv$^+$}{\code{AS}} = \code{while Wn b do} \func{inv}{\code{AP}} \code{end (pa,A)}.

The assumed forward execution is the closing of a while loop (the final step), meaning the corresponding inverse step will be the opening of the inverse loop (the first step). Therefore we can take \code{AT$''$} = \func{inv$^+$}{\code{AS}} and $\square_1''$ = $\square_1'$, giving the first part of our desired execution with length 0. 

From program state $\square_1''$, we have that \code{m} = \func{previous}{} (since no other identifiers were saved), the identifier stack for this statement is equal to \code{m:A}, $\square_1''$($\delta_1''$(\code{W}) = \code{(m,T,T):W$'$}, $\delta_1''$(\code{WI}) = \code{(m,C):WI$'$})), $\beta_1''$(\code{Wn}) = \emph{und})). This means that all premises of the corresponding inverse rule \sos{W3rP} are valid, and that no other rule is applicable. Using the rule \sos{W3rP} (the corresponding inverse identifier rule as expected) we get $$ \begin{aligned} &(\code{AT$''$} \mid \square_1'') = (\code{while Wn b do } \func{inv}{\code{AP}} \code{ end (pa,m:A)} \mid \square_1'') \\&\phantom{(\code{AT$''$} \mid \square_1'')} \rtran{m} (\code{while Wn T do abort end (pa,A)} \mid \square_1) \end{aligned} $$ for program state $\square_1$ such that $\square_1$ = $\square_1''$[$\delta_1''$[\code{W/W$'$},\code{WI/WI$'$}],$\beta_1''$[\code{Wn} $\mapsto$ \code{while Wn b do }\func{inv}{\code{AP}}~\code{end (pa,A)}]], meaning $\square_1 \approx \square$ as required. The abort is added by this rule (and crucially not by the normal rule \sos{W3r}) in order to stop the execution at this point. With the program state matching with our desired execution, we can take \code{AT$'$} to be \code{while Wn T do abort end (pa,A)}, which is a valid skip equivalent. Since no skip steps are available (due to abort), we can also take \code{AT} = \code{AT$'$} and complete the desired execution showing the case to hold. 
\end{case}

\begin{case}{\textbf{(Loop \sos{W3a})}} \label{sp-w3a}
Consider the opening of a while loop that performs at least one iteration (the condition must evaluate to true as Case~\ref{sp-w1a} considers loops with zero iterations). Since this cannot be a partial execution, no second copies are required. Let \code{AS} = \code{while Wn b do AP end (pa,A)} and the initial program state $\square$ = $\allstores$. Assume the following execution exists with the first transition via \sos{W3a} $$ \begin{aligned} & (\code{AS} \mid \square) = (\code{while Wn b do AP end (pa,A)} \mid \square) \\&\phantom{(\code{AS} \mid \square)} \atran{m} (\code{while Wn T do AP$'$ end (pa,m:A)} \mid \square'') \\&\phantom{(\code{AS} \mid \square)} \atran{}_s^* (\code{while Wn T do AP$''$ end (pa,m:A)} \mid \square'')  \atran{\circ}^* (\code{skip (pa,A$'$)} \mid \square') \end{aligned} $$ for some program \code{AP$'$} such that \code{AP$'$} = \func{reL}{\code{AP}}, and program states $\square''$, $\square'$ such that $\square''$ = $\square$[$\delta$[\code{(m,F,F) $\Rightarrow$ W]}, $\beta$[\code{Wn $\rightarrow$ while Wn b do AP$'$ end (pa,m:A)}]]. For \sos{W3a} to be applied as above, all premises must have been valid prior to it. This means that \code{m} = \func{next}{}, $\beta$(\code{Wn}) = \emph{und} (the while loop has not yet started), $ (\code{b pa} \mid \square) \hookrightarrow_b^* (\code{T} \mid \square) $ (the condition evaluates to true) and the body of the loop is not equal to \code{abort}. Note \code{A$'$} is a version of the persistent identifier stack for this loop statement ending with the sub-stack \code{(m:A)}. From this point on, we use \code{AS$'$} to denote \code{while Wn T do AP$''$ end (pa,m:A)}. 

We need to show that there exists an execution $$(\func{inv$^+$}{\code{AS}} \mid \square_1') \rtran{\circ}^* (\code{AT$''$} \mid \square_1'') \rtran{m} (\code{AT$'$} \mid \square_1) \rtran{}_s^* (\code{AT} \mid \square_1) $$ for some statements \code{AT$''$}, \code{AT$'$}, \code{AT}, and program states $\square_1'$, $\square_1''$, $\square_1$ such that $\square_1' \approx \square'$, $\square_1'' \approx \square''$ and $\square_1 \approx \square$. From Figure~\ref{func-def:inv+}, we have that \func{inv$^+$}{\code{AS}} = \code{while Wn b do} \func{inv}{\code{AP}} \code{end (pa,A)} (note the use of \func{inv}{} as \code{AP} is guaranteed to be a full program).   

Firstly, the part of the assumed execution $$ \begin{aligned} &(\code{while Wn T do AP$'$ end (pa,m:A)} \mid \square'') \\ &\atran{}_s^* (\code{while Wn T do AP$''$ end (pa,m:A)} \mid \square'') \end{aligned}$$ only exists if the loop body \code{AP$'$} begins with skip steps. If this is the case, the inverted version of this group of skip steps will be performed in the inverse execution after the previous inverse identifier rule. Therefore this will be provided via the induction hypothesis use shown later. 

Next, the execution $$ \begin{aligned} & (\code{AS$'$} \mid \square'') = (\code{while Wn T do AP$''$ end (pa,m:A)} \mid \square'')  \atran{\circ}^* (\code{skip (pa,A$'$)} \mid \square') \end{aligned} $$ must be shorter than our original (since the while loop is first started). From Figure~\ref{func-def:inv+}, we have that \func{inv$^+$}{\code{AS$'$}} = \code{while Wn b do} \func{inv}{\code{AP}} \code{end (pa,A)}. Then application of the induction hypothesis of the Statement Property (Lemma~\ref{lem:sp}) on this shorter execution (which is guaranteed to be uniform and to begin with an identifier step) gives $$ (\func{inv$^+$}{\code{AS$'$}} \mid \square_1') = (\code{while Wn b do}~ \func{inv}{\code{AP}}~\code{end (pa,m:A)} \mid \square_1') \rtran{\circ}^* (\code{AR} \mid \square_1'') $$ by a sequence of rule applications SR, for program \code{AR} such that \code{AR} is a skip equivalent of some program. As this shorter execution begins with an identifier step from within the while loop body, this shorter execution is guaranteed to stop via the modified rules \sos{W2aPI} or \sos{W4aPI} as shown in Case~\ref{sp-w5a}. This means that \code{AR} = \code{while Wn T do abort end (pa,m:A)} (with abort coming from the use of the modified rules). 

Now compare the programs \func{inv$^+$}{\code{AS}} and \func{inv$^+$}{\code{AS$'$}}. Since both begin with the same prefix of execution, then by the same sequence of rules SR, we obtain $$ (\func{inv$^+$}{\code{AS}} \mid \square_1') = (\code{while Wn b do } \func{inv}{\code{AP}} \code{ end (pa,A)} \mid \square_1') \rtran{\circ}^* (\code{AT$''$} \mid \square_1'') $$ for some programs \code{AT$''$}. With the program \code{AR} containing an abort statement to stop the execution at the corresponding point, \code{AT$''$} will be identical but without this abort. However, since this will be replaced with \code{skip}, a single application of the skip rule \sos{W6a} occurs to reset the while loop. Therefore \code{AT$''$} = \code{while Wn b do } \func{inv}{\code{AP}} \code{end (pa,m:A)}.

Consider the current state $\square_1''$. We have \code{m} = \func{previous}{} (with all subsequently used identifiers reversed via the induction hypothesis), the identifier stack for this loop is equal to  \code{m:A} (\code{m} as its head), $\beta_1''(\code{Wn})$ = \emph{def} (there is a mapping for this loop since it will have already started) and that $\delta''_1$(\code{W}) = \code{(m,F,F)} (the stack \code{W} has the triple \code{(m,F,F)} as its head). All premises of the rule \sos{W2r} are valid (and as such mean no other rule is applicable), giving us $$ (\code{while Wn b do } \func{inv$^+$}{\code{AP}} \code{ end (pa,m:A)} \mid \square''_1) \rtran{m} (\code{skip (pa,A)} \mid \square_1) $$ for some program state $\square_1 = \square''_1[\delta''_1[\code{W/W$'$}], \beta''_1[\code{Wn}]]$, such that $\square_1 \approx \square$ as required. The program state matches at each required point with our desired execution, meaning we can take \code{AT$'$} to be \code{skip (pa,A)} and \code{AT} to be \code{AT$'$}. Since \code{AT} has reached skip, this case is shown to be valid. 
\end{case}

\begin{case}{\textbf{(Loop \sos{W4aP})}} \label{sp-w4a}
Consider the loop condition evaluation for any iteration that is not the first or the last (each have separate rules \sos{W1a}/\sos{W3a}/\sos{W2a}). To trigger the modified rule \sos{W4aP}, an abort is inserted manually into the initial program and therefore no second copies are required. This cannot occur during normal execution and the rule \sos{W4a} will be applied instead. Let \code{AS} = \code{while Wn b do abort end (pa,A)} and the initial program state $\square$ = $\allstores$. Assume the following execution exists with the first transition via \sos{W4aP} $$ \begin{aligned} & (\code{AS} \mid \square) = (\code{while Wn b do abort end (pa,A)} \mid \square)  \\&\phantom{(\code{AS} \mid \square)} \atran{m} (\code{while Wn T do AP$'$ end (pa,m:A)} \mid \square'') \\&\phantom{(\code{AS} \mid \square)} \atran{}_s^* (\code{while Wn T do AP$''$ end (pa,m:A)} \mid \square'')  \atran{\circ}^* (\code{skip (pa,A$'$)} \mid \square') \end{aligned} $$ for some programs \code{AP$''$}, \code{AP$'$} such that \code{AP$'$} is a renamed version of \code{AP} and states $\square''$ = $\square$[$\delta$[\code{(m,T,T) $\rightarrow$ W}], $\beta$[\code{Wn $\rightarrow$ while Wn b do AP$'$ end (pa,m:A)}]], and $\square'$. Application of the rule \sos{W2aP} means the premises were valid at that point, namely that \code{m} = \func{next}{}, $\beta$(\code{Wn}) = \code{while Wn b do AP end (pa,m:A)} (there is no mapping for this while loop as it has not yet started), $ (\code{b pa} \mid \square) \hookrightarrow_b^* (\code{T} \mid \square) $ (the condition evaluates to true) and the body of the loop is equal to \code{abort}. Note \code{A$'$} is a version of the persistent identifier stack for this loop statement that ends with the sub-stack \code{(m:A)}. From this point on, we use \code{AS$'$} to denote \code{while Wn T do AP$''$ end (pa,m:A)}. 

We need to show that there exists an execution $$(\func{inv$^+$}{\code{AS}} \mid \square_1') \rtran{\circ}^* (\code{AT$''$} \mid \square_1'') \rtran{m} (\code{AT$'$} \mid \square_1) \rtran{}_s^* (\code{AT} \mid \square_1) $$ for some statements \code{AT$''$}, \code{AT$'$}, \code{AT}, and program states $\square_1'$, $\square_1''$, $\square_1$ such that $\square_1' \approx \square'$, $\square_1'' \approx \square''$ and $\square_1 \approx \square$. From Figure~\ref{func-def:inv+}, we have that \func{inv$^+$}{\code{AS}} = \code{while Wn b do} \func{inv}{\code{AP}} \code{end (pa,A$'$)}. 

We first note that the execution $$ \begin{aligned} &(\code{while Wn T do AP$'$ end (pa,m:A)} \mid \square'') \\ &\atran{}_s^* (\code{while Wn T do AP$''$ end (pa,m:A)} \mid \square'') \end{aligned} $$ will have length in all cases except where the loop body \code{AP$'$} begins with at least one skip step. In this situation, the inverted version of this group of skip steps will be applied immediately following the previous inverse identifier step. This means due to uniformity that these will be given within the induction hypothesis use on the rest of the forward execution. 

Returning to our assumed transition sequence above, the execution $$ \begin{aligned} & (\code{AS$'$} \mid \square'') = (\code{while Wn T do AP$''$ end (pa,m:A)} \mid \square'')  \atran{\circ}^* (\code{skip (pa,A$'$)} \mid \square') \end{aligned} $$ must be shorter than our original (since the iteration of the loop must have first been started). From Figure~\ref{func-def:inv+}, we have that \func{inv$^+$}{\code{AS$'$}} = \code{while Wn b do} \func{inv}{\code{AP}} \code{end (pa,A$'$)}. Then application of the induction hypothesis of the Statement Property (Lemma~\ref{lem:sp}) on this shorter execution (which is guaranteed to be uniform and to begin with an identifier step) gives $$ (\func{inv$^+$}{\code{AS$'$}} \mid \square_1') = (\code{while Wn b do } \func{inv}{\code{AP}} \code{ end (pa,A$'$)} \mid \square_1') \rtran{\circ}^* (\code{AR} \mid \square_1'') $$ by a sequence of rule applications SR, for program \code{AR} such that \code{AR} is a skip equivalent of some program. As in Case~\ref{sp-w3a}, this shorter execution (that must start with an identifier rule) is guaranteed to stop via the modified rules \sos{W2aPI} or \sos{W4aPI} as shown in Case~\ref{sp-w5a}. This means that \code{AR} = \code{while Wn T do abort end (pa,m:A)} (with abort coming from the use of the modified rules). 

Now compare the programs \func{inv$^+$}{\code{AS}} and \func{inv$^+$}{\code{AS$'$}}. As in Case~\ref{sp-w3a}, both start with identical transition sequences meaning the sequence of rules SR gives $$ (\func{inv$^+$}{\code{AS}} \mid \square_1') = (\code{while Wn b do } \func{inv}{\code{AP}} \code{ end (pa,A$'$)} \mid \square_1') \rtran{\circ}^* (\code{AT$''$} \mid \square_1'') $$ for some programs \code{AT$''$}. With the program \code{AR} containing an abort statement to stop the execution at the corresponding point, \code{AT$''$} will be identical but without this abort. However, since this will be replaced with \code{skip}, a single application of the skip rule \sos{W6a} occurs to reset the while loop. Therefore \code{AT$''$} = \code{while Wn b do } \func{inv$^+$}{\code{AP}} \code{end (pa,m:A)}.

We are now in the program state $\square_1''$. We have that \code{m} = \func{previous}{} (with all subsequently used identifiers reversed via the induction hypothesis), the identifier stack for this loop will have \code{m} as its head, $\beta_1''(\code{Wn})$ = \emph{def} (there is a mapping for this loop since it will have already started), that $\delta''_1$(\code{W}) = \code{(m,T,T)} (the stack \code{W} has the triple \code{(m,T,T)} as its head) and $\delta''_1$(\code{WI}) = \code{m,C} (the stack \code{WI} has the pair \code{(m,C)} as its head). This means all premises of the rule \sos{W4rP} to be valid (and that no other rule is applicable). Therefore we have $$ \begin{aligned} & (\code{while Wn b then } \func{inv}{\code{AP}} \code{ end (pa,m:A)} \mid \square''_1) \\ & \rtran{m} (\code{while Wn b then abort end (pa,A)} \mid \square_1) \end{aligned} $$ for some program state $\square_1 = \square''_1[\delta''_1[\code{W/W$'$}], \beta''_1[\code{Wn}]]$, such that $\square_1 \approx \square$ as required. With no skip steps available and all stores matching at the required positions, we can take \code{AT$'$} to be \code{while Wn b then abort end (pa,A)} (a valid skip equivalent) and \code{AT} to be \code{AT$'$}, showing our desired execution to be valid. Therefore this holds. 
\end{case}

\begin{case}{\textbf{(Loop \sos{W5a})}} \label{sp-w5a}
Consider an identifier rule from within a loop body. Let \code{AS} = \code{while Wn T do AP end (pa,A)} and the initial state $\square$ = $\allstores$. Assume the following execution exists with the first step via \sos{W5a} $$ \begin{aligned} & (\code{AS} \mid \square) = (\code{while Wn b do AP end (pa,A)} \mid \square) \\&\phantom{(\code{AS} \mid \square)}  \atran{m} (\code{while Wn T do AP$'$ end (pa,m:A)} \mid \square'') \\&\phantom{(\code{AS} \mid \square)} \atran{}_s^* (\code{while Wn T do AP$''$ end (pa,m:A)} \mid \square'') \\&\phantom{(\code{AS} \mid \square)}  \atran{\circ}^* (\code{skip (pa,A$'$)} \mid \square') \end{aligned} $$ for some programs \code{AP$'$} and \code{AP$''$}, and program states $\square''$ and $\square'$. Note \code{A$'$} is a modified version of the identifier stack that ends with the sub-stack \code{m:A}. From here, let \code{AS$'$} = \code{while Wn T do AP$''$ end (pa,m:A)}. The while loop must have started, meaning $\beta$(\code{Wn}) = \code{while Wn b do AQ end (pa,A)}.

We need to show that there exists an execution $$(\func{inv$^+$}{\code{AS}} \mid \square_1') \rtran{\circ}^* (\code{AT$''$} \mid \square_1'') \rtran{m} (\code{AT$'$} \mid \square_1) \rtran{}_s^* (\code{AT} \mid \square_1) $$ for statements \code{AT$''$}, \code{AT$'$}, \code{AT}, and states $\square_1'$, $\square_1''$, $\square_1$ such that $\square_1' \approx \square'$, $\square_1'' \approx \square''$ and $\square_1 \approx \square$. From Figure~\ref{func-def:inv+}, we have that \func{inv$^+$}{\code{AS}} = \code{while Wn b do} \func{inv}{\code{AP}} \code{end (pa,A)}. 

In order to correctly stop the inverse execution, we must consider two possible cases. The first case is where the first transition from our assumed execution is from within the final iteration of the loop, and the second case is where it is not. Since the function $\func{inv$^+$}{}$ cannot insert the abort as further iterations may need to be inverted first, each of our cases uses one of our modified semantic rules, namely \sos{W2aPI} or \sos{W4aPI} respectively. We rewrite our execution to show the next identifier step made by the loop itself (to next evaluate the condition), where the partial program will be saved. This will then ensure that the inverse partial rules are used, namely \sos{W3rPI} and \sos{W4rPI} to insert the abort at the correct position. Recall the syntax of while loops for these rules to apply require a copy of the partial body \code{AP} to be added.

\vspace{.3cm}
\noindent \textbf{Final iteration} Assuming the transition is from within  the final iteration, our assumed execution can be rewritten as $$ \begin{aligned} & (\code{AS} \mid \square) = (\code{while Wn b do AP AP end (pa,A)} \mid \square) \\&\phantom{(\code{AS} \mid \square)} \atran{m} (\code{while Wn T do AP$'$ AP end (pa,m:A)} \mid \square'') \\&\phantom{(\code{AS} \mid \square)} \atran{}_s^* (\code{while Wn T do AP$''$ AP end (pa,m:A)} \mid \square'') \\&\phantom{(\code{AS} \mid \square)} \atran{\circ}^* (\code{while Wn T do skip I AP end (pa,m:A)} \mid \square''') \\&\phantom{(\code{AS} \mid \square)} \atran{}_s (\code{while Wn T do AQ AP end (pa,m:A)} \mid \square''')  \atran{n} (\code{skip (pa,A$'$)} \mid \square') \end{aligned} $$ for some program state $\square'''$ and program \code{AQ} such that \code{AQ} is the full loop body (retrieved from the while environment), where the penultimate transition is via \sos{W6a} and the final step (using \code{n}) is via \sos{W2aPI}.

Repeated use of \sos{W5a} (from conclusion to premises) gives us $$ \begin{aligned} & (\code{AP} \mid \square)  \atran{m} (\code{AP$'$} \mid \square'')  \atran{}_s^* (\code{AP$''$} \mid \square'')  \atran{\circ}^* (\code{skip I} \mid \square''')  \end{aligned} $$ which must be shorter than our original execution (as the while loop itself has to finish). Applying the induction hypothesis of Part~2 of Lemma~\ref{lem:pp} means we obtain $$ \begin{aligned} & (\func{inv$^+$}{\code{AP}} \mid \square_1''')  \rtran{\circ}^* (\code{AR$''$} \mid \square_1'')  \rtran{m} (\code{AR$'$} \mid \square_1)  \rtran{}_s^* (\code{AR} \mid \square_1)  \end{aligned} $$ for some programs \code{AR$''$}, \code{AR$'$}, \code{AR} such that \code{AR} is a valid skip equivalent of some program. Repeated use of the rule \sos{W5r} gives \begin{equation} \label{eq:w5a-part-1}
\begin{split}  &(\code{while Wn T do } \func{inv$^+$}{\code{AP}} \code{ end (pa,m:A)} \mid \square_1''')  \\&\phantom{} \rtran{\circ}^* (\code{while Wn T do AR$''$ end (pa,m:A)} \mid \square_1'') \\&\phantom{} \rtran{m} (\code{while Wn T do AR$'$ end (pa,A)} \mid \square_1'') \\&\phantom{} \rtran{}_s^* (\code{while Wn T do AR end (pa,A)} \mid \square_1) \end{split}
\end{equation}
from which a single application of \sos{W6r} will then reset the while loop (recall a while loop is a special case of skip equivalent). This gives us the majority of our desired execution, with only the last step of the inverse execution left to consider. From our assumed execution (with the copy of the loop body omitted), $$ (\code{while Wn T do AQ end (pa,m:A)} \mid \square''') \atran{n} (\code{skip (pa,A$'$)} \mid \square') $$ takes a single step, with no skip steps available after. The corresponding inverse rule \sos{W3aPI} gives the execution (starting from the inverted complete program as this must be very first step) \begin{equation} \label{eq:w5a-part-2}
\begin{split} &(\func{inv$^+$}{\code{AS}} \mid \square_1') \rtran{n} (\code{while Wn b do } \func{inv$^+$}{\code{AP}} \code{ end (pa,m:A)} \mid \square_1''') \end{split}
\end{equation}
as expected. From here, composition of the executions \ref{eq:w5a-part-2} and \ref{eq:w5a-part-1} (in that order) gives our desired execution and therefore shows that this part of the case holds.

\vspace{.3cm}
\noindent \textbf{nth iteration} We now consider executions where the first step is from the loop body of any iteration of a loop other than the last. Such executions can be rewritten as $$ \begin{aligned} & (\code{AS} \mid \square) = (\code{while Wn b do AP AP end (pa,A)} \mid \square) \\&\phantom{(\code{AS} \mid \square)} \atran{m} (\code{while Wn T do AP$'$ AP end (pa,m:A)} \mid \square'') \\&\phantom{(\code{AS} \mid \square)} \atran{}_s^* (\code{while Wn T do AP$''$ AP end (pa,m:A)} \mid \square'') \\&\phantom{(\code{AS} \mid \square)} \atran{\circ}^* (\code{while Wn T do skip I AP end (pa,m:A)} \mid \square''') \\&\phantom{(\code{AS} \mid \square)} \atran{}_s (\code{while Wn T do AQ AP end (pa,m:A)} \mid \square''') \\&\phantom{(\code{AS} \mid \square)} \atran{n} (\code{while Wn T do AQ$'$ skip end (pa,n:m:A)} \mid \square'''') \\&\phantom{(\code{AS} \mid \square)} \atran{}_s^* (\code{while Wn T do AQ$''$ skip end (pa,n:m:A)} \mid \square'''') \\&\phantom{(\code{AS} \mid \square)} \atran{\circ}^* (\code{skip (pa,A$'$)} \mid \square') \end{aligned} $$ where the identifier step using \code{n} is via \sos{W4aPI}, for some program state $\square'''$ and programs \code{AQ}, \code{AQ$'$}, \code{AQ$''$} such that \code{AQ} is the full loop body (retrieved from the while environment).

Our proof of such an execution follows closely to that for the first iteration above. The major difference is there is an extra use of the induction hypothesis, specifically on the execution $$ (\code{while Wn T do AQ$''$ end (pa,n:m:A)} \mid \square'''')  \atran{\circ}^* (\code{skip (pa,A$'$)} \mid \square') $$ that does not appear in the first part of this case. All other parts of the proof follow as above. We therefore conclude that this part of the case holds, meaning we have shown this case to hold. 
\end{case}

\vspace{.3cm}
\noindent \textbf{Block statements}

\begin{case}{\textbf{Block \sos{B1a}}} \label{sp-b1a}
This concerns identifier step from within a block. This follows Case~\ref{sp-i2a}, but \sos{B2a} and \sos{B2a} in place of \sos{I2a} and \sos{I4a} respectively. 
\end{case}

Since we assume local variables, arrays and procedures are only declared within block statements, we consider each such case here. 

\begin{case}{\textbf{Variable declaration \sos{L1a}}} \label{sp-l1a}
Declaring a local variable completes in a single step. This follows Case~\ref{sp-d1a} and so it omitted. 
\end{case}

\begin{case}{\textbf{Procedure declaration \sos{L2a}}} \label{sp-l2a}
A single application of \sos{L2a} declares a procedure. A proof of this case is omitted as it follows Case~\ref{sp-d1a}.
\end{case}

\begin{case}{\textbf{Array declaration \sos{L3a}}} \label{sp-l3a}
An array is declared in a single step. This follows the previous declaration statements, with the small difference being that \code{n} memory locations are ``reserved''.
\end{case}

\begin{case}{\textbf{Variable removal \sos{H1a}}} \label{sp-h1a}
As with the declaration of a local variable, removal takes a single step. A proof is not shown as it follows closely to Case~\ref{sp-d1a}. 
\end{case}

\begin{case}{\textbf{Procedure removal \sos{H2a}}} \label{sp-h2a}
The rule \sos{H2a} takes a single step to remove a procedure.  A proof is not listed as it follows Case~\ref{sp-d1a}.
\end{case}

\begin{case}{\textbf{Array removal \sos{H3a}}} \label{sp-h3a}
An array is removed via \sos{H3a} in a single step. This follows previous removal cases, but with the values of multiple memory locations saved before their deletion.
\end{case}

\vspace{.3cm}
\noindent \textbf{Procedure call statements}

\begin{case}{\textbf{Call \sos{G1a}}} \label{sp-g1a}
Consider the opening of a procedure call, which declares a local copy of the procedure body and produces the corresponding \code{runc} construct. This follows with the opening of a conditional statement in Case~\ref{sp-i1at} (though with changes made to the procedure environment $\mu$) using rules \sos{G1a} and \sos{G3r} in place of \sos{I1aT} and \sos{I4r}, and so is omitted.  
\end{case}

\begin{case}{\textbf{Call \sos{G2a}}} \label{sp-g2a}
Consider an execution beginning with an identifier step from within a procedure body. This follows the case of such steps from within a conditional statement, namely Case~\ref{sp-i2a}. A partial execution beginning within a call statement uses the modified rule \sos{G3aPI}, and follows as in Case~\ref{sp-w5a}. 
\end{case}

\begin{case}{\textbf{Call \sos{G3a}}} \label{sp-g3a}
The closing of a procedure call will remove the local copy of the procedure body for this specific call statement, after saving any identifiers assigned to it. This case follows with the closing of a conditional statement, namely Case~\ref{sp-i4a} with rules \sos{G3a} and \sos{G1r} in place of \sos{I4a} and \sos{I1rT}. A partial execution beginning at the closure of a block uses the modified rule \sos{G3aP} and follows Case~\ref{sp-w2a}. 
\end{case}

All inductive cases are therefore valid. Since all base cases also hold, we can conclude that Lemma~\ref{lem:sp} (Statement Property) is valid.
\end{proof}

\section{Program Property} \label{appen-pp}
In this section we give a proof of Lemma~\ref{lem:pp} (Program Property). By abuse of notation, we use \code{AP} to represent either a complete or partially executed program. In cases considering full program execution (e.g. Case~\ref{pp-part1-s2a}), we ignore the abort statement introduced by \func{inv$^+$}{} as the program stops as desired.  

\lempp*
\setcounter{lemma}{10}

All executions from here are uniform and so we omit the specific notation. For example, a uniform execution $(\code{P} \mid \square) \uatran{m}{} (\code{P$'$} \mid \square')$ is written as $(\code{P} \mid \square) \atran{m} (\code{P$'$} \mid \square')$.

\begin{proof}
This proof is by mutual induction of the Program Property (this lemma) and the Statement Property (Lemma~\ref{lem:sp}), on the length of the executions $ (\code{AP} \mid \square) \atran{\circ}^* (\code{skip I} \mid \square') $ and $ (\code{AS} \mid \square) \atran{\circ}^* (\code{skip I} \mid \square') $ respectively. We now consider the two types of possible executions, namely those that start with at least one skip step (\textbf{Part~1}) and those that start with an identifier step (\textbf{Part~2}).

\subsection{Proof of Part 1} \label{appen-proof-part-1}
Consider all executions that begin with at least one skip step, of the form $(\code{AP} \mid \square) \atran{}_s^* (\code{AP$'$} \mid \square) \atran{\circ}^* (\code{skip I} \mid \square')$, where \code{AP} is a complete program of the following syntax as no partial programs can begin with a skip step.
\begin{align*}
\code{P} &::= \code{S} ~|~ \code{skip; AP} ~|~ \code{P par P} \\
\code{S} &::= \code{skip} ~|~  \code{begin Bn P AP end} 
\end{align*} %

Since no executions of length 0 exist, we consider our base cases to be any execution of length 1. There are three examples of such cases: two sequentially composed skip statements (Case~\ref{pp-part1-s2a}), a parallel statement where both component programs are skip (Case~\ref{pp-part1-p3a}) and a block statement whose body is skip (Case~\ref{pp-part1-b2a}). 

\begin{case}{\textbf{Skips \sos{S2a}}} \label{pp-part1-s2a}
Consider two sequentially composed skip statements. Let \code{AP} be the program \code{skip I; skip I$'$}, and $\square$ be the tuple of initial program state environments $\allstores$. Assume the following uniform execution via the rule~\sos{S2a} $$ (\code{skip I; skip I$'$} \mid \square) \atran{}_s (\code{skip I$'$;} \mid \square) $$ that completes in a single step and does not alter the program state.

We need to show that there exists an execution $$(\func{inv$^+$}{\code{AP}} \mid \square_1') \rtran{\circ}^* (\code{AT$'$} \mid \square_1) \rtran{}_s^* (\code{AT} \mid \square_1) $$ for some statements \code{AT$'$}, \code{AT}, and program states $\square_1'$, $\square_1$ such that $\square_1' \approx \square'$ and $\square_1 \approx \square$. By \ref{inv+:sc} and \ref{inv+:skip} in Figure~\ref{func-def:inv+}, we have that \func{inv$^+$}{\code{AP}} = \code{skip I$'$; skip~I} (ignoring the abort).   

With no further execution, we can take \code{AT$'$} = \func{inv$^+$}{\code{AP}} and $\square_1$ = $\square_1'$. A single use of \sos{S2r} can be applied to remove the first skip statement while not changing the program state, giving the execution $$ (\code{AT$'$} \mid \square_1) \rtran{}_s (\code{skip I} \mid \square_1) $$ Taking \code{AT} = \code{skip I}, we have shown our desired execution and this case to hold.
\end{case}

\begin{case}{\textbf{Empty Parallel \sos{P3a}}} \label{pp-part1-p3a}
Consider a program containing only a parallel statement where both component programs are single skip statements. Let \code{AP} be the program \code{skip I par skip I$'$}, and let $\square$ be the tuple of initial program state environments $\allstores$. Assume the uniform execution via the rule \sos{P3a} $$ (\code{skip I par skip I$'$} \mid \square) \atran{}_s (\code{skip} \mid \square) $$ with length 0 and no effect on the program state.

We need to show that there exists an execution $$(\func{inv$^+$}{\code{AP}} \mid \square_1') \rtran{\circ}^* (\code{AT$'$} \mid \square_1) \rtran{}_s^* (\code{AT} \mid \square_1) $$ for some statements \code{AT$'$}, \code{AT}, and program states $\square_1'$, $\square_1$ such that $\square_1' \approx \square'$ and $\square_1 \approx \square$. By \ref{inv+:pc} and \ref{inv+:skip} in Figure~\ref{func-def:inv+}, we have \func{inv$^+$}{\code{AP}} = \code{skip I par skip I$'$} (ignoring the abort).

With no further execution, we can take \code{AT$'$} = \func{inv$^+$}{\code{AP}} and $\square_1$ = $\square_1'$. A single application of \sos{P3r} can close the inverted parallel statement while not changing the program state, giving the execution $$ (\code{AT$'$} \mid \square_1) \rtran{}_s (\code{skip} \mid \square_1) $$ Taking \code{AT} = \code{skip}, this case has been shown to hold.
\end{case}

\begin{case}{\textbf{Empty Block \sos{B2a}}} \label{pp-part1-b2a}
Consider a program containing a single block statement, whose body is a single skip statement. Let \code{AP} be the program \code{begin Bn skip I skip I end (pa,A)} (with the original block body also being skip as expected), and $\square$ be the tuple of initial program state environments $\allstores$. Assume the uniform execution via \sos{B2a} $$ (\code{begin Bn skip I skip I end (pa,A)} \mid \square) \atran{}_s (\code{skip I$'$} \mid \square) $$ with length 1 and no effect on the program state.

We need to show that there exists an execution $$(\func{inv$^+$}{\code{AP}} \mid \square_1') \rtran{\circ}^* (\code{AT$'$} \mid \square_1) \rtran{}_s^* (\code{AT} \mid \square_1) $$ for some statements \code{AT$'$}, \code{AT}, and states $\square_1'$, $\square_1$ such that $\square_1' \approx \square'$ and $\square_1 \approx \square$. By Figure~\ref{func-def:inv+}, we have \func{inv$^+$}{\code{AP}} = \code{begin Bn skip I skip I end~(pa,A)} (ignoring the abort).

With no further execution, we can take \code{AT$'$} = \func{inv$^+$}{\code{AP}} and $\square_1$ = $\square_1'$. Application of \sos{B2r} closes the inverted block statement via  $$ (\code{AT$'$} \mid \square_1) \rtran{}_s (\code{skip} \mid \square_1) $$ Since \code{AT} = \code{skip}, this case holds.
\end{case}

Each base case has been shown to valid, meaning Part~1 of Lemma~\ref{lem:pp} (Program Property) holds for all executions of length 1. We now consider all inductive cases. Assume the Program Property holds for all programs \code{AR} and program states $\square^*$ such that the execution $ (\code{AR} \mid \square^*) \atran{}_s (\code{AR$'$} \mid \square^*) \atran{\circ}^* (\code{skip I} \mid \square^{*'}) $ has length $k$ (where $k \geq 1$). Now assume that the execution $ (\code{AP} \mid \square) \atran{}_s (\code{AP$'$} \mid \square) \atran{\circ}^* (\code{skip I} \mid \square') $ has length $l$ such that $l > k$.

Each inductive case is now shown. This consists of executions of complete programs beginning with skip steps. These are skip steps from within a block statement (Case~\ref{pp-part1-b1a}), and as a result of sequential composition (Case~\ref{pp-part1-s1a}) and parallel composition (Cases~\ref{pp-part1-p1a}--\ref{pp-part1-p2a}). We note that there are no cases for steps via a conditional branch or loop body as these cannot be the first step of a complete program. Each require at least one previous step to open the conditional or the loop. There is also no case for the skip step \sos{W6a} with the same reasoning.

\begin{case}{\textbf{(Block \sos{B1a})}} \label{pp-part1-b1a} Consider a block statement containing a body that begins with at least one skip step. Let \code{AP} be \code{begin Bn AQ AQ end} and $\square$ be the initial program state. If \code{AQ} contains only skip statements and execution of those mean the block itself will close, then this is a version of the (base) Case~\ref{pp-part1-b2a}. Therefore assume \code{AQ} contains more than just skip steps. Assume the following uniform execution $$ (\code{begin Bn AQ AQ end} \mid \square) \atran{}_s^* (\code{begin Bn AQ$'$ AQ end} \mid \square) \atran{\circ}^* (\code{skip} \mid \square') $$ exists for some program \code{AQ$'$} and state $\square'$. Crucially all available skip steps are performed, meaning the first step of the remaining execution must be via an identifier rule. We note that this execution can be rewritten to highlight the point at which the body finishes, namely as $$ \begin{aligned} &(\code{begin Bn AQ AQ end} \mid \square) \atran{}_s^* (\code{begin Bn AQ$'$ AQ end} \mid \square) \\& \atran{\circ}^* (\code{begin Bn skip I AQ end} \mid \square') \atran{}_s (\code{skip} \mid \square') \end{aligned} $$ where the final transition is via \sos{B2a}. 

We need to show that there exists an execution $$(\func{inv$^+$}{\code{AP}} \mid \square_1') \rtran{\circ}^* (\code{AT$'$} \mid \square_1) \rtran{}_s^* (\code{AT} \mid \square_1) $$ for some statements \code{AT$'$}, \code{AT}, and program states $\square_1'$, $\square_1$ such that $\square_1' \approx \square'$ and $\square_1 \approx \square$. By Figure~\ref{func-def:inv+}, we have \func{inv$^+$}{\code{AP}} = \code{begin Bn} \func{inv}{\code{AQ}} \func{inv}{\code{AQ}} \code{end}.

From our rewritten assumed execution, repeated use of the rule \sos{B1a} (from conclusion to premises), we have the uniform execution $$ (\code{AQ} \mid \square) \atran{}_s^* (\code{AQ$'$} \mid \square) \atran{\circ}^* (\code{skip I} \mid \square')  $$ which is guaranteed to be shorter (as the block must close). By the induction hypothesis of Part~1 of Lemma~\ref{lem:pp} (Program Property), we get $$ (\func{inv$^+$}{\code{AQ}} \mid \square''_1) \rtran{\circ}^* (\code{AR$'$} \mid \square_1) \rtran{}_s^* (\code{skip I} \mid \square_1) $$ for some program \code{AR$'$} and states $\square_1''$ and $\square_1$, which will reach skip as it must be a full program. We note that since \code{AQ} must be a complete program, \func{inv$^+$}{\code{AQ}} = \func{inv}{\code{AQ}}. Then using this substitution and repeated use of the corresponding inverse rule \sos{B1r}, we get $$ \begin{aligned} & (\code{begin Bn } \func{inv}{\code{AQ}} ~\func{inv}{\code{AQ}} \code{ end} \mid \square'_1) \rtran{\circ}^* (\code{begin Bn AR$'$ } \func{inv}{\code{AQ}} \code{ end} \mid \square_1) \\& \rtran{}_s^* (\code{begin Bn skip I } \func{inv}{\code{AQ}} \code{ end} \mid \square_1) \end{aligned} $$

Finally, since \code{AP} must be a complete program, the inverse execution is required to reach skip. A single application of the skip step \sos{B2a} will close this block and allow the whole execution to reach skip. Therefore we have the execution $$ \begin{aligned} & (\code{begin Bn } \func{inv}{\code{AQ}} ~\func{inv}{\code{AQ}} \code{ end} \mid \square'_1) \rtran{\circ}^* (\code{begin Bn AR$'$ } \func{inv}{\code{AQ}} \code{ end} \mid \square_1) \\& \rtran{}_s^* (\code{begin Bn skip I } \func{inv}{\code{AQ}} \code{ end} \mid \square_1) \rtran{}_s (\code{skip} \mid \square_1) \end{aligned} $$ that matches our desired execution and so shows this case to be valid.
\end{case}


\begin{case}{\textbf{(Sequential Composition \sos{S1a})}} \label{pp-part1-s1a}
Consider sequential composition. Let \texttt{AS} be a statement, \texttt{AP} be a program, and $\square$ be the initial program state environments $\allstores$. Assume a program of the form \code{AS;AP}, and the execution via the initial skip rule \sos{S1a} $$ (\code{AS;AP} \mid \square) \atran{}_s^* (\code{AS$'$;AP} \mid \square) \atran{\circ}^* (\code{skip I} \mid \square') $$ for some statement \code{AS$'$}, a pair \code{I} and program state $\square'$. As in previous cases, note that all available skip steps are performed and the first step of the remaining execution must be via a identifier rule. 

We need to show that there exists an execution $$(\func{inv$^+$}{\code{AS;AP}} \mid \square_1') \rtran{\circ}^* (\code{AT$'$} \mid \square_1) \rtran{}_s^* (\code{AT} \mid \square_1) $$ for some programs \code{AT$'$}, \code{AT} such that \code{AT} is skip or a skip equivalent, and program states $\square_1'$, $\square_1$ such that $\square_1' \approx \square'$ and $\square_1 \approx \square$. By \ref{inv+:sc} in Figure~\ref{func-def:inv+}, we have that \func{inv$^+$}{\code{AS;AP}} = \func{inv}{\code{AP}};\func{inv$^+$}{\code{AS}}.   

Our assumed execution can be rewritten to highlight the point at which \code{AS} concludes, namely as $$ \begin{aligned} &(\code{AS;AP} \mid \square) \atran{}_s^* (\code{AS$'$;AP} \mid \square) \atran{\circ}^* (\code{skip I$'$;AP} \mid \square'') \\ &  \atran{}_s (\code{AP} \mid \square'') \atran{\circ}^* (\code{skip I} \mid \square') \end{aligned} $$ for program state $\square''$. From this rewritten version of the execution, the application of the rule \sos{S1a} repeatedly (from conclusion to premises), we can get the execution $$ (\code{AS} \mid \square) \atran{}_s^* (\code{AS$'$} \mid \square) \atran{\circ}^* (\code{skip I$'$} \mid \square'') $$ which concerns only the statement \code{AS}. Therefore this is guaranteed to be shorter than our original (as \code{AP} must require at least one step). With a single statement equivalent to a program containing only that statement, the induction hypothesis of Part~1 of the Program Property (Lemma~\ref{lem:pp}) on this shorter execution gives \begin{equation} \label{eq:pp-p1-sc-1} (\func{inv$^+$}{\code{AS}} \mid \square_1'') \rtran{\circ}^* (\code{AR$'$} \mid \square_1) \rtran{}_s^* (\code{AR} \mid \square_1) \end{equation} for programs \code{AR$'$}, \code{AR} such that \code{AR} is a valid skip equivalent of some program, and program states $\square_1''$, $\square_1$ such that $\square_1'' \approx \square''$ and $\square_1 \approx \square$.

Returning to our rewritten assumed execution above, the uniform execution $(\code{AP} \mid \square'') \atran{\circ}^* (\code{skip I} \mid \square')$ is guaranteed to be both shorter than our original and to be a complete program (since it was sequentially composed with \code{AS}). In most cases this will begin with an identifier step, however it could also begin with at least one skip step. Application of the induction hypothesis of Lemma~\ref{lem:pp} (Part~1 if the execution begins with a skip step, or Part~2 if the execution begins with an identifier step) on this gives $$ (\func{inv$^+$}{\code{AP}} \mid \square_1')\rtran{\circ}^* (\code{skip I$_1$} \mid \square_1'')$$  for program state $\square_1''$ such that $\square_1'' \approx \square''$. Since \code{AP} must be a complete statement, we note that \func{inv$^+$}{\code{AP}} = \func{inv}{\code{AP}}. With this substitution, use of the rule \sos{S1r} repeatedly (from premises to conclusion) gives us \begin{equation} \label{eq:pp-p1-sc-2} (\func{inv}{\code{AP}};\func{inv$^+$}{\code{AS}} \mid \square_1')\rtran{\circ}^* (\code{skip I$_1$};\func{inv$^+$}{\code{AS}} \mid \square_1'') \end{equation} A single application of the rule \sos{S2r} will remove the skip, giving the execution \begin{equation} \label{eq:pp-p1-sc-3} (\code{skip I$_1$};\func{inv$^+$}{\code{AS}} \mid \square_1'') \rtran{}_s (\func{inv$^+$}{\code{AS}} \mid \square_1'') \end{equation} At this point, we have three equations that can easily be composed. We therefore combine \ref{eq:pp-p1-sc-2}, \ref{eq:pp-p1-sc-3} and \ref{eq:pp-p1-sc-1} in that order, giving the execution $$ \begin{aligned} & (\func{inv}{\code{AP}};\func{inv$^+$}{\code{AS}} \mid \square_1')\rtran{\circ}^* (\code{skip I$_1$};\func{inv$^+$}{\code{AS}} \mid \square_1'') \rtran{}_s (\func{inv$^+$}{\code{AS}} \mid \square_1'') \\& \rtran{\circ}^* (\code{AR$'$} \mid \square_1) \rtran{}_s^* (\code{AR} \mid \square_1) \end{aligned} $$ Taking \code{AT$''$} = \code{AR$''$}, \code{AT$'$} = \code{AR$'$} and \code{AT} = \code{AR} (\code{AR} is either skip or a skip equivalent by the induction hypothesis), we have our desired execution, meaning the case holds. 
\end{case}

\begin{case}{\textbf{(Parallel Composition \sos{P1a})}} \label{pp-part1-p1a}
Consider parallel composition. Let \texttt{AS} be a statement (of the reduced syntax for programs beginning with skip steps), \texttt{AP} and \texttt{AQ} be programs, and $\square$ be the initial program state. Assume a program of the form \code{(AS;AP) par AQ}, where \code{AQ} begins with an identifier step (this could be extended to allow skip steps from \code{AQ} which would have to be ordered in the fixed manner of after all skip steps from \code{AS}). The matching case of a program of the form \code{AQ par (AS;AP)} is considered in Case~\ref{pp-part1-p2a}. Further, any number of nested parallel statements could have been used, all of which can be reduced to a program of the form above via the use of \sos{P1a}/\sos{P2a}. Now assume an execution with all of the initial skip steps via \sos{P1a} $$ (\code{((AS;AP) par AQ)} \mid \square) \atran{}_s^* (\code{((AS$'$;AP) par AQ)} \mid \square)  \atran{\circ}^* (\code{skip I} \mid \square') $$ for some statement \code{AS$'$} and program state $\square'$. Consider the initial sequence of skip steps, where all available skip steps are performed. This means that the first step of the renaming execution must be via an identifier step. Each skip step transition has an inference tree associated with it that proves the transition to be valid. Each of these inference trees has a leaf, which in each case is an instance of one of the following three rules.  

\begin{enumerate}
\item \textbf{Rule \sos{B2a}}: the closing of a block meaning the block body is empty. \code{AS} contains \code{begin Bn skip I$_1$ end} (potentially nested within constructs), and \code{AS$'$} contains \code{skip} (with same nesting).
\item \textbf{Rule \sos{S2a}}: the removal of a skip statement. \code{AS} contains \code{skip I$_1$;AR} (with nesting), and \code{AS$'$} contains \code{AR} (with same nesting).
\item \textbf{Rule \sos{P3a}}: an empty parallel statement. \code{AS} contains \code{skip I$_1$ par skip I$_2$; AR} (potentially nested), and \code{AS$'$} = \code{skip; AR} (with the same nesting).
\end{enumerate}

We need to show that there exists an execution $$(\func{inv$^+$}{\code{(AS;AP) par AQ}} \mid \square_1') \rtran{\circ}^* (\code{AT$'$} \mid \square_1) \rtran{}_s^* (\code{AT} \mid \square_1) $$ for some programs \code{AT$'$}, \code{AT} such that \code{AT} is skip or a skip equivalent, and program states $\square_1'$, $\square_1$ such that $\square_1' \approx \square'$ and $\square_1 \approx \square$. By \ref{inv+:pc} and \ref{inv+:sc} in Figure~\ref{func-def:inv+}, we have that \func{inv$^+$}{\code{(AS;AP) par AQ}} = (\func{inv}{\code{AP}};\func{inv$^+$}{\code{AS}} \code{par} \func{inv}{\code{AQ}}.  

From our assumed transition sequence, the execution $ (\code{((AS$'$;AP) par AQ)} \mid \square)  \atran{\circ}^* (\code{skip I} \mid \square') $ is guaranteed to both be shorter than our original and to begin with an identifier step (since the overall execution is uniform meaning all skip steps are performed in the previous transition). Application of the induction hypothesis of Part~2 of Lemma~\ref{lem:pp} on this shorter uniform execution gives us $$ (\func{inv}{\code{AP}};\func{inv$^+$}{\code{AS$'$}} \code{ par } \func{inv}{\code{AQ}} \mid \square_1') \rtran{\circ}^* (\code{AT$''$} \mid \square_1) $$ by a sequence of rule applications SR, for some program \code{AT$''$} and program states $\square_1'$, $\square_1$ such that $\square_1' \approx \square'$ and $\square_1 \approx \square$.

With \code{AP} guaranteed to be a complete program, inversion will reach \code{skip I$_4$} for some \code{I$_4$}. \code{AQ} also must be a complete program meaning inversion will reach \code{skip I$_5$}. \code{AS$'$} will be a partial program (since the beginning skip steps have been performed), meaning the inversion of this will reach \code{AS$_1$} (either skip or a skip equivalent). Therefore \code{AT$''$} = \code{AS$_1$ par skip I$_5$}. 

Now compare the programs \func{inv}{\code{AP}};\func{inv$^+$}{\code{AS}} \code{par} \func{inv}{\code{AQ}} and \func{inv}{\code{AP}};\func{inv$^+$}{\code{AS$'$}} \code{par} \func{inv}{\code{AQ}}. Since both executions begin with the same prefix, by the same sequence of rule applications SR we obtain $$ (\func{inv}{\code{AP}};\func{inv$^+$}{\code{AS}} \code{ par } \func{inv}{\code{AQ}} \mid \square_1') \rtran{\circ}^* (\code{AT$'$} \mid \square_1) $$ for a program \code{AT$'$}. With the same reasoning as above, we have that \code{AT$'$} = \code{AS$_2$ par skip I$_5$}. The format of \code{AS$_2$} is identical to \code{AS$_1$} but with further execution of skip steps only available. All of the constructs that were closed (or removed) during the initial skip steps of the forward execution will appear in inverted form within the inverse program. In order to complete the inverse execution, all of these constructs must also be closed (or removed) here. The skip steps that will be required will depend on the types of construct that began the forward execution. We therefore return to our three cases of skip step and consider each in turn.

\begin{enumerate}
\item \textbf{Rule \sos{B2r}}: The initially empty block will be here in inverted form. Therefore the skip rule \sos{B2r} will close this block to skip.
\item \textbf{Rule \sos{S2r}}: A hard-coded skip statement will appear in the inverted program, with the rule \sos{S2r} used to remove it.
\item \textbf{Rule \sos{P3r}}: A hard-coded empty parallel statement appears identically in the inverted program. The inverse skip step \sos{P3r} is used to close it. 
\end{enumerate}
In all cases, we have that there exists the execution $$ (\code{AT$'$} \mid \square_1) \rtran{}_s^* (\code{AT} \mid \square_1) $$ such that \code{AT} = \code{skip}. 

At this point, recall discussion of non-matching skip steps from a forward and reverse execution. Since all skip steps cannot alter the program state, any mismatch here is not a problem. Therefore we have shown the desired execution and this case to hold.
\end{case}

\begin{case}{\textbf{(Parallel Composition \sos{P2a})}} \label{pp-part1-p2a}
This case follows closely to Case~\ref{pp-part1-p1a}, using the rule \sos{P2a} in place of \sos{P1a} to represent the situation where the execution begins with skip rules from the right hand side.
\end{case}

With all inductive cases shown to be valid, we have therefore completed the proof of Part~1 of Lemma~\ref{lem:pp}.

\subsection{Proof of Part 2}
We now consider all executions beginning with an identifier step, of the form $(\code{AP} \mid \square) \atran{m} (\code{AP$'$} \mid \square) \atran{}_s^* (\code{AP$''$} \mid \square) \atran{\circ}^* (\code{skip I} \mid \square')$. With no executions of length 0, we consider our base cases to be any execution of length 1. The two base cases include a program containing a single assignment (with equivalent cases for each type of  assignment) and a program containing a single loop that performs zero iterations. These proofs are omitted as they follow correspondingly to Case~\ref{sp-d1a} and Case~\ref{sp-w1a} of Lemma~\ref{lem:sp} from page~\pageref{sp-d1a} (with \code{AS} replaced by \code{AP}).

All base cases are therefore valid, meaning we now consider all inductive cases. Assume the Program Property holds for all programs \code{AR} and program states $\square^*$ such that the execution $(\code{AR} \mid \square^*) \atran{m} (\code{AR$'$} \mid \square^*) \atran{}_s^* (\code{AR$''$} \mid \square^*) \atran{\circ}^* (\code{skip I} \mid \square^{*'})$ has length $k$ (where $k \geq 1$). Now assume that the execution $ (\code{AR} \mid \square) \atran{m} (\code{AR$'$} \mid \square) \atran{}_s^* (\code{AR$''$} \mid \square) \atran{\circ}^* (\code{skip I} \mid \square') $ has length $l$ such that $l > k$.

Each inductive case is now considered. We note that the execution of a program containing a single statement is equivalent to an execution of that single statement. This means all such executions have been considered within the proof of Lemma~\ref{lem:sp} and so are omitted here. We therefore consider the remaining cases for sequential composition (Case~\ref{pp-part2-s1a}) and for parallel composition (Case~\ref{pp-part2-p1a}--\ref{pp-part2-p2a}).

\begin{case}{\textbf{(Sequential Composition \sos{S1a})}} \label{pp-part2-s1a}
Consider sequential composition. Let \texttt{AS} be a statement, \texttt{AP} be a program, and $\square$ be the tuple of initial program state environments $\allstores$. Assume a program of the form \code{AS;AP} (sequential composition), and the uniform execution with the first transition via the rule \sos{S1a} $$ (\code{AS;AP} \mid \square) \atran{m} (\code{AS$'$;AP} \mid \square'') \atran{}_s^* (\code{AS$''$;AP} \mid \square'') \atran{\circ}^* (\code{skip I} \mid \square') $$ for some statements \code{AS$'$} and \code{AS$''$}, and program states $\square''$ and $\square'$. 

We need to show that there exists an execution $$(\func{inv$^+$}{\code{AS;AP}} \mid \square_1') \rtran{\circ}^* (\code{AT$''$} \mid \square_1'') \rtran{m} (\code{AT$'$} \mid \square_1) \rtran{}_s^* (\code{AT} \mid \square_1) $$ for some statements \code{AT$''$}, \code{AT$'$}, \code{AT}, and program states $\square_1'$, $\square_1''$, $\square_1$ such that $\square_1' \approx \square'$, $\square_1'' \approx \square''$ and $\square_1 \approx \square$. By \ref{inv+:sc} in Figure~\ref{func-def:inv+}, we have that \func{inv$^+$}{\code{AS;AP}} = \func{inv}{\code{AP}};\func{inv$^+$}{\code{AS}}.   

Our assumed execution can be rewritten to show the point at which \code{AS} has executed completely, namely as $$ \begin{aligned} & (\code{AS;AP} \mid \square) \atran{m} (\code{AS$'$;AP} \mid \square'') \atran{}_s^* (\code{AS$''$;AP} \mid \square'') \atran{\circ}^* (\code{skip I$'$;AP} \mid \square''') \\ &\phantom{} \atran{}_s (\code{AP} \mid \square''') \atran{\circ}^* (\code{skip I} \mid \square') \end{aligned} $$ for some program state $\square'''$. From this rewritten version of the execution, the application of the rule \sos{S1a} repeatedly (from conclusion to premises), we can get the execution $$ (\code{AS} \mid \square) \atran{m} (\code{AS$'$} \mid \square'') \atran{}_s^* (\code{AS$''$} \mid \square'') \atran{\circ}^* (\code{skip I$'$} \mid \square''') $$ Since this concerns only \code{AS}, it is guaranteed to be shorter than the original execution as \code{AP} takes at least one step of execution. This allows application of the induction hypothesis of the Statement Property (Lemma~\ref{lem:sp}) on this shorter execution, giving~us  \begin{equation} \label{eq:pp-p2-sc-1} (\func{inv$^+$}{\code{AS}} \mid \square_1''') \rtran{\circ}^* (\code{AR$''$} \mid \square_1'') \rtran{m} (\code{AR$'$} \mid \square_1) \rtran{}^*_s (\code{AR} \mid \square_1) \end{equation} for some programs \code{AR$''$}, \code{AR$'$} and \code{AR} such that \code{AR} is either \code{skip} or a skip equivalent of some program, and some program states $\square''_1$ such that $\square_1'' \approx \square''$, and $\square_1$ such that $\square_1 \approx \square$. 

Returning to our rewritten assumed execution above, the uniform execution $(\code{AP} \mid \square''') \atran{\circ}^* (\code{skip I} \mid \square')$ is shorter than our original. This program will begin with an identifier step in the majority of cases, with a chance for it to begin with skip steps (hard coded skips, empty blocks etc.). This means that application of the induction hypothesis of Lemma~\ref{lem:pp} (Part~1 if the execution begins with a skip step, or Part~2 if the execution begins with an identifier step) on this shorter execution gives us $$ (\func{inv$^+$}{\code{AP}} \mid \square_1')\rtran{\circ}^* (\code{skip I$_1$} \mid \square_1''')$$  for some program state $\square_1'''$ such that $\square_1''' \approx \square'''$. Since \code{AP} must be a complete program (as it is sequentially composed with \code{AS}), we note that \func{inv$^+$}{\code{AP}} = \func{inv}{\code{AP}}. Using this substitution, application of the rule \sos{S1r} repeatedly (from premises to conclusion) gives \begin{equation} \label{eq:pp-p2-sc-2} (\func{inv}{\code{AP}};\func{inv$^+$}{\code{AS}} \mid \square_1') \rtran{\circ}^* (\code{skip I$_1$};\func{inv$^+$}{\code{AS}} \mid \square_1''') \end{equation} A single applicable rule is available at this point, namely \sos{S2r} to remove the skip. Since this is a skip rule and so does not affect the program state, we have the execution \begin{equation} \label{eq:pp-p2-sc-3} (\code{skip I$_1$};\func{inv$^+$}{\code{AS}} \mid \square_1''')\rtran{}_s (\func{inv$^+$}{\code{AS}} \mid \square_1''') \end{equation} At this point, the three equations \ref{eq:pp-p2-sc-2}, \ref{eq:pp-p2-sc-3} and \ref{eq:pp-p2-sc-1} can be composed in that order. This gives $$ \begin{aligned} &(\func{inv}{\code{AP}};\func{inv$^+$}{\code{AS}} \mid \square_1') \rtran{\circ}^* (\code{skip I$_1$};\func{inv$^+$}{\code{AS}} \mid \square_1''') \rtran{}_s (\func{inv$^+$}{\code{AS}} \mid \square_1''') \\& \rtran{\circ}^* (\code{AR$''$} \mid \square_1'') \rtran{m} (\code{AR$'$} \mid \square_1) \rtran{}^*_s (\code{AR} \mid \square_1)  \end{aligned} $$ Taking \code{AT$''$} = \code{AR$''$}, \code{AT$'$} = \code{AR$'$} and \code{AT} = \code{AR} (such that \code{AR} is either skip or a skip equivalent by induction hypothesis), we have shown this case to hold. 
\end{case}

\begin{case}{\textbf{(Parallel Composition \sos{P1a})}} \label{pp-part2-p1a}
Consider parallel composition. Let \texttt{AS} be a statement, \texttt{AP} and \texttt{AQ} be programs, and $\square$ be the tuple of initial program state environments $\allstores$. Assume a program of the form \code{(AS;AP) par AQ}. We note that the corresponding form \code{AQ par (AS;AP)} follows accordingly (see Case~\ref{pp-part2-p2a}) and that any number of nested parallel statements could be used, with matching uses of \sos{P1a}/\sos{P2a} reducing this into a program of the above form. Assume the following uniform execution with the first transition via \sos{P1a} $$ \begin{aligned} & (\code{((AS;AP) par AQ)} \mid \square) \atran{m} (\code{((AS$''$;AP) par AQ)} \mid \square'') \\ & \phantom{} \atran{}_s^* (\code{((AS$'$;AP) par AQ)} \mid \square'')  \atran{\circ}^* (\code{skip I} \mid \square') \end{aligned} $$ for statements \code{AS$''$} and \code{AS$'$}, and program states $\square''$ and $\square'$.

We need to show that there exists an execution $$(\func{inv$^+$}{\code{AS;AP par AQ}} \mid \square_1') \rtran{\circ}^* (\code{AT$''$} \mid \square_1'') \rtran{m} (\code{AT$'$} \mid \square_1) \rtran{}_s^* (\code{AT} \mid \square_1) $$ for some statements \code{AT$''$}, \code{AT$'$}, \code{AT}, and program states $\square_1'$, $\square_1''$, $\square_1$ such that $\square_1' \approx \square'$, $\square_1'' \approx \square''$ and $\square_1 \approx \square$. From \ref{inv+:sc} and \ref{inv+:pc} in Figure~\ref{func-def:inv+}, we have that \func{inv$^+$}{\code{AS;AP par AQ}} = \func{inv}{\code{AP}};\func{inv$^+$}{\code{AS}} \code{par} \func{inv$^+$}{\code{AQ}}.   

From our assumed transition sequence, the execution $ (\code{((AS$'$;AP) par AQ)} \mid \square'')  \atran{\circ}^* (\code{skip I} \mid \square') $ must be shorter than our original and must begin with an identifier step (since the overall execution is uniform meaning all available skip steps must be performed in the previous transition). This means application of the induction hypothesis of Part~2 of Lemma~\ref{lem:pp} on this shorter execution gives us $$ (\func{inv}{\code{AP}};\func{inv$^+$}{\code{AS$'$}} \code{ par } \func{inv$^+$}{\code{AQ}} \mid \square_1') \rtran{\circ}^* (\code{AT$'''$} \mid \square_1'')  $$ by a sequence of rules SR, for some program \code{AT$'''$} and program states $\square_1'$, $\square_1$ such that $\square_1' \approx \square'$ and $\square_1 \approx \square$. From the definition of \sos{P1a}, the first transition is via an underlying identifier rule \sos{R}. Each case of \sos{R} is considered with the proof of the Statement Property (Lemma~\ref{lem:sp}), with the format of \code{AS}, \code{AS$'$} and \code{AS$''$} (which we must consider to be within the necessary context) shown along with $\square''$. 

Since \code{AT$'''$} is returned via the induction hypothesis, we know its format. With \code{AP} guaranteed to be a complete program, inversion will reach \code{skip I$_4$} for some \code{I$_4$}. \code{AQ} is either a complete or a partial program meaning inversion will either reach \code{skip I$_5$} or a skip equivalent program \code{AQ$_1$}. Therefore \code{AT$'''$} = \func{inv$^+$}{\code{AS$'$}} \code{ par skip I$_5$} or \code{AT$'''$} = \func{inv$^+$}{\code{AS$'$}} \code{ par AQ$_1$} where \code{AQ$_1$} is a skip equivalent.

Now compare the programs \func{inv}{\code{AP}};\func{inv$^+$}{\code{AS}} \code{par} \func{inv}{\code{AQ}} and \func{inv}{\code{AP}};\func{inv$^+$}{\code{AS$'$}} \code{par} \func{inv}{\code{AQ}}. Since both executions begin with the same prefix, by the same sequence of rule applications SR we obtain $$ (\func{inv}{\code{AP}};\func{inv$^+$}{\code{AS$'$}} \code{ par } \func{inv}{\code{AQ}} \mid \square_1') \rtran{\circ}^* (\code{AT$''$} \mid \square_1) $$ for a program \code{AT$''$}. Drawing conclusions as above, we have that \code{AT$''$} = \code{AS$_2$ par skip I$_5$} or \code{AT$''$} = \code{AS$_2$ par AQ$_1$} for a statement \code{AS$_2$}, such that \code{AS$_2$} is equal to \code{AS$_1$} but with further execution allowed. The format of \code{AS$_2$} depends on the underlying rule \sos{R}, with each case having been considered in the proof of the Statement Property (Lemma~\ref{lem:sp}). From this, each case shows the format of \code{AS$_2$} (from which we know the execution must continue), and details that the only available next step of execution will be via the corresponding inverse identifier rule. From each case, we can see that the execution $$ (\code{AT$''$} \mid \square_1'') \rtran{m} (\code{AT$'$} \mid \square_1) \rtran{}_s^* (\code{AT} \mid \square_1) $$ exists for some programs \code{AT$'$} and \code{AT} (without the context of the parallel statement), and program state $\square_1$ such that $\square_1 \approx \square$, as required. Therefore this case holds. 
\end{case}

\begin{case}{\textbf{(Parallel Composition \sos{P2a})}} \label{pp-part2-p2a}
Consider an identifier step from the right hand side of a parallel statement. This follows correspondingly to Case~\ref{pp-part2-p1a} and so is omitted.
\end{case}

We have now shown all inductive cases to be valid for Part~2 of Lemma~\ref{lem:pp}. Together with the proof of Part~1, we have therefore proved Lemma~\ref{lem:pp} to be valid, as required.
\end{proof}

\end{document}